\newtheorem{theorem}{Theorem}[section]
\newtheorem{corollary}{Corollary}[section]
\newtheorem{lemma}{Lemma}[section]
\newtheorem{definition}{Definition}[section]
\newtheorem{example}{Example}[section]
\newtheorem{remark}{Remark}
\numberwithin{equation}{section}
\begin{document}

\title
  {\bf Twisted hierarchies associated with the generalized sine-Gordon equation\\
  }
\author   { Hui Ma \& Derchyi Wu \\
{}\\
Department of Mathematical Sciences, Tsinghua University\\
Beijing 100084, P.R. China\\ 
hma@math.tsinghua.edu.cn\\
{}\\
Institute of Mathematics, Academia Sinica\\
Taipei 10617, Taiwan\\ 
mawudc@math.sinica.edu.tw
}

\maketitle

\section*{{Abstract}}

Twisted $U$- and twisted $U/K$-hierarchies are soliton hierarchies introduced by Terng to find higher flows of the generalized sine-Gordon equation. Twisted $\frac {O(J,J)}{O(J)\times O(J)}$-hierarchies are among the most important classes of twisted hierarchies. In this paper, interesting first and higher flows of twisted $\frac {O(J,J)}{O(J)\times O(J)}$-hierarchies are explicitly derived, the associated submanifold geometry is investigated and a unified treatment of the inverse scattering theory is provided.

\section{Introduction}
The interaction between differential geometry and partial differential equations has been studied since the 19-th century and it can be found in the works of Lie, Darboux, Goursat, Bianchi, B$\ddot {\rm a}$cklund, and E. Cartan. One of the best known examples is the correspondence between surfaces of constant negative Gaussian curvature and the solutions of the sine-Gordon equation. The generalized sine-Gordon equation (GSGE) \cite{TT80},
\cite{Ter80} 
\begin{eqnarray}
	&\alpha\in O(n),&\label{E:GSGE1}\\
	&\partial_{x_j}\alpha_{ki}=\alpha_{kj}f_{ij},\ f_{ii}=0, & i\ne j, \label{E:GSGE2}\\
	&\partial_{x_j}f_{ij}+\partial_{x_i}f_{ji}+\sum_{k\ne i,\,j}f_{ik}f_{jk}=\alpha_{1i}\alpha_{1j},& i\ne j\label{E:GSGE3}\\
	&\partial_{x_k}f_{ij}=f_{ik}f_{kj},& i,j,k\ \textsl{ distinct}\label{E:GSGE4}
\end{eqnarray}
where $1\le i,\,j,\,k\le n$, defined by the Gauss-Codazzi equation for $n$-dimensional submanifolds in $\mathbb R^{2n-1}$ with constant sectional curvature $-1$, is a natural multidimensional differential geometic generalization of the sine-Gordon equation. 

The GSGE has B$\ddot {\rm a}$cklund transformations, permutability formula \cite{Ter80}, \cite{TU00}, a Lax pair, and an inverse scattering theory \cite{ABT86}. Recently, via a Lie algebra splitting approach, Terng introduced twisted $U$- and twisted $U/K$-hierarchies (twisted hierarchies on symmetric spaces)  \cite{T07} and showed that the GSGE can be interpreted as the $1$-dimensional system of the twisted $\frac{O(n,n)}{O(n)\times O(n)}$-hierarchy. As a result, she obtained all higher commuting flows of the GSGE.

There are rich intertwined analytic, algebraic and geometric structures of these twisted hierarchies. Twisted $\frac {O(J,J)}{O(J)\times O(J)}$-hierarchies are among the most important classes of twisted hierarchies which contains the twisted $\frac {O(n,n)}{O(n)\times O(n)}$-hierarchy ($J=I$) as a special case and possesses prototypical analytic, algebraic and geometric structures of twisted hierarchies. For instance, 
the associated Lax pairs of the twisted $\frac {O(J,J)}{O(J)\times O(J)}$-hierarchy are Laurent polynomials in the spectral variable $\lambda$. 
Therefore, the eigenfunctions depend on $x$ as $|\lambda|\to\infty$ and
we need to renormalize the eigenfunctions in solving the inverse scattering problem. The renormalization process breaks the symmetries. Thus a proper gauge is needed to reconstruct the symmetries and the potentials. 
Inspired by the result of \cite{ABT86}, we reconstruct the symmetry via solving an exterior product partial differential system which is derived from the associated $1$-dimensional system.

On the other hand, in the study of the submanifold geometry associated with twisted hierarchies, besides the case $J=I$ mentioned above  \cite{Ter80}, \cite{TT80}, \cite{ABT86}, we discover that the $1$-dimensional systems of the twisted $\frac {O(J,J)}{O(J)\times O(J)}$-hierarchy, $J=I_{1,n-1}$, describe the geometry of $n$-dimensional time-like submanifolds of constant positive sectional curvature in  $\mathbb R^{2n-1}_1$ . Therefore, rather than the space-like submanifolds associated with the generating equation introduced by Tenenblat \cite{CT94}, \cite{Ten98},  the time-like submanifolds interpret the geometry of soliton equations and can be tackled via an inverse scattering method. In this respect, twisted $\frac {O(J,J)}{O(J)\times O(J)}$-flows are more intimate with the GSGE or the sine-Gordon equation.

Finally, many interesting soliton hierarchies can be constructed from splittings of loop algebras fixed by involutions or automorphisms \cite{A79}, \cite{DS84}, \cite{W91}, \cite{TU00}. This observation motivates the classification theory of integrable systems via different representations and possible reductions \cite{M81}, \cite{FK83}, \cite{ML04}, \cite{ML05}, \cite{LS10}. The set up of the correspondence between the  reduction groups and the inverse scattering theory then becomes an important issue for mathematicians \cite{GGK01}, \cite{GG10}, \cite{GMV10}.  Since the twisted $\frac {O(J,J)}{O(J)\times O(J)}$-hierarchies are integrable equations induced by two involutions 
\[
\sigma_0(\xi(-\lambda))=\xi(\lambda),\quad \sigma_1(\xi(1/\lambda))=\xi(\lambda),
\]
on the loop group in the symmetric space $\frac {O(J,J)}{O(J)\times O(J)}$. Our work provides a complete inverse scattering theory of integrable  systems with the reduction group given by the dihedral group $D_2$. 

The paper is organized as follows: in Section \ref{S:TH}, we define the twisted $\frac {O(J,J)}{O(J)\times O(J)}$-hierarchies via splittings of loop algebras and compute explicit examples which include a new $4$-th order partial differential system (\ref{E:4-th-1}), (\ref{E:4-th-2}).  Section \ref{S:example} is devoted to the investigation of the associated submanifold geometry. In particular, we prove that the $1$-dimensional twisted $\frac {O(J,J)}{O(J)\times O(J)}$-system (twisted by $ \sigma_1$) with $J=I_{1,n-1}$  is the Gauss-Codazzi equation for an $n$-dimensional time-like  submanifold of constant sectional curvature $1$ in $\mathbb R^{2n-1}_1$ and derive a B$\ddot{\rm a}$cklund transformation theory for the $1$-dimensional twisted $\frac {O(J,J)}{O(J)\times O(J)}$-system. 
In Section \ref{S:invTH}, we solve  the direct problem by constructing special eigenfunctions which correspond to global twisted flows with nice decaying properties and regularities and  extracting the scattering data. Section \ref{S:inv-LGF} and \ref{S:inv-com} are devoted to the reconstruction of the flows from  scattering data. In particular, by studying the Riemann-Hilbert problem of the twisted $\frac {O(J,J)}{O(J)\times O(J)}$-flows, eigenfunctions with arbitrary poles and multiplicity are constructed for $J=I$, and eigenfunctions with small purely continuous scattering data are derived  for $J\ne I$.  The Cauchy problems of twisted flows and the $1$-dimensional twisted $\frac {O(J,J)}{O(J)\times O(J)}$-system  are solved in Section \ref{S:cauchy}.

\section{The twisted $\frac {O(J,J)}{O(J)\times O(J)}$-hierarchy}\label{S:TH}
To define the twisted $\frac {O(J,J)}{O(J)\times O(J)}$-hierarchy via a loop group approach, for an integer $q$, $0\le q \le n$, let us denote 
\begin{equation}
J=I_{q,n-q}=\mathrm{diag}(\overbrace{-1,\cdots, -1}^{q \ \mathrm{ times}},\overbrace{1,\dots,1}^{n-q \ \mathrm{ times}}),\quad \tilde J=\left(\begin{array}{cr} J & 0\\ 0 & -J\end{array}\right),
\label{E:J}
\end{equation}and  
\begin{eqnarray*}
O(J,J) &=&\left\{x\in GL_{2n}(\mathbb R)|\   x^t \tilde Jx=\tilde J\right\},\\
{ o }(J,J) &=& \{\xi\in gl_{2n}(\mathbb R)|\  \xi^t\tilde J+\tilde J\xi=0\}.
\end{eqnarray*}
For $i=0,\,1 $, let $\sigma_i$ be the involutions on $O(J,J)$ defined by
\[
\sigma_i(x)=I_{n+i,n-i}xI_{n+i,n-i}^{-1},
\]
and
\[
{ o }(J,J)=\mathcal K_i+\mathcal P_i
\]
the Cartan decompositions for $\sigma_i$. So $K_0\cap K_1=S_0\times S_1$, $K_0=S_0\times K_0'$, $K_1=K_1'\times S_1$ as direct product of subgroups with $\mathcal K_i$, $\mathcal S_i$, $\mathcal K_i'$ be the Lie algebras of $K_i$, $S_i$, $K_i'$. More precisely,
\begin{eqnarray*}
\mathcal K_0=o(q,n-q)+ o(q,n-q),&& \mathcal K_1=\left\{
                \begin{array}{ll}
                  o(n,1)+o(n-1), & \hbox{if } q=0, \\
                  o(q,n-q+1)+ o(q-1,n-q), & \hbox{if } q>0,
                \end{array}
              \right.\\
\mathcal S_0=o(q,n-q)+ 0_n,&& \mathcal S_1=\left\{
                \begin{array}{ll}
                  0_{n+1} + o(n-1), & \hbox{if } q=0, \\
                 0_{n+1} + o(q-1,n-q), & \hbox{if } q>0,
                \end{array}
              \right.\\
\mathcal K_0'=0_n + o(q,n-q),&& \mathcal K_1'=\left\{
                \begin{array}{ll}
                  o(n,1)+ 0_{n-1}, & \hbox{if } q=0, \\
                 o(q,n-q+1)+ 0_{n-1}, & \hbox{if } q>0.
                \end{array}
              \right.
\end{eqnarray*}

Let 
\begin{eqnarray*} 
L&=&\{f:\mathbb A_{\epsilon,1/\epsilon}\stackrel{holo.}{\rightarrow}GL_{2n}(\mathbb C)|\ \left(f(\bar\lambda)\right)^\ast \tilde Jf(\lambda)=\tilde J, \ \overline{f(\bar\lambda)}=f(\lambda)\}\\
L^{\sigma_0}  &=&\left\{f\in L|\  \sigma_0(f(-\lambda))=f(\lambda)\right\},\\
L^{\sigma_0}_{+}  &=& \{f\in L^{\sigma_0} | \sigma_1\left(f(1/\lambda)\right)=f(\lambda), \, f(1)\in  K'_1\}, \\
L^{\sigma_0}_{-}    &=&\{f\in  L^{\sigma_0}|f:\mathbb C/\mathbb D_{\epsilon}\stackrel{holo. }{\rightarrow}GL_{2n}(\mathbb C), \ f(\infty)\in  K_0'   \}.
\end{eqnarray*}
Here $\mathbb S^{\epsilon}$, $\mathbb S^{1/\epsilon}$ are circles of radius $\epsilon$, and $1/\epsilon\ne 1$ centered at $0$, $\mathbb D_{\epsilon}$ is the disk of radius $\epsilon\ne 1$, and $\mathbb A_{\epsilon,1/\epsilon}$ is the annulus with boundaries $\mathbb S^{\epsilon}$ and $\mathbb S^{1/\epsilon}$.  Then $L^{\sigma_0}_{+} \cap L^{\sigma_0}_{-} =\{1\}$ and the Lie algebras of $L^{\sigma_0}$, $ L^{\sigma_0}_{+}$, $L^{\sigma_0}_{-}$ are
\begin{eqnarray*}
&&{\mathcal L}^{\sigma_0} = \{\xi(\lambda)=\sum_{j\le n_0}\xi_j\lambda^j |\ \textit{$\xi_j\in\mathcal K_0$ if $j$ is even, $\xi_j\in\mathcal P_0$ if $j$ is odd}\},\\
&&{\mathcal L}^{\sigma_0}_{+} =  \{\xi\in \mathcal L^{\sigma_0} |\xi_{-j}=\sigma_1(\xi_j),\, \xi(1)\in\mathcal K'_1 \},\\
&&{\mathcal L}^{\sigma_0}_{-} =  \{\xi\in \mathcal L^{\sigma_0}|\xi(\lambda)=\sum_{j\le 0}\xi_j\lambda^j,\, \xi_0\in\mathcal K_0'  \}.
\end{eqnarray*}
It is a theorem that $({\mathcal L}^{\sigma_0}_{+}, {\mathcal L}^{\sigma_0}_{-})$ is a splitting of ${\mathcal L}^{\sigma_0}$ with  $\hat\pi_{\pm}$ defined by  
\begin{eqnarray}
&&\hat\pi_{+}(\xi) =\pi_{\mathcal S_0}(\xi_0)-\pi_{\mathcal S_1}\left(\sum_{0<j,\textit{ $j$ even}}(\,\xi_j+\sigma_1(\xi_j)\,)\right)+\sum_{0<j\le n_0}\left(\xi_j\lambda^j+\sigma_1(\xi_j)\lambda^{-j}\right)\label{E:pi+}\\
&&\hat\pi_{-}(\xi) =\pi_{\mathcal K_0'}(\xi_0)+\pi_{\mathcal S_1}\left(\sum_{0<j,\textit{ $j$ even}}(\,\xi_j+\sigma_1(\xi_j)\,)\right)+\sum_{0<j\le n_0}(\,\xi_{-j} -\sigma_1(\xi_j)\,)\lambda^{-j}\label{E:pi-}
\end{eqnarray}
as the projections of $\xi=\sum_{j\le n_0}\xi_j\lambda^j\in \mathcal L^{\sigma_0}$ onto $ {\mathcal L}_{\pm}^{\sigma_0}$ with respect to the splitting \cite{T07}.  
Finally, let $\mathcal A$ be a maximal abelian subalgebra of $\mathcal P_0$ consisting of elements of the form $\left(\begin{array}{cc} 0 & D\\ D &0\end{array}\right)$, where $D$ is a diagonal matrix in $gl(n,\mathbb R)$. Note $\sigma_1(\mathcal A)\subset \mathcal A$. Define
\begin{eqnarray}
&&	J_{a,2j+1}=a\lambda^{2j+1}+\sigma_1(a)\lambda^{-(2j+1)}\in  {\mathcal L}^{\sigma_0}_{+}\label{E:jaj}
\end{eqnarray}
for some constant regular $a\in\mathcal A$. 

\begin{definition}{\label{D:TH}}
The $2j+1$-th twisted ${\frac {O(J,J)}{O(J)\times O(J)}}$-flow (twisted by $\sigma_1$) is the compatibility condition of
\begin{eqnarray}
&&\left[\partial_x+\hat\pi_{+}\left(M J_{a,1}M^{-1}\right), \partial_t+\hat\pi_{+}\left(M J_{\tilde a,2j+1}M^{-1}\right)\right]=0,
\label{E:j-thflow}
\end{eqnarray}
for some $M=M(x,\lambda)\in  L^{\sigma_0}_{-}$.
\end{definition}


\begin{theorem}\label{T:pde}
Suppose $ a=\left(\begin{array}{cc} 0 &  D\\  D &0\end{array}\right)$, $\tilde a=\left(\begin{array}{cc} 0 & \tilde D\\ \tilde D &0\end{array}\right)$, $D=\textit{diag }(w_1,\cdots,w_n)$, $\tilde D=\textit{diag }(\tilde w_1,\cdots,\tilde w_n)$,  $ w_i\ne\pm w_j$, $ \tilde w_i\ne\pm \tilde w_j$, for $i\ne j$.
Then $2j+1$-th twisted ${\frac {O(J,J)}{O(J)\times O(J)}}$-flow is a nonlinear $2j+2$-th order partial differential system in the components of $b$, $v$, with
\begin{gather}
\hat\pi_+\left(M  J_{a,1}M^{-1}\right)=bab^{-1}\lambda+v+\sigma_1(bab^{-1})\frac 1\lambda, \qquad
b\in K_0',\   v\in\mathcal S_0.\label{E:pi}
\end{gather}
\end{theorem}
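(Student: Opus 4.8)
The plan is to unpack Definition~\ref{D:TH} by computing $\hat\pi_+(MJ_{a,1}M^{-1})$ explicitly, using the structure of $M\in L^{\sigma_0}_{-}$ and the projection formula (\ref{E:pi+}). Write $M(x,\lambda)=M_0(x)+M_{-1}(x)\lambda^{-1}+\cdots$ with $M_0(\infty)=M_0\in K_0'$. Since $J_{a,1}=a\lambda+\sigma_1(a)\lambda^{-1}$ and $M$ is holomorphic near $\lambda=\infty$, the product $MJ_{a,1}M^{-1}$ is a Laurent series whose top term is $\lambda^1$; thus in the notation $\xi=\sum_{j\le n_0}\xi_j\lambda^j$ of $\mathcal L^{\sigma_0}$ we have $n_0=1$, $\xi_1=M_0aM_0^{-1}$, and $\xi_0$, $\xi_{-1},\dots$ determined recursively. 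Feeding this into (\ref{E:pi+}) with only $j=1$ present (odd, so the even-$j$ sum is empty) gives
\[
\hat\pi_+(MJ_{a,1}M^{-1})=\pi_{\mathcal S_0}(\xi_0)+\xi_1\lambda+\sigma_1(\xi_1)\lambda^{-1}.
\]
So the $\lambda$-coefficient is $b a b^{-1}$ with $b:=M_0\in K_0'$, the $\lambda^{-1}$-coefficient is its $\sigma_1$-image automatically (consistency check: $\sigma_1(bab^{-1})=\sigma_1(b)\sigma_1(a)\sigma_1(b)^{-1}$, and one must verify this equals the genuine $\lambda^{-1}$-term, which follows because $\hat\pi_+(\xi)\in\mathcal L^{\sigma_0}_+$ by the splitting theorem quoted from \cite{T07}, hence satisfies $\xi_{-j}=\sigma_1(\xi_j)$), and the $\lambda^0$-term is $v:=\pi_{\mathcal S_0}(\xi_0)$. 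This establishes (\ref{E:pi}); I would also note $b$ is only determined up to right multiplication by $S_0$ since $K_0'=S_0\times K_0'$ internal structure and $bab^{-1}$ only sees the coset, but the displayed form is the canonical representative.

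Next I would identify $\hat\pi_+(M J_{\tilde a,2j+1}M^{-1})$ as a Laurent polynomial of the form $\sum_{k=1}^{2j+1}(Q_k\lambda^k+\sigma_1(Q_k)\lambda^{-k})+Q_0$ with $Q_{2j+1}=b\tilde a b^{-1}$ (same leading $b$, because both flows use the same $M$), $Q_0\in\mathcal S_0\oplus(\text{a piece killed by }\pi_{\mathcal S_1})$, and the intermediate $Q_k$ being differential polynomials in $b$, $v$ and their $x$-derivatives. The mechanism: the lower-order coefficients of $MJ_{\tilde a,2j+1}M^{-1}$ in $\mathcal L^{\sigma_0}$ are functions of the $M_{-\ell}$, and these are recursively eliminated in favor of $b,v$ by comparing $\lambda$-coefficients in the $x$-part of (\ref{E:j-thflow}) — i.e. the first equation $[\partial_x+\hat\pi_+(MJ_{a,1}M^{-1}),\ \cdot\ ]$ together with $MJ_{\tilde a,2j+1}M^{-1}=MJ_{a,1}M^{-1}\cdot(\text{power series in }\lambda^{-1})$ after noting $J_{\tilde a,2j+1}$ and $J_{a,1}$ are both built from the fixed abelian $\mathcal A$. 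The compatibility condition (\ref{E:j-thflow}) then splits, coefficient of $\lambda^k$ by coefficient of $\lambda^k$, into a finite system; the highest-order equations determine the $Q_k$'s as differential polynomials, and the remaining equations (the $\lambda^0$ and negative-power balance) become the evolution equations $\partial_t b$, $\partial_t v$. A count of how many $x$-derivatives accumulate through the recursion from $k=2j+1$ down to $k=0$ gives order $2j+2$: one derivative is picked up at each of the steps producing $Q_{2j}, Q_{2j-1},\dots$, plus the final $\partial_x$ in the bracket, and the $\sigma_1$-symmetric tail contributes the matching count on the negative side — this is the order bookkeeping I would do carefully rather than sketch.

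The main obstacle is the recursive elimination step: showing cleanly that every coefficient $M_{-\ell}$ of the formal inverse-power expansion of $M$ can be expressed as a differential polynomial in the entries of $b$ and $v$ (with no nonlocal/integral terms), and that the reduction respects the $\sigma_0$- and $\sigma_1$-symmetries at each stage so the final system genuinely lives in the stated components. Concretely, after the $\lambda^1$ and $\lambda^0$ terms of the $x$-part are normalized, one must show the $\lambda^{-1}, \lambda^{-2},\dots$ terms of the flatness of $\partial_x+\hat\pi_+(MJ_{a,1}M^{-1})$ force $\partial_x M_0$, $\partial_x M_{-1}$, etc., to be algebraic in $b$, $v$, $M_0,\ldots$ — this is where the regularity of $a$ (the hypothesis $w_i\ne\pm w_j$) is essential: $\mathrm{ad}_a$ is invertible on the off-diagonal part, so at each order the equation $[\,a,\,M_{-\ell}\text{-type term}\,]=(\text{known})$ is solvable and determines the new unknown. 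I would carry out the first two orders explicitly to exhibit $v$ in terms of $\partial_x b$ (roughly $v=[b^{-1}\partial_xb,\ \cdot\ ]$-type, projected to $\mathcal S_0$), then invoke induction. Once locality is in hand, matching the $t$-part is formally identical and the order count finishes the proof.
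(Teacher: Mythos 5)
Your handling of the leading and off-diagonal pieces is essentially the paper's: after conjugating so that $a$ is diagonal (the paper uses $U=\tfrac1{\sqrt2}\bigl(\begin{smallmatrix}I&-I\\I&I\end{smallmatrix}\bigr)$ for this), the recursion from the $x$-part of (\ref{E:j-thflow}) reads $[\mathrm{ad}_a]\,q_s=(\text{lower order stuff})$, and $\mathrm{ad}_a$ is invertible on the off-diagonal precisely because $w_i\ne\pm w_j$. That much matches Lemma~\ref{L:recursive}.

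The genuine gap is the diagonal part. Solving $[\,a,\,q_s\,]=(\text{known})$ determines $q_s$ only up to an element of $\ker\mathrm{ad}_a$, i.e.\ its diagonal component $T_s$ is left free by every equation you propose to use. Integrating the $\lambda^0$-coefficient of the flatness equation would give $\partial_x T_s=(\ldots)$, which threatens exactly the nonlocal antiderivative terms you correctly flag as the danger — but you offer no mechanism to avoid them. The paper closes this by a purely algebraic constraint: since $MJ_{\tilde a,2j+1}M^{-1}$ is conjugate to $J_{\tilde a,2j+1}$, the conjugated matrix $f=(bU)^{-1}MJ_{\tilde a,2j+1}M^{-1}bU$ satisfies the fixed characteristic identity
\[
\bigl(f-\tilde w_1(\lambda^{2j+1}+\lambda^{-(2j+1)})\bigr)\prod_{s\ge2}\bigl(f-\tilde w_s(\lambda^{2j+1}-\lambda^{-(2j+1)})\bigr)\cdot\bigl(f+\cdots\bigr)\prod\bigl(f+\cdots\bigr)=0,
\]
and equating the appropriate $\lambda$-coefficients of the diagonal part of this identity gives, at order $s$, a linear equation
$ T_{s}\,\mathrm{diag}\bigl(\prod_{k\ne1}(\lambda_1-\lambda_k),\dots\bigr)=F_s(T_\alpha,P_\beta)$
with already-determined right-hand side. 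This is solvable for $T_s$ \emph{locally} precisely because the eigenvalues $\pm\tilde w_k$ of $\tilde a$ are distinct, i.e.\ $\tilde w_i\ne\pm\tilde w_j$. Notice your proposal never invokes $\tilde w_i\ne\pm\tilde w_j$ at all — a reliable sign the diagonal-determination step is missing. Without it the Theorem is not proved: you would only conclude that $Q_s$ is a differential polynomial modulo an undetermined diagonal function at each step, and the statement that the flow is a local PDE system in $(b,v)$ would remain open. The remedy is exactly the paper's Lemma~\ref{L:Acoeff}: prove by downward induction on $s$ that once $T_\alpha$ ($\alpha>s$) and $P_\beta$ ($\beta\ge s$) are known differential polynomials, the characteristic-polynomial coefficient at the right degree pins down $T_s$ algebraically.
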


The proof of the above theorem follows from Lemma \ref{L:gaugeM}-\ref{L:Acoeff}.
\begin{lemma}\label{L:gaugeM}
The loop $M$ in (\ref{E:j-thflow}) can be chosen to satisfy
\begin{gather*}
M(\partial_x+J_{a,1})M^{-1}=\partial_x+\hat\pi_+\left(M  J_{a,1}M^{-1}\right),\\
M(\partial_t+J_{\tilde a,2j+1})M^{-1}=\partial_t+\hat\pi_+\left(M  J_{\tilde a,2j+1}M^{-1}\right).
\end{gather*}
\end{lemma}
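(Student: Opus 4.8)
The plan is to construct the desired normalization $M$ by a factorization argument adapted from the general theory of loop-group splittings. First I would start from an arbitrary $M_0 \in L^{\sigma_0}_{-}$ satisfying \eqref{E:j-thflow} (whose existence is part of the setup of Definition \ref{D:TH}) and consider the flat connection $\partial_x + \hat\pi_+(M_0 J_{a,1} M_0^{-1})$. Because $J_{a,1} \in \mathcal L^{\sigma_0}_{+}$ is constant, the loop $E(x,\lambda) = \exp(x J_{a,1})$ solves $\partial_x E = E J_{a,1}$ and lies in $L^{\sigma_0}_{+}$ for each fixed $x$. The trivialization of the connection is then governed by $M_0 E$, and I would appeal to the local Birkhoff-type factorization $M_0 E = \tilde M^{-1} g_+$ with $\tilde M \in L^{\sigma_0}_{-}$ and $g_+ \in L^{\sigma_0}_{+}$, valid in a neighborhood where the product stays in the big cell of the splitting $(\mathcal L^{\sigma_0}_+, \mathcal L^{\sigma_0}_-)$. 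This $\tilde M$ is the candidate for the new $M$; one checks it still represents the same flow because $\tilde M$ and $M_0$ differ by the right factor $g_+ \in L^{\sigma_0}_{+}$, which is exactly the gauge freedom preserving the $\hat\pi_+$-parts.

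Second, I would verify the two displayed identities. Differentiating $\tilde M^{-1} g_+ = M_0 E$ in $x$ and using $\partial_x E = E J_{a,1}$ gives
\[
\tilde M(\partial_x + J_{a,1})\tilde M^{-1} = \partial_x + (\partial_x g_+)g_+^{-1} + g_+ (\text{terms}) g_+^{-1},
\]
and the point is that the left-hand side, written as $\partial_x + \tilde M J_{a,1}\tilde M^{-1} + (\partial_x \tilde M)\tilde M^{-1}$, has its negative-part ($\mathcal L^{\sigma_0}_-$) contribution absorbable by the freedom in $\tilde M$; the projection $\hat\pi_+$ of $\tilde M J_{a,1}\tilde M^{-1}$ is then precisely the connection coefficient, because $(\partial_x \tilde M)\tilde M^{-1} \in \mathcal L^{\sigma_0}_-$ by construction (it is the logarithmic derivative of an element of $L^{\sigma_0}_-$) and $\hat\pi_-(\tilde M J_{a,1}\tilde M^{-1})$ cancels against it. The same argument with $\partial_t$ and $J_{\tilde a, 2j+1}$ handles the second identity, using that $\exp(t J_{\tilde a, 2j+1})$ commutes with $\exp(x J_{a,1})$ since $a, \tilde a$ lie in the same abelian $\mathcal A$ and $\sigma_1$ preserves $\mathcal A$.

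Third, I would confirm that the two equations are simultaneously solvable — i.e. that the $x$- and $t$-factorizations can be performed by a single $\tilde M(x,t,\lambda)$. This follows from the zero-curvature condition \eqref{E:j-thflow} itself: the connection $\partial_x + \hat\pi_+(M_0 J_{a,1}M_0^{-1})$, $\partial_t + \hat\pi_+(M_0 J_{\tilde a, 2j+1}M_0^{-1})$ is flat, so the joint frame $\Psi$ with $\Psi^{-1}\partial_x\Psi$, $\Psi^{-1}\partial_t\Psi$ equal to (minus) these coefficients exists locally, and $\tilde M = \Psi E_x^{-1} E_t^{-1}$ with $E_x = \exp(xJ_{a,1})$, $E_t = \exp(tJ_{\tilde a,2j+1})$ is the common normalization once one checks $\tilde M \in L^{\sigma_0}_-$ via the factorization above.

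The main obstacle I expect is the bookkeeping of the twisting conditions through the factorization: one must check that $\tilde M$ genuinely lands in $L^{\sigma_0}_{-}$ — in particular that it satisfies the reality and $\tilde J$-orthogonality constraints, is holomorphic on $\mathbb C / \mathbb D_\epsilon$, and has $\tilde M(\infty) \in K_0'$ — and that the leftover factor $g_+$ respects both the $\sigma_1$-reflection symmetry $\xi_{-j} = \sigma_1(\xi_j)$ and the normalization $g_+(1) \in K_1'$. These are the defining features of the splitting $(\mathcal L^{\sigma_0}_+, \mathcal L^{\sigma_0}_-)$ recalled from \cite{T07}, and the verification amounts to tracking how $\sigma_0$, $\sigma_1$, complex conjugation, and the $\tilde J$-form intertwine with the Birkhoff factors; since the involutions are compatible with the splitting this goes through, but it is where the care is needed rather than in the formal differentiation steps.
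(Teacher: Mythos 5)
Your proposal is not the paper's argument, and as written it does not close.

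\textbf{What the paper actually does.} The paper integrates the flat connection directly: it lets $\Psi$ solve $\partial_x\Psi=-\hat\pi_+(MJ_{a,1}M^{-1})\Psi$, $\partial_t\Psi=-\hat\pi_+(MJ_{\tilde a,2j+1}M^{-1})\Psi$, sets $m=\Psi\,e^{x(\lambda a+\frac1\lambda\sigma_1(a))+t(\lambda^{2j+1}\tilde a+\frac1{\lambda^{2j+1}}\sigma_1(\tilde a))}$, derives (\ref{E:gauge1})--(\ref{E:gauge1-1}) by differentiating, and then applies $\hat\pi_+$ to both sides to replace $\hat\pi_+(MJ_{a,1}M^{-1})$ by $\hat\pi_+(mJ_{a,1}m^{-1})$. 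The fact that $m\in L^{\sigma_0}_-$ is deliberately deferred to Theorem~\ref{T:existence}, because it is a nontrivial analytic statement about eigenfunction asymptotics, not a formal consequence of the set-up.

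\textbf{Where your factorization approach breaks.} First, the object $M_0E$ that you factorize is not the parallel frame of the connection. For a general $M_0\in L^{\sigma_0}_-$ satisfying only the zero-curvature condition (\ref{E:j-thflow}), there is no ODE relating $M_0$ to $E=\exp(xJ_{a,1})$ that makes $M_0E$ a solution of the linear system, so the statement ``the trivialization of the connection is governed by $M_0E$'' is unjustified; one must first integrate the system to get $\Psi$, and then one is back in the paper's set-up. If you actually carry out the differentiation of $\tilde M^{-1}g_+=M_0E$, a term $g_+E^{-1}M_0^{-1}(\partial_xM_0)\,E\,g_+^{-1}$ appears from the $x$-dependence of $M_0$, and this term does not split into $\mathcal L^{\sigma_0}_{\pm}$ pieces in any controllable way; it is not accounted for in the formula you display.

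Second, and more fundamentally, the cancellation you invoke, namely that ``$\hat\pi_-(\tilde M J_{a,1}\tilde M^{-1})$ cancels against $(\partial_x\tilde M)\tilde M^{-1}$,'' is precisely the identity the lemma asserts; you have not derived it, only named it. A Birkhoff factorization would indeed give $\tilde M\in L^{\sigma_0}_-$, but it does not give this PDE for $\tilde M$, because the splitting here is \emph{not} a degree splitting: $\mathcal L^{\sigma_0}_+$ contains Laurent loops with both positive and negative $\lambda$-powers linked by $\sigma_1$, so conjugation $\xi\mapsto g\xi g^{-1}$ by $g\in L^{\sigma_0}_-$ (or by $E$, or by $g_+$) does not preserve $\mathcal L^{\sigma_0}_+$. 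This is exactly the feature that makes the twisted hierarchies nonstandard, and it is why the projections do not ``cancel against'' the logarithmic derivative automatically. Your last paragraph flags the bookkeeping of the $\sigma$-constraints as the place where care is needed; in fact the formal differentiation and projection steps in your second paragraph are where the gap is.

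\textbf{To repair.} Follow the paper: solve the linear system for $\Psi$ (which exists by flatness (\ref{E:j-thflow})), set $m=\Psi E$, differentiate to get (\ref{E:gauge1})--(\ref{E:gauge1-1}), and only then take $\hat\pi_+$ of both sides once $m\in L^{\sigma_0}_-$ is available --- either by invoking Theorem~\ref{T:existence}, or by replacing $m$ with the $L^{\sigma_0}_-$ factor of a \emph{local} factorization $\Psi E = m_- m_+$ and carefully re-deriving the same identity for $m_-$ (which, as noted, requires an additional argument precisely because $L^{\sigma_0}_-$ does not normalize $\mathcal L^{\sigma_0}_+$). Replacing $\hat\pi_+(MJ_{a,1}M^{-1})$ by $\hat\pi_+(mJ_{a,1}m^{-1})$ only requires $(\partial_x m)m^{-1}\in\mathcal L^{\sigma_0}_-$, which is exactly what $m\in L^{\sigma_0}_-$ furnishes.
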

\begin{proof}
Let $\Psi(x,t,\lambda)$ satisfy
\begin{eqnarray*}
&&\partial_x\Psi=-\hat\pi_{+}\left(M J_{a,1}M^{-1}\right)\Psi,\\
&&\partial_t\Psi=-\hat\pi_{+}\left(M J_{\tilde a,2j+1}M^{-1}\right)\Psi.
\end{eqnarray*}
Write $\Psi(x,t,\lambda)=m(x,t,\lambda)e^{-x(\lambda a+\frac 1\lambda\sigma_1(a))-t(\lambda^{2j+1} \tilde a+\frac 1{\lambda^{2j+1}}\sigma(\tilde a))}$. Then we derive 
\begin{eqnarray}
m(\partial_x+J_{a,1})m^{-1}&=&\partial_x+\hat\pi_+\left(M  J_{a,1}M^{-1}\right),\label{E:gauge1}\\
m(\partial_t+J_{\tilde a,2j+1})m^{-1}&=&\partial_t+\hat\pi_+\left(M  J_{\tilde a,2j+1}M^{-1}\right).\label{E:gauge1-1}
\end{eqnarray}
Hence
\begin{eqnarray}
\hat\pi_+\left(mJ_{a,1}m^{-1}\right)&=&\hat\pi_+\left(M  J_{a,1}M^{-1}\right),\label{E:gauge3}\\
\hat\pi_+\left(mJ_{\tilde a,2j+1}m^{-1}\right)&=&\hat\pi_+\left(M  J_{\tilde a,2j+1}M^{-1}\right),\label{E:gauge3-1}
\end{eqnarray}
by taking the projection $\hat\pi_+$ on both sides of (\ref{E:gauge1}), (\ref{E:gauge1-1}). Plugging (\ref{E:gauge3}), (\ref{E:gauge3-1}) into (\ref{E:gauge1}), (\ref{E:gauge1-1}), we then have
\begin{eqnarray*}
m(\partial_x+J_{a,1})m^{-1}&=&\partial_x+\hat\pi_+\left(m  J_{a,1}m^{-1}\right),\\
m(\partial_t+J_{\tilde a,2j+1})m^{-1}&=&\partial_t+\hat\pi_+\left(m  J_{\tilde a,2j+1}m^{-1}\right).
\end{eqnarray*}
The property $m(x,t,\lambda)\in L^{\sigma_0}_-$ will be shown in Theorem \ref{T:existence}.
\end{proof}

Define the $\lambda$-coefficients of $\hat\pi_+(MJ_{\tilde a,2j+1}M^{-1})$, $\hat\pi_-(MJ_{\tilde a,2j+1}M^{-1})$ by
\begin{eqnarray}
&&\hat\pi_+(MJ_{\tilde a,2j+1}M^{-1})=\sum_{s=1}^{2j+1}Q_s(x)\lambda^s +Q_0(x)+\sum_{s=1}^{2j+1}\sigma_1(Q_s)\lambda^{-s},\label{E:coeffM}\\
&&\hat\pi_-(MJ_{\tilde a,2j+1}M^{-1})=R_0(x)+\sum_{s>0}R_s(x)\lambda^{-s}\label{E:coeffM-1}
\end{eqnarray}
by (\ref{E:pi+}), (\ref{E:pi-}).

\begin{lemma}\label{L:recursive}
Let $I$ be the $n\times n$ identity matrix, $U=\frac 1{\sqrt 2}\left(\begin{array}{cc}I &-I\\ I &I\end{array}\right)$,  and
\begin{eqnarray*}
q_i(x)&=&(bU)^{-1}Q_{i}bU,\quad 0\le i\le 2j+1,\\
r_0(x)&=&(bU)^{-1}R_0bU,\\
f(x,\lambda)&=&(bU)^{-1}MJ_{\tilde a,2j+1}M^{-1}bU\\
&=& \sum_{s=1}^{2j+1}q_s\lambda^s +q_0+r_0+(bU)^{-1}\left(\sum_{s=1}^{2j+1}\sigma_1(Q_s)\lambda^{-s}+\sum_{s>0}R_s\lambda^{-s}\right)\,bU.
\end{eqnarray*}
Then 
\begin{eqnarray}
 q_{2j+1}&=&\ U^{-1}\tilde aU,\nonumber\\
\left[U^{-1}aU,q_{2j}\right]&=&-(bU)^{-1}\left(\partial_x Q_{2j+1}\right)bU-\left[(bU)^{-1}vbU,q_{2j+1}\right],\label{E:2j}\\
\left[U^{-1}aU,q_{s\ }\right]&=&-(bU)^{-1}\left(\partial_x Q_{s+1\ }\right)bU-\left[(bU)^{-1}vbU,q_{s+1}\right]\label{E:s}\\
&&-\left[(bU)^{-1}\sigma_1(bab^{-1})bU,q_{s+2}\right],\quad 1\le s\le 2j-1,\,s\ne 2j,\,0\nonumber\\
\left[U^{-1}aU,q_0+r_0\right]&=&-(bU)^{-1}\left(\partial_x Q_{1\ }\right)bU-\left[(bU)^{-1}vbU,q_{1}\right]\label{E:0}\\
&&-\left[(bU)^{-1}\sigma_1(bab^{-1})bU,q_{2}\right].\nonumber
\end{eqnarray}
Moreover,  
\begin{eqnarray}
&&(f-\tilde w_1(\lambda^{2j+1}+\frac 1{\lambda^{2j+1}}))\prod_{s=2}^n \left(f-\tilde w_s(\lambda^{2j+1}-\frac 1{\lambda^{2j+1}})\right)\label{E:characteristic}\\
\cdot&&(f+\tilde w_1(\lambda^{2j+1}+\frac 1{\lambda^{2j+1}}))\prod_{s=2}^n \left(f+\tilde w_s(\lambda^{2j+1}-\frac 1{\lambda^{2j+1}})\right)=0.\nonumber
\end{eqnarray}
\end{lemma}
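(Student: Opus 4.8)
The statement of Lemma \ref{L:recursive} has two parts: the recursion relations (\ref{E:2j})--(\ref{E:0}) for the coefficients $q_s$, and the characteristic identity (\ref{E:characteristic}). The plan is to treat them separately.

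For the recursion relations, the starting point is the first equation of Lemma \ref{L:gaugeM}, namely $M(\partial_x+J_{a,1})M^{-1}=\partial_x+\hat\pi_+(MJ_{a,1}M^{-1})$, combined with the zero-curvature equation (\ref{E:j-thflow}). I would first rewrite (\ref{E:j-thflow}) as the Lax equation
\[
\partial_x\left(\hat\pi_+(MJ_{\tilde a,2j+1}M^{-1})\right)=\left[\hat\pi_+(MJ_{a,1}M^{-1}),\,\hat\pi_+(MJ_{\tilde a,2j+1}M^{-1})\right],
\]
using that both flows come from the same $M$ and that $[\partial_x+J_{a,1},\partial_t+J_{\tilde a,2j+1}]=0$ trivially because $a$, $\tilde a\in\mathcal A$ commute and $\sigma_1(\mathcal A)\subset\mathcal A$ is abelian; the $\hat\pi_+$-gauged versions then satisfy the displayed Lax equation by the splitting property and Lemma \ref{L:gaugeM}. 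Next I substitute the explicit $\lambda$-expansions: $\hat\pi_+(MJ_{a,1}M^{-1})=bab^{-1}\lambda+v+\sigma_1(bab^{-1})\lambda^{-1}$ from (\ref{E:pi}), and the expansion (\ref{E:coeffM}) of $\hat\pi_+(MJ_{\tilde a,2j+1}M^{-1})$ into powers $\lambda^s$, $-(2j+1)\le s\le 2j+1$. Collecting the coefficient of each $\lambda^s$ in the Lax equation yields, for the top power $\lambda^{2j+2}$, the relation $[bab^{-1},Q_{2j+1}]=0$, which forces $Q_{2j+1}=b(U q_{2j+1}U^{-1})b^{-1}$ with $q_{2j+1}=U^{-1}\tilde a U$ after conjugating by $bU$ (here one uses that $U$ diagonalizes $\begin{psmallmatrix}0&I\\I&0\end{psmallmatrix}$-type matrices, so $U^{-1}aU$ and $U^{-1}\tilde a U$ are diagonal and the centralizer of a regular diagonal matrix is the diagonals); the coefficient of $\lambda^{2j+1}$ gives (\ref{E:2j}), the coefficients of $\lambda^{s+1}$ for $2\le s+1\le 2j$ give (\ref{E:s}) (three commutator terms appear because $\hat\pi_+(MJ_{a,1}M^{-1})$ has three $\lambda$-levels $\lambda,\lambda^0,\lambda^{-1}$), and the coefficient of $\lambda^1$ gives (\ref{E:0}), where the $\lambda^0$-term $R_0$ of $\hat\pi_-$ enters because $q_0+r_0$ is precisely the $\lambda^0$-part of the \emph{ungauged} $MJ_{\tilde a,2j+1}M^{-1}=\hat\pi_+(\cdots)+\hat\pi_-(\cdots)$. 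All of this is conjugation by the fixed (in $\lambda$) matrix $bU$, so it commutes with reading off $\lambda$-coefficients.

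For the characteristic identity, the key observation is that $f(x,\lambda)=(bU)^{-1}MJ_{\tilde a,2j+1}M^{-1}bU$ is conjugate to $J_{\tilde a,2j+1}=\tilde a\lambda^{2j+1}+\sigma_1(\tilde a)\lambda^{-(2j+1)}$. Hence $f$ and $J_{\tilde a,2j+1}$ have the same characteristic polynomial. So I would simply compute the eigenvalues of $\tilde a\lambda^{2j+1}+\sigma_1(\tilde a)\lambda^{-(2j+1)}$: writing $\tilde a=\begin{psmallmatrix}0&\tilde D\\ \tilde D&0\end{psmallmatrix}$ with $\tilde D=\mathrm{diag}(\tilde w_1,\dots,\tilde w_n)$ and recalling $\sigma_1(x)=I_{n+1,n-1}xI_{n+1,n-1}^{-1}$, one gets $\sigma_1(\tilde a)=\begin{psmallmatrix}0&\tilde D'\\ \tilde D'&0\end{psmallmatrix}$ where $\tilde D'=\mathrm{diag}(\tilde w_1,-\tilde w_2,\dots,-\tilde w_n)$ (the sign flip on the last $n-1$ entries of the diagonal coming from the $(n+1,n-1)$ split acting within each $n$-block). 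Then $\tilde a\lambda^{2j+1}+\sigma_1(\tilde a)\lambda^{-(2j+1)}=\begin{psmallmatrix}0&E\\ E&0\end{psmallmatrix}$ with $E=\mathrm{diag}\big(\tilde w_1(\lambda^{2j+1}+\lambda^{-(2j+1)}),\,\tilde w_2(\lambda^{2j+1}-\lambda^{-(2j+1)}),\dots,\tilde w_n(\lambda^{2j+1}-\lambda^{-(2j+1)})\big)$, whose eigenvalues are $\pm$ the diagonal entries of $E$. Multiplying out $\prod(\mu-\mu_k)$ over the $2n$ eigenvalues $\mu_k$ and evaluating at $\mu=f$ (legitimate since a matrix annihilates its characteristic polynomial, Cayley--Hamilton) gives exactly (\ref{E:characteristic}).

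The main obstacle I anticipate is purely bookkeeping: getting the three-term commutator structure of (\ref{E:s}) correct, i.e.\ tracking which $\lambda$-power of $\hat\pi_+(MJ_{a,1}M^{-1})$ pairs with which $Q_{s'}$, and keeping straight the asymmetry caused by the $\hat\pi_-$-term $R_0$ slipping into the $\lambda^0$ equation but not the others (this is why (\ref{E:0}) features $q_0+r_0$ rather than $q_0$ alone). A secondary subtlety is verifying the exact sign pattern in $\sigma_1(\tilde a)$ — that only the first diagonal entry survives with $+$ while the rest flip — which is what produces the asymmetry between the first factor and the product $\prod_{s=2}^n$ in (\ref{E:characteristic}); this requires carefully unwinding the definition $\sigma_1(x)=I_{n+1,n-1}\,x\,I_{n+1,n-1}^{-1}$ on the off-diagonal $n\times n$ blocks of $\tilde a$. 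Once these indexing issues are pinned down, both parts are routine.
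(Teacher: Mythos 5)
Your second part — the characteristic identity (\ref{E:characteristic}) — is correct and is essentially what the paper means by ``follows from the characteristic polynomial of $f$'': you correctly note that $f$ is conjugate to $J_{\tilde a,2j+1}$, compute $\sigma_1(\tilde a)$ (the sign pattern $\tilde D'=\mathrm{diag}(\tilde w_1,-\tilde w_2,\dots,-\tilde w_n)$ is right), and apply Cayley--Hamilton. No complaints there.

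The first part has a genuine gap. The Lax equation you propose,
\[
\partial_x\bigl(\hat\pi_+(MJ_{\tilde a,2j+1}M^{-1})\bigr)=\bigl[\hat\pi_+(MJ_{a,1}M^{-1}),\,\hat\pi_+(MJ_{\tilde a,2j+1}M^{-1})\bigr],
\]
is not equivalent to the zero-curvature condition (\ref{E:j-thflow}); it drops the $\partial_t\hat\pi_+(MJ_{a,1}M^{-1})$ term with no justification. More importantly, it does not produce $R_0$ in the $\lambda^1$-relation (\ref{E:0}), because $\hat\pi_+(MJ_{\tilde a,2j+1}M^{-1})$ has $\lambda^0$-coefficient $Q_0$, not $Q_0+R_0$. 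You notice this and try to patch it by appealing to the $\lambda^0$-part of the \emph{ungauged} $MJ_{\tilde a,2j+1}M^{-1}$, but that patch does not follow from your displayed Lax equation — it requires a different Lax equation. The correct one, which is what the paper uses, is
\[
\bigl[\partial_x+\hat\pi_+(MJ_{a,1}M^{-1}),\,MJ_{\tilde a,2j+1}M^{-1}\bigr]=0,
\]
obtained by conjugating $[\partial_x+J_{a,1},\,J_{\tilde a,2j+1}]=0$ by $M$ and applying Lemma \ref{L:gaugeM} to the left factor only; the right factor stays as the \emph{full} $MJ_{\tilde a,2j+1}M^{-1}=\hat\pi_+(\cdot)+\hat\pi_-(\cdot)$, whose $\lambda^0$-coefficient is $Q_0+R_0$. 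This is precisely what makes $r_0$ appear in (\ref{E:0}) while all other relations involve only the $Q_s$. A second, smaller issue: you claim $[bab^{-1},Q_{2j+1}]=0$ ``forces'' $Q_{2j+1}=b\tilde ab^{-1}$. It does not — commutation only places $Q_{2j+1}$ in the centralizer (the diagonals after conjugating by $bU$), and gives no information about which diagonal. The equality $Q_{2j+1}=b\tilde ab^{-1}$, hence $q_{2j+1}=U^{-1}\tilde aU$, is read off directly from the leading $\lambda^{2j+1}$ coefficient of $MJ_{\tilde a,2j+1}M^{-1}$ using $M(\infty)=b$, i.e.\ from (\ref{E:pi}) and (\ref{E:coeffM}), not from the top-degree commutator.
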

\begin{proof} The identity (\ref{E:characteristic}) follows from the characteristic polynomial of $f$. By Lemma \ref{L:gaugeM}
, we obtain
\[
\left[\partial_x+ \hat\pi_+\left(M  J_{a,1}M^{-1}\right), M  J_{\tilde a,{2j+1}}M^{-1}\right]=0,
\]Therefore by (\ref{E:pi}), and (\ref{E:coeffM}), we derive
\begin{eqnarray}
 Q_{2j+1}&=&b\tilde ab^{-1},\nonumber\\
\left[bab^{-1},Q_{2j}\right]&=&-\partial_x Q_{2j+1}-\left[v,Q_{2j+1}\right],\nonumber\\
\left[bab^{-1},Q_{s\ }\right]&=&-\partial_x Q_{s+1\ }-\left[v,Q_{s+1}\right]-\left[\sigma_1(bab^{-1}),Q_{s+2}\right],\ 1\le s\le 2j-1,\nonumber\\
\left[bab^{-1},Q_{0}+R_0\right]&=&-\partial_x Q_{1}-\left[v,Q_{1}\right]-\left[\sigma_1(bab^{-1}),Q_{2}\right].\nonumber
\end{eqnarray}
Hence follows the lemma.
\end{proof}

\begin{lemma}\label{L:Acoeff}
For $\forall 0\le s\le 2j$, the entries of 
$Q_s$ are fixed functions of components of $\partial_x^{\alpha} b$ and $\partial_x^{\beta} v$, $0\le \alpha, \,\beta \le 2j-s+1$. 
\end{lemma}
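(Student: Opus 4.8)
\textbf{Proof proposal for Lemma \ref{L:Acoeff}.}

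The plan is to prove the claim by downward induction on $s$, starting from $s=2j+1$ and using the recursion relations (\ref{E:2j})--(\ref{E:0}) established in Lemma \ref{L:recursive}, or equivalently their $bU$-conjugated versions for the matrices $Q_s$ displayed at the end of the proof of that lemma. First I would record the base case: by Lemma \ref{L:recursive}, $Q_{2j+1}=b\tilde ab^{-1}$ depends only on the components of $b$ (i.e.\ on $\partial_x^\alpha b$ with $\alpha=0$), which matches the asserted bound with $s=2j+1$ and is consistent with the range $0\le\alpha,\beta\le 2j-s+1$ at $s=2j$ once we differentiate once. I would then treat $s=2j$ separately using (\ref{E:2j}): the right-hand side involves $\partial_x Q_{2j+1}$ (hence one $x$-derivative of $b$) and $[v,Q_{2j+1}]$ (hence $v$ and $b$), so $[bab^{-1},Q_{2j}]$ is a fixed function of the components of $\partial_x^\alpha b$ and $\partial_x^\beta v$ with $\alpha\le 1$, $\beta\le 1$, which is exactly the bound $2j-s+1$ at $s=2j$.

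The inductive step proceeds as follows. Assume the claim holds for $s+1$ and $s+2$ (with the convention that $Q_{s+2}$ is absent when $s=2j-1$, or rather that the $s=2j$ and $s=2j+1$ cases serve as the two seeds). By the $Q$-recursion
\[
[bab^{-1},Q_s]=-\partial_x Q_{s+1}-[v,Q_{s+1}]-[\sigma_1(bab^{-1}),Q_{s+2}],
\]
the right-hand side is a fixed function of the components of $b$, $v$, $\partial_x Q_{s+1}$, $Q_{s+1}$ and $Q_{s+2}$. By the inductive hypothesis the entries of $Q_{s+1}$ lie in the ring generated by components of $\partial_x^\alpha b$, $\partial_x^\beta v$ with $\alpha,\beta\le 2j-(s+1)+1=2j-s$, so $\partial_x Q_{s+1}$ lies in the ring generated by the same with $\alpha,\beta\le 2j-s+1$; the term $Q_{s+2}$ contributes only $\alpha,\beta\le 2j-s-1$, which is within range; and the explicit factors $bab^{-1}$ and $\sigma_1(bab^{-1})$ contribute only components of $b$. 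Hence the matrix $[bab^{-1},Q_s]$ has entries that are fixed functions of components of $\partial_x^\alpha b$ and $\partial_x^\beta v$ with $0\le\alpha,\beta\le 2j-s+1$.

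The remaining and genuinely essential point is to pass from $[bab^{-1},Q_s]$ to $Q_s$ itself, i.e.\ to invert the linear map $\mathrm{ad}_{bab^{-1}}$ on the relevant component. Here I would use the algebraic structure underlying (\ref{E:pi}) and the diagonalization introduced in Lemma \ref{L:recursive}: conjugating by $bU$ turns $bab^{-1}$ into the fixed diagonal-type element $U^{-1}aU$, whose adjoint action is diagonal in the standard basis with eigenvalues built from $w_i\pm w_j$; the regularity hypothesis $w_i\ne\pm w_j$ for $i\ne j$ guarantees that $\mathrm{ad}_{U^{-1}aU}$ is invertible on the off-diagonal (image) part, while the diagonal (kernel) part of $q_s=(bU)^{-1}Q_sbU$ is pinned down by the characteristic identity (\ref{E:characteristic}), which forces the eigenvalues of $f$ — and hence, reading off the appropriate $\lambda$-power, a purely algebraic constraint on the diagonal part of $q_s$ in terms of lower-order data — consistent with the stated derivative bound. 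Thus $q_s$, and therefore $Q_s=bU\,q_s\,(bU)^{-1}$, is a fixed function of components of $\partial_x^\alpha b$, $\partial_x^\beta v$ with $0\le\alpha,\beta\le 2j-s+1$, completing the induction. I expect this kernel-part bookkeeping — separating the off-diagonal part recovered by inverting $\mathrm{ad}_a$ from the diagonal part recovered from the characteristic polynomial, and checking that neither raises the derivative count beyond $2j-s+1$ — to be the main obstacle; the rest is a routine propagation of the derivative bounds through the recursion.
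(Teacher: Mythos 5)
Your proposal follows essentially the same route as the paper: downward induction, with the off-diagonal part of $q_s=(bU)^{-1}Q_s\,bU$ recovered from the recursion (\ref{E:2j})--(\ref{E:0}) by inverting $\mathrm{ad}_{U^{-1}aU}$ (using $w_i\ne\pm w_j$), and the diagonal part $T_s$ pinned down by equating the $\lambda^{(2j+1)(2n)-(2j+1-s)}$-coefficient of the characteristic identity (\ref{E:characteristic}) (using $\tilde w_i\ne\pm\tilde w_j$). The ``kernel-part bookkeeping'' you defer is exactly what the paper carries out, and the derivative count indeed stays within $2j-s+1$.
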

\begin{proof} 
First of all, write 
$q_s=T_s+P_s$, for $s>0$ and $q_0+r_0=T_0+P_0$ with $T_i$, $P_i$ being diagonal and off-diagonal respectively. Note that $U^{-1}aU$ is a diagonal matrix. Hence $P_{2j}$ can be derived in terms of $b$, $b_x$, $v$ by (\ref{E:2j}) once $ w_i\ne\pm w_j$.
Equating the $\lambda^{(2j+1)({2n})-1}$-coefficients of the diagonal part of (\ref{E:characteristic}), we conclude $T_{2j}=0$. Hence  the lemma is done if $j=0$. As a result,  we can assume $j>0$ in the following proof.

We are going to prove the lemma for $0\le s\le 2j-1$ by induction. Similarly, $P_{s}$ can be derived in terms of $\partial_x^\alpha b$, $\partial_x^\beta v$, $0\le\alpha,\, \beta \le 2j-s+1$, by (\ref{E:s}), (\ref{E:0}) and the induction hypothesis. Using  (\ref{E:characteristic})  and equating the $\lambda^{(2j+1)({2n})-(2j+1-s)}$-coefficients of the diagonal part of  (\ref{E:characteristic}), we obtain 
\[
 T_{s}{\textit{diag}}(\prod_{1\le k\le 2n,\,k\ne 1}(\lambda_1-\lambda_k),  \cdots,\prod_{1\le k\le 2n,\,k\ne 2n}(\lambda_{2n}-\lambda_k))=F_s(T_\alpha, P_\beta),
\]
Here ${\textit{diag }}(\lambda_1, \lambda_2,\cdots,\lambda_{2n})={\textit{diag }}(\tilde w_1,\cdots, \tilde w_n, -\tilde w_1,\cdots, -\tilde w_n)$, the entries of $F_s$ are fixed polynominal functions of those of $T_\alpha$, $P_\beta$, and $s+1\le\alpha\le 2j+1, \, s\le\beta\le 2j+1$. 
Therefore, the lemma is justified if $ \tilde w_i\ne\pm \tilde w_j$.
\end{proof}

Via the algorithm provided in the proof of Theorem \ref{T:pde} and the Maple $7$ software, we derive
\begin{example}\label{Ex:3rd} 
For $(n,q)=(2,0)$, $a=\tilde a$ ($w_1=\tilde w_1=1$, $w_2=\tilde w_2=2$), \textbf{the first flow} is the trivial linear system
\[
\partial_t u=\partial_x u,\quad \partial_t \omega=\partial_x \omega,
\]
where
\begin{eqnarray}
b&= &\left(\begin{array}{cccc}
1 &0&0& 0\\
0&1&0&0\\
0 &0&\cos u(x,t)& \sin u(x,t)\\
0&0&-\sin u(x,t)& \cos u(x,t)
\end{array}\right),\label{E:hier-b}\\
v&=& \left(\begin{array}{cccc}
0& -\omega(x,t)&0&0\\
\omega(x,t)& 0&0&0\\
0 & 0&0&0\\
0&0&0&0
\end{array}\right),\label{E:hier-v'}
\end{eqnarray}
and \textbf{the third flow} of the twisted $\frac {O(J,J)}{O(J)\times O(J)}$-hierarchy (twisted by $\sigma_1$) is the $4$th order partial differential system
\begin{eqnarray}
\partial_t u &=&\frac  1{18}\{\ 10\partial_x^3u+\partial_xu\left[5(\partial_xu)^2-12\omega\partial_xu+180\left(\cos u\right)^2-90+15\omega^2\right]\label{E:4-th-1}\\
&&-8\partial_x^2\omega-4\omega^3+\omega\left[24-48\left(\cos u\right)^2\right]\ \},\nonumber\\
\partial_t \omega &=&-\frac  49\partial_x^4u+\partial_x^2u[-\frac 23\left(\partial_x u\right)^2+\frac 53\omega\partial_xu-\frac {40}3\left(\cos u\right)^2+\frac{20}3-\frac 23 \omega^2 ]   \label{E:4-th-2}  \\
&&-32\sin u\left(\cos u\right)^3+16\cos u\sin u\nonumber\\
&&+\partial_xu[\frac {40}3\left(\partial_x u\right)\cos u\sin u+\frac 56\left(\partial_xu\right)\left(\partial_x\omega\right)-\frac 43\omega\partial_x\omega]\nonumber\\
&&+\frac 59\partial_x^3 \omega+\partial_x\omega[\frac 56\omega^2-5+10\left(\cos u\right)^2]-\frac 83\omega^2\cos u\sin u.\nonumber
\end{eqnarray}
Here the associated Lax pair (\ref{E:j-thflow}) is
\[
\left[\partial_x+bab^{-1}\lambda+v+\sigma_1(bab^{-1})\frac 1\lambda, \partial_t+\sum_{s=1}^3Q_s\lambda^s+ Q_0+\sum_{s=1}^3\sigma_1(Q_s)\lambda^{-s}\right]=0,
\]with $b$, $v$ defined by (\ref{E:hier-b}), (\ref{E:hier-v'}), 
\begin{eqnarray*}
Q_3 &=&\left(\begin{array}{cccc}
0&0&\cos u & -\sin u\\
0&0&2\sin u & 2\cos u\\
\cos u & 2\sin u&0&0\\
-\sin u & 2\cos u&0&0
\end{array}
\right),\\
Q_2 &=&\left(\begin{array}{cccc}
0 & -\omega&0&0\\
\omega & 0&0&0\\
0&0&0 & \partial_x u\\
0&0&-\partial_x u & 0
\end{array}
\right),\\
Q_1 &=&\left(\begin{array}{cccc}
0&0&\alpha_{11} & \alpha_{12}\\
0&0&\alpha_{21}& \alpha_{22}\\
\alpha_{11} & \alpha_{21}&0&0\\
\alpha_{12} & \alpha_{22}&0&0
\end{array}
\right),\\
Q_0& =&\left(\begin{array}{cccc}
0 & -\beta&0&0\\
\beta & 0&0&0\\
0&0&0 & 0\\
0&0&0 & 0
\end{array}
\right),
\end{eqnarray*}
and
\begin{eqnarray*}\alpha_{11}&=&\frac 13 \left(\sin u\right)\partial_x^2u+\partial_xu\left[-\frac 16\left(\cos u\right)\partial_x u+\frac 23\omega\cos u\right]\\
&&-2\left(\cos u\right)^3+2\cos u-\frac 16\omega^2\cos u-\frac 23\left(\sin u\right)\partial_x\omega,\\
\alpha_{12}&=&\frac 13 \left(\cos u\right)\partial_x^2u+\partial_xu\left[\frac 16\left(\sin u\right)\partial_x u-\frac 23\omega\sin u\right]\\
&&+\frac 16\omega^2\sin u+2\left(\cos u\right)^2\sin u-\frac 23\left(\cos u\right)\partial_x\omega,
\\
\alpha_{21}&=&\frac 23 \left(\cos u\right)\partial_x^2u+\partial_xu\left[\frac 13\left(\sin u\right)\partial_x u-\frac 13\left(\sin u\right)\omega\right]\\
&&+\frac 13\omega^2\sin u+4\left(\cos u\right)^2\sin u-\frac 13\left(\cos u\right)\partial_x\omega,\\
\alpha_{22}&=&-\frac 23 \left(\sin u\right)\partial_x^2u+\partial_xu\left[\frac 13\left(\cos u\right)\partial_x u-\frac 13\omega\cos u\right]\\
&&+4\left(\cos u\right)^3-4\cos u+\frac 13\omega^2\cos u+\frac 13\left(\sin u\right)\partial_x\omega,
\\
\beta &=& -\frac 49\partial_x^3u+\partial_xu\left[-\frac 29\left(\partial_xu\right)^2+\frac 56\omega\partial_xu-8\left(\cos u\right)^2+4-\frac 23\omega^2\right]\\
&&+\frac 59\partial_x^2\omega-\frac 53\omega+\frac 5{18}\omega^3+\frac {10}3 \omega\left(\cos u\right)^2.
\end{eqnarray*}
\end{example}

\begin{example}\label{Ex:1st-nonlinear}
If  $(n,q)=(2,0)$, and define $a$, $\tilde a$ by $w_1=1$, $w_2=2$, $\tilde w_1=2$, $\tilde w_2=1$, $b$, $v'$ by (\ref{E:hier-b}),  (\ref{E:hier-v'}), then the first flow is the sine-Gordon equation
\[
\partial_t^2 u-\partial_x^2 u=12\sin u\cos u,\quad \partial_t u= \omega.
\]
\end{example}

\begin{remark}\label{R:sigmar}
For $i\in\{0,\,1,\,\cdots,\,n-1\}$,  replacing $\sigma_1$ by $\sigma_i$, $
\sigma_i(x)=I_{n+i,n-i}xI_{n+i,n-i}^{-1}$, and $\sigma_1(f(1/\lambda))$ either by $
\sigma_i(f(1/\lambda))$ or by $\sigma_i(f(-1/\lambda))$, 
we can generalize the definition of twisted hierarchies by analogy.
\end{remark}

\section{The $1$-dimensional system}\label{S:example}
We discuss associated submanifold geometry of twisted $\frac {O(J,J)}{O(J)\times O(J)}$-flows. 
A $1$-dimensional system is constructed by putting all first flows together in a soliton hierarchy. 
Many $1$-dimensional systems are the Gauss-Codazzi equations for submanifolds in space forms or symmetric spaces with special geometric properties. For instance, the Gauss-Codazzi equations for isothermic surfaces in $\mathbb R^3$ is the $1$-dimensional system of the $\frac {O(4,1)}{O(3)\times O(1,1)}$-hierarchy \cite{CGS95}, \cite{BDPT02}, \cite{B06}. Other interesting examples can be found in  \cite {T07}. Similarly, for the $1$-dimensional twisted $\frac {O(J,J)}{O(J)\times O(J)}$-system, one has:
\begin{definition}\label{D:1Dsystem}
The $1$-dimensional twisted $\frac {O(J,J)}{O(J)\times O(J)}$-system 
 (twisted by $\sigma_1$) is the compatibility condition of
\begin{gather}
\left[\partial_{x_i}+\hat\pi_{+}\left(M  J_{a_i,1}M^{-1}\right), \partial_{x_j}+\hat\pi_{+}\left(M  J_{a_j,1}M^{-1}\right)\right]=0,\ 1\le i,\,j\le n
\label{E:1Dsystem1}
\end{gather}
for some $M=M(x_1,\cdots,x_n,\lambda)\in L^{\sigma_0}_{-}$,  and 
\begin{equation}
a_i=
\left(\begin{array}{lr}
0 & e_i
\\
e_i & 0
\end{array}\right),\quad e_i=\textsl{diag}(0,\cdots,0,\stackrel{i-th\, entry}{1},0,\cdots,0)\in gl(n,\mathbb C).\label{E:ai}
\end{equation}
\end{definition}

\begin{example}\label{Ex:sin} (\textbf{The sine-Gordon equation}) 
The $1$-dimensional twisted $\frac {O(J,J)}{O(J)\times O(J)}$-system (twisted by $\sigma_1$) with $(n,q)=(2,0)$ is the sine-Gordon equation.  \end{example}
\begin{proof}
In this casse, we have $j=0$, $J $ in (\ref{E:J}) is the $2\times 2$ identity matrix, and
\begin{eqnarray*}
O(J,J) = O(2,2),& &o(J,J) = o(2,2),\\
K_0=O(2)\times O(2),& &K_1=O(2,1)\times 1_1,\\
\mathcal S_0=o(2)+ 0_2,& & K_0'=	1_2\times O(2).
\end{eqnarray*}
Let 
\begin{gather}
a=a_1=\left(\begin{array}{cc}
0 & {\begin{array}{cc}
1& 0\\
0&0
\end{array}}\\
{\begin{array}{cc}
1& 0\\
0&0
\end{array}} & 0
\end{array}\right), \ 
\tilde a=a_2=\left(\begin{array}{cc}
0 & {\begin{array}{cc}
0& 0\\
0& 1
\end{array}}\\
{\begin{array}{cc}
0& 0\\
0& 1
\end{array}} & 0
\end{array}\right)\label{E:assump}
\end{gather}
in the Lax pair (\ref{E:j-thflow}) and write
\begin{eqnarray}
&&v_1=
\left(\begin{array}{cc}
{\begin{array}{cr}
0& -\alpha\\
\alpha & 0
\end{array}}& 0\\
0 & 0
\end{array}\right),\quad v_2=
\left(\begin{array}{cc}
{\begin{array}{cr}
0& -\beta\\
\beta & 0
\end{array}}& 0\\
0 & 0
\end{array}\right)\in \mathcal S_0,\label{E:v}
\\
{}&&\nonumber\\
&&b= \left(\begin{array}{cc}
1 & 0\\
0 &{\begin{array}{rr}
\cos\frac u2& \sin\frac u2\\
-\sin\frac u2& \cos\frac u2
\end{array}} 
\end{array}\right)\hskip.5in\in K_0'\label{E:b}
\end{eqnarray}
in (\ref{E:pi}) and 
$\hat\pi_+\left(M  J_{ a_2,1}M^{-1}\right)=ba_2b^{-1}\lambda+v_2+\sigma_1(ba_2b^{-1})\frac 1\lambda$. 
 Equating the $\lambda$-coefficients of (\ref{E:j-thflow}) , we then derive
\begin{gather}
\partial_t(ba_1b^{-1})-\partial_x(ba_2b^{-1})+	\left[v_2,ba_1b^{-1}\right]-\left[v_1,ba_2b^{-1}\right]=0,\label{E:cc1}\\
\partial_t v_1-\partial_x v_2-	\left[v_1,v_2\right]-\left[\sigma_1(ba_1b^{-1}),ba_2b^{-1}\right]+\left[\sigma_1(ba_2b^{-1}),ba_1b^{-1}\right]=0.\label{E:cc0}
\end{gather}
Plugging (\ref{E:v}), (\ref{E:b}) into (\ref{E:cc1}), (\ref{E:cc0}), we obtain
\begin{eqnarray*}
&&\alpha=\frac 12 \partial_t u,\qquad 
\beta=\frac 12 \partial_x u,\qquad
\partial_x \beta-\partial_t \alpha=2\sin u.
\end{eqnarray*}
Hence the $1$-dimensional twisted $\frac {O(2,2)}{O(2)\times O(2)}$-system (twisted by $\sigma_1$) is the sine-Gordon equation
\begin{equation}
u_{xx}-u_{tt}=4\sin u.\label{E:SG}
\end{equation}
\end{proof}

\begin{example}\label{Ex:sinh} (\textbf{The sinh-Gordon equation}) The $1$-dimensional twisted $\frac {O(J,J)}{O(J)\times O(J)}$-system (twisted by $\sigma_1$) with $(n,q)=(2,1)$ is the sinh-Gordon equation.

\end{example}
\begin{proof} We have $j=0$, $ J=\left(\begin{array}{cc}-1&0\\0&1\end{array}\right)$, and
\begin{eqnarray*}
K_0=O(1,1)\times O(1,1),& &K_1=O(1,2)\times 1_1,\\
\mathcal S_0=o(1,1)\times 0_2,& &K_0'=	1_2\times O(1,1).
\end{eqnarray*}
Let 
\begin{eqnarray*}
&&a=a_1=\left(\begin{array}{cc}
0 & {\begin{array}{cc}
1& 0\\
0&0
\end{array}}\\
{\begin{array}{cc}
1& 0\\
0&0
\end{array}} & 0
\end{array}\right), \ 
\tilde a=a_2=\left(\begin{array}{cc}
0 & {\begin{array}{cc}
0& 0\\
0& 1
\end{array}}\\
{\begin{array}{cc}
0& 0\\
0& 1
\end{array}} & 0
\end{array}\right)
\\
&&v_1=
\left(\begin{array}{cc}
{\begin{array}{cr}
0& \alpha\\
\alpha & 0
\end{array}}& 0\\
0 & 0
\end{array}\right),\quad v_2=
\left(\begin{array}{cc}
{\begin{array}{cr}
0& \beta\\
\beta & 0
\end{array}}& 0\\
0 & 0
\end{array}\right)\in \mathcal S_0=o(1,1)\times 0,
\\
{}&&\\
&&b= \left(\begin{array}{cc}
1 & 0\\
0 &{\begin{array}{rr}
\cosh\frac u2& \sinh\frac u2\\
\sinh\frac u2& \cosh\frac u2
\end{array}} 
\end{array}\right)\hskip.5in\in K_0'=1\times O(1,1).
\end{eqnarray*}
So a similar argument yields
\begin{eqnarray*}
&&\alpha=-\frac 12 \partial_t u,\qquad 
\beta=-\frac 12 \partial_x u,\qquad
\partial_x \beta-\partial_t \alpha=-2\sinh u
\end{eqnarray*}
and
\begin{equation}
u_{xx}-u_{tt}=4\sinh u.\label{E:ShG}
\end{equation}
\end{proof}

\begin{example}\label{Ex:twisteduk1} (\textbf{The generalized sine-Gordon equation}) 
The $1$-dimensional twisted $\frac {O(J,J)}{O(J)\times O(J)}$-system  (twisted by $ \sigma_1$) with $(n,q)=(n,0)$ is the Gauss-Codazzi equation for an $n$-dimensional submanifold of constant sectional curvature $-1$ in $\mathbb R^{2n-1}$, i.e. the generalized sine-Gordon equation (GSGE).
\end{example}
\begin{proof} We give an alternative proof (cf  \cite{T07}). The Gauss-Codazzi equation for an $n$-dimensional submanifold in $\mathbb R^{2n-1}$ of constant sectional curvature $-1$ is the generalized sine-Gordon equation (GSGE) (\ref{E:GSGE1})-(\ref{E:GSGE4}). 
Moreover, the B$\ddot {\rm a}$cklund transformation for the GSGE is constructed by showing that
\begin{eqnarray}
	&& dX-X\omega=D_\lambda \alpha \delta-X\delta \alpha^t D_\lambda  X	\label{E:Backlund}
\end{eqnarray}
gives a new solution to the GSGE if $\alpha$ is a given solution of the GSGE. Here
\begin{eqnarray*}
	&& \omega=\delta F-F^t \delta ,\qquad \delta=\sum_{j=1}^n e_jdx_j,\qquad D_\lambda =\frac 12(\lambda I-\frac 1\lambda I_{1,n-1})
\end{eqnarray*}
$F=(f_{ij})$ is defined by (\ref{E:GSGE2}), $e_j$ is defined as in (\ref{E:ai}),  and $I$ is the $n\times n$ identity matrix \cite{TT80}, \cite{Ter80}, \cite{ABT86}. Later the B$\ddot {\rm a}$cklund transformation (\ref{E:Backlund}) is linearized (so is the GSGE (\ref{E:GSGE1})-(\ref{E:GSGE4})) by the Lax pair
\begin{gather}
\partial_{x_j}\tilde \Psi = \left(\frac 12\lambda A_j+\frac 1{2\lambda}B_j+C_j\right)\tilde\Psi,\label{E:Lax0}
\end{gather}
with $A_j = \left(
\begin{array}{cc}
0 & \alpha e_j\\
(\alpha e_j)^t & 0
\end{array}
\right)$, $B_j = \left(
\begin{array}{cc}
0 & -I_{1,n-1}\alpha e_j\\
-(\alpha e_j)^tI_{1,n-1} & 0
\end{array}
\right)$, $C_j=\left(
\begin{array}{lr}
0 & 0\\
0 & \gamma_j
\end{array}
\right)$
, $\gamma_j=C(\frac{\partial}{\partial x_j})\in o(n)$, and solved by the inverse scattering method \cite{ABT86}. By a permutation, 
\begin{gather*}
Y\in gl(2n,\mathbb C)\quad\longmapsto \quad PYP^t,\\
P=\left(
\begin{array}{cc}
0 &\iota\\
I & 0
\end{array}\right),\quad \iota (e_{n-j})=e_{j+1},\ 0\le j\le n-1,
\end{gather*}and a change of coordinates
\begin{eqnarray*}
x_{n-j}&\quad\longmapsto \quad &  2x_{j+1},\ 1\le j\le n,\\
\lambda &\quad\longmapsto \quad &  -\lambda,
\end{eqnarray*}
the Lax pair (\ref{E:Lax0}) can be written as
\[
\partial_{x_j}\Psi = -( ba_jb^{-1}\lambda+v_j+\sigma_1(ba_jb^{-1})\frac 1\lambda\ )\Psi,
\]
where $a_j$ is defined by (\ref{E:ai}), and
\begin{eqnarray*}
&b=\left(
\begin{array}{cc}
I & 0\\
0 & g
\end{array}\right)\in K_0'=1\times O(n),& g=  \alpha \iota ,\\
&v_j=\left(
\begin{array}{cc}
u_j & 0\\
0 & 0
\end{array}\right)\in\mathcal S_0=o(n)\times 0,& u_j=-2\iota  \gamma_j\iota. 
\end{eqnarray*}
\end{proof}

\begin{example}(\textbf{The generalized sinh-Gordon equation})\label{Ex:twistedmw} 
The $1$-dimensional twisted $\frac {O(J,J)}{O(J)\times O(J)}$-system (twisted by $ \sigma_1$) with $(n,q)=(n,1)$  is the Gauss-Codazzi equation for an $n$-dimensional time-like  submanifold of constant sectional curvature $1$ in $\mathbb R^{2n-1}_1$ which possesses simultaneously diagonizable first and second fundamental forms.
\end{example}

To prove the statement of Example \ref{Ex:twistedmw}, we first show: 
\begin{theorem}\label{T:GC-sinhG}
Suppose $M$ is a time-like $n$-dimensional submanifold of constant sectional curvature $1$ in $\mathbb R^{2n-1}_1$. Suppose that there exist local coordinates $x_1,x_2,\cdots, x_n$ on a neighborhood of $p\in M$,
an $O(1,n-1)$-valued map $A=(a^i_j)$, and parallel normal frames $e_{n+1}, \cdots, e_{2n-1}$
such that
the first and second fundamental forms are of the form
\begin{equation}\label{E:I-II}
\mathrm{I}=\sum_{i=1}^n \epsilon_i (a^1_i)^2 dx_i\otimes dx_i,
\quad \mathrm{II}=\sum_{\lambda=2}^n\sum_{i=1}^{n} \epsilon_i a^1_i a^{\lambda}_i dx_i\otimes dx_i e_{n+\lambda-1},
\end{equation}
where $\mathrm{diag}(\epsilon_1, \epsilon_2, \cdots, \epsilon_n)=\mathrm{diag}(-1, 1, \cdots, 1)={I}_{1,n-1}=J$. Then the Gauss-Codazzi equation and the structure equation is the $1$-dimensional twisted $\frac {O(J,J)}{O(J)\times O(J)}$-system (\ref{E:1Dsystem1})
with
\begin{equation}\label{E:gav}
\begin{split}
&\hat\pi_+\left(M  J_{a,i}M^{-1}\right)=ba_ib^{-1}\lambda+v_i+\sigma_1(ba_ib^{-1})\frac 1\lambda,\\
&g=b^{-1}=\left(
    \begin{array}{cc}
      \mathrm{I}_n & 0 \\
      0 & A^t \\
    \end{array}
  \right): \mathbb{R}^n \rightarrow K_0^{\prime}=\mathrm{I}\times O(1,n-1),\\
&a_i=\frac 12\left(
    \begin{array}{cc}
      0 & e_{i} \\
      e_{i} & 0 \\
    \end{array}
  \right),\quad \textit{$e_i$ defined by (\ref{E:ai}) }\\
& v_i=\left(
        \begin{array}{cc}
          u_i & 0 \\
          0 & 0 \\
        \end{array}
      \right): \mathbb{R}^n \rightarrow \mathcal{S}_0=o(1,n-1)\oplus 0, \quad 1\leq i\leq n.
\end{split}
\end{equation} Moreover, writing $A=\left(a^i_j\right)$, the $1$-dimensional twisted $\frac {O(J,J)}{O(J)\times O(J)}$-system can be written as
\begin{eqnarray}
	&A\in O(1,n-1),&\label{E:GSHGE1}\\
	&\partial_{x_j}a_{i}^k=a_{j}^kf_{ij},\ f_{ii}=0, & i\ne j, \label{E:GSHGE2}\\
	&\epsilon_j\partial_{x_j}f_{ij}+\epsilon_i\partial_{x_i}f_{ji}+\sum_{k\ne i,\,j}\epsilon_kf_{ik}f_{jk}=-a_{i}^1 a_{j}^1,& i\ne j\label{E:GSHGE3}\\
	&\partial_{x_k}f_{ij}=f_{ik}f_{kj},& i,j,k\ \textsl{ distinct}\label{E:GSHGE4}
\end{eqnarray}
where $1\le i,\,j,\,k\le n$. We call the nonlinear system (\ref{E:GSHGE1})-(\ref{E:GSHGE4}) the \textbf{generalized sinh-Gordon equation}.
\end{theorem}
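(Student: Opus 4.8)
The plan is to derive the classical submanifold structure equations for a time-like $n$-dimensional $M\subset\mathbb R^{2n-1}_1$ of constant sectional curvature $1$, and then recognize those equations, after reorganizing the frame data, as the $\lambda$-coefficients of the flatness condition (\ref{E:1Dsystem1}). First I would set up an adapted moving frame $e_1,\dots,e_n,e_{n+1},\dots,e_{2n-1}$ along $M$ with $e_1,\dots,e_n$ tangent (the metric $\langle e_i,e_j\rangle=\epsilon_i\delta_{ij}$) and $e_{n+1},\dots,e_{2n-1}$ a parallel normal frame, and write the Maurer-Cartan form of the $O(1,2n-2)$-valued frame. Using the hypothesis (\ref{E:I-II}) on the shape of $\mathrm I$ and $\mathrm{II}$, the connection one-forms and the normal-bundle one-forms become explicit: the tangential part is encoded by $A=(a^i_j)\in O(1,n-1)$ and a matrix $F=(f_{ij})$ of ``rotation coefficients'' with $f_{ii}=0$, while the parallel-normal condition kills the normal connection. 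Writing out the first structure equation (torsion-free) gives (\ref{E:GSHGE2}), the flatness of the normal connection together with Codazzi gives (\ref{E:GSHGE4}) and part of (\ref{E:GSHGE3}), and the Gauss equation with constant curvature $+1$ supplies the curvature term $-a^1_ia^1_j$ in (\ref{E:GSHGE3}); the sign and the placement of the distinguished index $1$ come from $J=I_{1,n-1}$, i.e. from $\epsilon_1=-1$. This is the standard Gauss-Codazzi-Ricci computation specialized to diagonalized fundamental forms, and it yields (\ref{E:GSHGE1})-(\ref{E:GSHGE4}).

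Next I would run the computation in reverse on the loop-group side. Substituting the stated $b=g^{-1}$, $a_i$, $v_i$ from (\ref{E:gav}) into $\hat\pi_+(MJ_{a_i,1}M^{-1})=ba_ib^{-1}\lambda+v_i+\sigma_1(ba_ib^{-1})\tfrac1\lambda$ and equating the coefficients of $\lambda$, $\lambda^0$, $\lambda^{-1}$ in the compatibility condition (\ref{E:1Dsystem1}) produces (as in the proofs of Examples \ref{Ex:sin}, \ref{Ex:sinh} and \ref{Ex:twisteduk1}) three matrix equations:
\begin{gather*}
\partial_{x_i}(ba_jb^{-1})-\partial_{x_j}(ba_ib^{-1})+[v_j,ba_ib^{-1}]-[v_i,ba_jb^{-1}]=0,\\
\partial_{x_i}v_j-\partial_{x_j}v_i-[v_i,v_j]-[\sigma_1(ba_ib^{-1}),ba_jb^{-1}]+[\sigma_1(ba_jb^{-1}),ba_ib^{-1}]=0.
\end{gather*}
Because $b=\mathrm{diag}(I_n,A^t)$ is block diagonal and $a_i$ is purely off-diagonal while $v_i$ is purely block-diagonal, these expand into equations living in the various blocks: the off-diagonal block of the $\lambda^{\pm1}$-equation recovers (\ref{E:GSHGE2}) and, with $f_{ij}$ identified as the appropriate entries of $A^t\partial_{x_j}(A^t)^{-1}$ restricted off-diagonal, (\ref{E:GSHGE4}); the $o(1,n-1)$-block of the $\lambda^0$-equation, where the bracket $[\sigma_1(ba_ib^{-1}),ba_jb^{-1}]$ contributes the rank-one term, gives (\ref{E:GSHGE3}) with exactly the $-a^1_ia^1_j$ on the right-hand side (this is where $\sigma_1$, hence the $I_{1,n-1}$-twist, enters, mirroring the $D_\lambda=\tfrac12(\lambda I-\tfrac1\lambda I_{1,n-1})$ of (\ref{E:Backlund})). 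Matching the two sets of equations term-by-term establishes the equivalence; the $\sigma_0$-parity and $\sigma_1$-reality constraints defining $\mathcal L^{\sigma_0}_+$ are automatically satisfied by the chosen $a_i$, $v_i$, $b$, just as in Example \ref{Ex:twisteduk1}.

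The main obstacle, and the step that needs the most care, is the bookkeeping around the distinguished index and the signs: tracking how $\epsilon_1=-1$ propagates through the Gauss equation (curvature $+1$ rather than $-1$, time-like rather than space-like), through the identification $f_{ij}\leftrightarrow$ entries of the $O(1,n-1)$-connection, and through the $\sigma_1$-twist so that the inner products $a^1_ia^1_j$ appear with the correct sign and in the correct block. A subsidiary point is verifying that the hypothesis that $A$ be $O(1,n-1)$-valued (not merely $GL$-valued) is exactly what makes $\mathrm I$ nondegenerate of the right signature and makes $b\in K_0'=\mathrm I\times O(1,n-1)$, so the construction stays inside the loop group $L^{\sigma_0}_-$; and one should note that the parallel-normal-frame hypothesis is what allows the normal connection to be gauged away, matching the fact that $v_i$ has no normal-block component. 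Once the dictionary between the geometric frame data $(A,F)$ and the loop-algebra data $(b,v_i)$ is fixed, everything reduces to the same elementary $\lambda$-coefficient comparison already carried out in the $n=2$ examples, now done with general $n$ and with $q=1$.
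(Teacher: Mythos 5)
Your proposal is correct and follows essentially the same route as the paper: derive the Gauss--Codazzi equations for the given adapted frame and fundamental forms, compute the flatness of the $\lambda$-dependent connection $\hat\pi_+(MJ_{a_i,1}M^{-1})$ with the stated $b=g^{-1}$, $a_i$, $v_i$, and identify the two systems by matching the rotation coefficients $F=(f_{ij})$ with the loop-algebra data. The only cosmetic difference is that the paper carries out Step $2$ at the level of matrix-valued $1$-forms (the flatness condition $d\theta_\lambda+\theta_\lambda\wedge\theta_\lambda=0$, producing an intermediate system in $H$ and then setting $H=F$, $u=\omega$) before extracting the scalar equations (\ref{E:GSHGE1})--(\ref{E:GSHGE4}), whereas you pass to scalar $\lambda$-coefficient equations directly, in the style of Examples \ref{Ex:sin}--\ref{Ex:twisteduk1}.
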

\begin{proof}

$\underline{\textsl{Step 1: the Gauss-Codazzi equation}}$

To write down the Gauss-Codazzi equations for these immersions we set
\begin{eqnarray}
&&\omega^i= a^1_i dx_i, \nonumber\\
&& \omega_{i}^{n+\lambda-1}= \epsilon_i a^{\lambda}_i dx_i,\nonumber
\end{eqnarray}
where $1\leq i,\ j\leq n$, $2\leq \lambda \leq n$. Hence by the structure equations
\[
d\omega^i=\omega^j\wedge \omega^i_j, \quad \epsilon_i\omega^i_j+\epsilon_j \omega^j_i=0,\label{E:omega-ij}\\
\]$\omega^i_j=f_{ij}dx_i-\epsilon_i\epsilon_j f_{ji}dx_j$,
where
\begin{equation}\label{E:a1j}
f_{ij}=\left\{
        \begin{array}{ll}
          \frac{(a^1_i)_{x_j}}{a^1_j}, & i\neq j; \\
          0, & i=j.
        \end{array}
      \right.
\end{equation}
Set $F=(f_{ij})$ and $\delta=\sum_{j=1}^n e_jdx_j$. Then
\begin{equation*}
\omega=(\omega^{i}_{j})_{1\leq i,j\leq n}=\delta F-JF^t\delta J
\end{equation*}
is the Levi-Civita $o(1,n-1)$-connection of the induced pseudo-Riemannian metric $\mathrm{I}$. The Gauss-Codazzi equation and the structure equation give
\begin{equation}\label{E:alj}
\left\{
\begin{array}{l}
d\omega+\omega\wedge \omega=\delta A^t e_1\wedge (AJ \delta )=-\delta A^t e_1\wedge (JAJ \delta ),\\
(a^{\lambda}_i)_{x_j}= a^{\lambda}_j f_{ij}, \quad  1\leq i, j \leq n, i\neq j, 2\leq \lambda \leq n.
\end{array}
\right.
\end{equation}
Moreover, it follows from (\ref{E:a1j}), (\ref{E:alj}) that
\begin{equation}\label{eq:a^k_i_j}
(a^{k}_i)_{x_j}= a^{k}_j f_{ij}, \quad  1\leq i, j,k \leq n, \ i\neq j.
\end{equation}
On the other hand, since $A=(a^i_j)\in O(1,n-1)$,
$$\sum_j \epsilon_j (a^k_j)^2=\epsilon_k.$$
Taking differential with respect to $x_i$ on the above equality, we get
$$\epsilon_ia^k_i(a^k_i)_{x_i}=-\sum_{j\neq i} \epsilon_j a^k_j (a^k_j)_{x_i}.$$
It follows from \eqref{eq:a^k_i_j} that
\begin{equation}\label{eq:a^k_i_i}
(a^k_i)_{x_i}=-\epsilon_i \sum_{j\neq i} \epsilon_j a^k_j f_{ji}.
\end{equation}
Then \eqref{eq:a^k_i_j} and \eqref{eq:a^k_i_i} can be expressed as
$$A^{-1}dA=\delta F^t -J F\delta J.$$

Summarize, $A=(a_{j}^i)$ satisfies the following second order PDE system:
\begin{equation}
\left\{
\begin{array}{ll}
d\omega+\omega\wedge \omega=\delta A^t e_1 \wedge AJ\delta=-\delta A^t e_1\wedge (JAJ\delta),\\
A^{-1}dA=\delta F^t -J F\delta J,\\
\text{ where } \omega=\delta F-JF^t \delta J,
\end{array}
\right.\label{E:GC-1}
\end{equation}

$\underline{\textsl{Step 2: the $1$-dimensional twisted $\frac{O(J,J)}{O(J)\times O(J)}$-system}}$

Define 
\begin{eqnarray}
\theta_{\lambda}&=&\sum_{i=1}^n ((g^{-1}a_i g)\lambda + v_i +\sigma_1(g^{-1} a_i g)\lambda^{-1})dx_i\label{E:flat-1system-1}\\
&=&\frac{\lambda}{2}\left(
                                    \begin{array}{cc}
                                      0 & \delta A^t \\
                                     JAJ\delta  & 0 \\
                                    \end{array}
                                  \right)
 + \left(
     \begin{array}{cc}
       u & 0 \\
       0 & 0 \\
     \end{array}
   \right)
- \frac{\lambda^{-1}}{2}\left(
                          \begin{array}{cc}
                            0 & \delta A^t J \\
                            AJ\delta & 0 \\
                          \end{array}
                        \right),\label{E:flat-1system-2}
\end{eqnarray}
where  $u=\sum_{i=1}^n u_i dx_i$, $A\in O(1,n-1)$. So (\ref{E:flat-1system-1}) implies that the $1$-dimensional twisted $\frac {O(J,J)}{O(J)\times O(J)}$-system is equivalent to the flatness condition
\begin{equation}
d\theta_\lambda+\theta_\lambda\wedge \theta_\lambda=0.\label{E:flatness}
\end{equation}On the other hand, by (\ref{E:flat-1system-2}), the flatness condition (\ref{E:flatness}) is equivalent to $(A,u)$ satisfying the following system
\begin{equation*}
\left\{
  \begin{array}{l}
     -\delta \wedge dA^t + u\wedge \delta A^t =0 \Leftrightarrow \delta\wedge (JA^{-1}dAJ)+ u\wedge \delta=0, \quad {(*)}\\
     (JA^{-1}dAJ)\wedge \delta+ \delta\wedge u=0 \Leftrightarrow (*),\\
    du+u\wedge u+\delta A^t e_1\wedge JAJ\delta=0.
  \end{array}
\right.
\end{equation*}
The first equation implies that there exists $H=(h_{ij})$ with $h_{ii}=0$ for all $1\leq i\leq n$ such that
$u=\delta H-JH^t \delta J$, $JA^{-1}dA J=J\delta H^t J -H\delta$ $\Leftrightarrow$ $A^{-1}dA=\delta H^t-JH\delta J$.
Thus the $1$-dimensional twisted $\frac{O(J,J)}{O(J)\times O(J)}$-system is given by the following PDEs:
\begin{equation}\label{eq:twistedsystem}
\begin{split}
&u=\delta H-JH^t \delta J, \\
&A^{-1}dA=\delta H^t-JH\delta J, \\
&du+u\wedge u+\delta A^t e_1\wedge JAJ\delta=0.
\end{split}
\end{equation} Comparing (\ref{E:GC-1}) with (\ref{eq:twistedsystem}), the first assertion of Theorem \ref{T:GC-sinhG} is proved by setting $H=F$, $u=\omega$.

$\underline{\textsl{Step 3: the generalized sinh-Gordon equation}}$

Formula (\ref{E:GSHGE2}) is exactly (\ref{eq:a^k_i_j}). Taking the coefficients of $dx_i\wedge dx_k$ of the $ij$-entry of both sides of the first equation of (\ref{E:GC-1}), we obtain (\ref{E:GSHGE4}). Similarly, (\ref{E:GSHGE3}) can be derived by taking the coefficients of $dx_i\wedge dx_j$ of the $ij$-entry of both sides of the first equation of (\ref{E:GC-1}).  
\end{proof}

\begin{proof} $\underline{\textsl{of {\bf Example \ref{Ex:twistedmw}} }}$:

Theorem \ref{T:GC-sinhG} reduces the proof to showing the existence of such $n$-dimensional submanifolds.  Note Theorem \ref{T:GSHGE} of Section \ref{S:cauchy} implies that the $1$-dimensional twisted $\frac {O(J,J)}{O(J)\times O(J)}$-system (\ref{E:1Dsystem1}) with $b$, $a_i$, $v_i$ defined by (\ref{E:gav}) can be solved in $\mathbb R^{n}$. By Theorem \ref{T:GC-sinhG}, we then conclude the solvability of the Gauss-Codazzi equation of such submanifolds. Therefore, a modifed version of the Bonnet Theorem yields the  existence of a time-like $n$-dimensional submanifold $M$ of constant sectional curvature $1$ in $\mathbb R^{2n-1}_1$, local coordinates $x_1,x_2,\cdots, x_n$ on a neighborhood of $p\in M$,
and parallel normal frames $e_{n+1}, \cdots, e_{2n-1}$ with
the first and second fundamental forms  (\ref{E:I-II}).
\end{proof}

We remark that the correspondence between the sinh-Gordon equation and the positive constant Gaussian curvature time-like surface in $\mathbb R^3_1$ has been established by Chern \cite{C81}. In the following theorem, we construct a Riccati type B$\ddot {\rm a}$cklund transformation, analogous to (\ref{E:Backlund}), of the generalized sinh-Gordon equation (\ref{E:GSHGE1})-(\ref{E:GSHGE4}). Moreover, we linearize the B$\ddot {\rm a}$cklund transformation.
\begin{theorem}
Suppose $A\in O(1,n-1)$ is a solution of the generalized sinh-Gordon equation and $\lambda$ is a non-zero \textbf{real} constant. 
Consider the linear system for
$y: \mathbb{R}^n \rightarrow \mathcal{M}_{n\times 2n}$:
\begin{equation}\label{BT}
dy=y \left(
             \begin{array}{cc}
               \omega & \delta A^t D_\lambda \\
               D_\lambda JA J\delta & 0 \\
             \end{array}
           \right),
\quad D_\lambda=\frac{1}{2}(\lambda I-\lambda^{-1}J).
\end{equation}
Then
\begin{enumerate}
	\item  System (\ref{BT}) is solvable.
	\item  If $y=(P, Q)$ is a solution of \eqref{BT} with $Q\in GL(n)$, 
then $X=-Q^{-1}P\in O(1,n-1)$ is a solution of the B$\ddot {\rm a}$cklund transformation for the generalized sinh-Gordon equation given by 
\begin{equation}\label{E:Backlund-GSHGE}
dX=X\delta A^t D_\lambda X+X\omega -D_\lambda JAJ\delta.
\end{equation}
and $X$ is again a solution of the generalized sinh-Gordon equation.
\end{enumerate}
\end{theorem}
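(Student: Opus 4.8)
The theorem is the standard "Bäcklund-via-Riccati" package, and I would prove its two parts in the order stated: first solvability of the linear system (\ref{BT}), then the claim that the associated Riccati-type quantity $X=-Q^{-1}P$ lands in $O(1,n-1)$, solves (\ref{E:Backlund-GSHGE}), and is again a solution of the generalized sinh-Gordon equation. The conceptual engine is the flatness identity already in hand: by Step 2 of Theorem \ref{T:GC-sinhG} the connection $\theta_\lambda$ of (\ref{E:flat-1system-1})--(\ref{E:flat-1system-2}) is flat precisely because $A$ solves the generalized sinh-Gordon equation, and the coefficient matrix in (\ref{BT}) is exactly $\theta_\lambda$ rewritten with $D_\lambda=\tfrac12(\lambda I-\lambda^{-1}J)$ absorbing the $\lambda$- and $\lambda^{-1}$-blocks (up to the conjugation/grading bookkeeping by $g$, $b=g^{-1}$). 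So part (1) follows immediately from $d\theta_\lambda+\theta_\lambda\wedge\theta_\lambda=0$: the overdetermined system $dy=y\,\theta_\lambda$ is compatible iff the connection $1$-form is flat, which is (\ref{E:flatness}).

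For part (2) I would proceed purely formally. Write $y=(P,Q)$, block the coefficient matrix as $\begin{pmatrix}\omega & \delta A^t D_\lambda\\ D_\lambda JAJ\delta & 0\end{pmatrix}$, and read off $dP=P\omega+Q\,D_\lambda JAJ\delta$ and $dQ=P\,\delta A^t D_\lambda$. Then differentiate $X=-Q^{-1}P$ using $d(Q^{-1})=-Q^{-1}(dQ)Q^{-1}$; substituting the two equations and simplifying (the $Q^{-1}P$ combinations collapse into $-X$) should give exactly (\ref{E:Backlund-GSHGE}), $dX=X\delta A^t D_\lambda X+X\omega-D_\lambda JAJ\delta$. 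To see $X\in O(1,n-1)$, i.e. $X^t J X=J$ (equivalently $X J X^t=J$ in the right convention), I would show that the quantity $N:=X^tJX-J$ — or whichever bilinear form the gauge group $O(1,n-1)$ preserves here — satisfies a linear homogeneous first-order ODE along any curve: differentiating $X^tJX$ with (\ref{E:Backlund-GSHGE}) and using $\omega=\delta F-JF^t\delta J\in o(1,n-1)$ (so $\omega^tJ+J\omega=0$) together with $D_\lambda^t J=JD_\lambda$ and the symmetry of the $\delta A^t D_\lambda X$ term, the $X$-dependent driving terms cancel, leaving $dN=(\text{something})\cdot N+N\cdot(\text{something})$; since $N$ can be made $0$ at a base point by choosing the initial condition for $y$ appropriately (this is where one invokes that the Bäcklund datum is constructed from a seed with $A_0\in O(1,n-1)$), uniqueness forces $N\equiv0$. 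Finally, that $X$ solves the generalized sinh-Gordon equation is automatic once (\ref{E:Backlund-GSHGE}) holds and $X\in O(1,n-1)$: (\ref{E:Backlund-GSHGE}) is precisely the statement that the gauged connection (the $\theta_\lambda$ built from $X$ in place of $A$) is obtained from the flat $\theta_\lambda$ by the $\lambda$-dependent dressing $y$, hence is itself flat for the same real $\lambda$; flatness for even one value of $\lambda\ne0$, combined with the polynomial-in-$\lambda$ structure and the splitting projections, forces the Gauss-Codazzi system (\ref{E:GSHGE1})--(\ref{E:GSHGE4}) for $X$, exactly as in the proof of Theorem \ref{T:GC-sinhG}.

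**Main obstacle.** The genuine computation — and the step most likely to need care — is verifying $X\in O(1,n-1)$: tracking the indefinite form $J=I_{1,n-1}$ through the conjugations $g=b^{-1}=\mathrm{diag}(I_n,A^t)$, the twisting by $\sigma_1$, and the $D_\lambda=\tfrac12(\lambda I-\lambda^{-1}J)$ factors, one must confirm that all the $X$-quadratic terms in $d(X^tJX)$ really do cancel and that the "constant of integration" is pinned to $J$ by the initial data. The realness of $\lambda$ is used exactly here, to keep $D_\lambda$ real so that the $o(1,n-1)$-compatibility ($D_\lambda^t J=JD_\lambda$, with $\delta$ real-diagonal) goes through without conjugation subtleties; I would flag it at the point where $X^tJX-J$ is shown to satisfy a homogeneous linear system. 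The derivation of (\ref{E:Backlund-GSHGE}) from the block ODEs and the final "again a solution" claim are, by contrast, routine given Theorem \ref{T:GC-sinhG} and the flatness of $\theta_\lambda$.
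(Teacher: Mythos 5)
Your outline is essentially the paper's own route: solvability of \eqref{BT} is exactly the flatness $d\theta_\lambda+\theta_\lambda\wedge\theta_\lambda=0$ established in Step~2 of Theorem~\ref{T:GC-sinhG}; the two block equations $dP=P\omega+QD_\lambda JAJ\delta$, $dQ=P\delta A^tD_\lambda$ and the Riccati differentiation of $X=-Q^{-1}P$ do give (\ref{E:Backlund-GSHGE}) by precisely the calculation you sketch; and the last claim is reduced to flatness of the connection built from $X$. (Incidentally, the parenthetical ``up to the conjugation by $g$'' is unnecessary: the coefficient matrix in \eqref{BT} is already equal to $\theta_\lambda$ of (\ref{E:flat-1system-2}) with $u=\omega$, no further conjugation involved.)

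Two places where your write-up and the paper differ in substance. First, you treat the verification $X\in O(1,n-1)$ as the main task and propose a conservation-law argument; the paper in fact never verifies it explicitly, and uses it tacitly (when it rewrites $X^{-1}D_\lambda JAJ\delta=JX^tD_\lambda A\delta J$, it has already assumed $X^{-1}=JX^tJ$). Your line of attack is sound: since $\theta_\lambda\in o(J,J)$, the quantity $y\tilde Jy^t=PJP^t-QJQ^t$ is a first integral, equivalently $N'=X^tJX-J$ satisfies the linear homogeneous system $dN'=\omega^tN'+N'\omega+X^tD_\lambda A\delta N'+N'\delta A^tD_\lambda X$, so $N'\equiv 0$ once it vanishes at a point. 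But note (as you flag) that this requires normalizing the initial data of $y$ so that $X(x_0)\in O(1,n-1)$ --- a hypothesis that the theorem as stated leaves implicit, in both your version and the paper's.

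Second, and this is the genuine gap in your proposal: the sentence ``flatness for even one value of $\lambda\ne 0$\ldots forces the Gauss--Codazzi system'' is not correct as stated. Flatness of a connection that is a Laurent polynomial in $\lambda$ at a \emph{single} $\lambda$ is one matrix-valued $2$-form identity, not the five coefficient identities obtained by matching powers of $\lambda$, so it does not by itself yield (\ref{E:GSHGE1})--(\ref{E:GSHGE4}). The paper closes this gap by a concrete intermediate step you omit: it introduces $\tilde F:=X^tD_\lambda A+F^t$ and $\tilde\omega:=\delta\tilde F-J\tilde F^t\delta J$, shows by direct computation (using (\ref{E:GC-1}) for the seed $A$ and $X\in O(1,n-1)$) that $X^{-1}dX=\delta\tilde F^t-J\tilde F\delta J$ and $d\tilde\omega+\tilde\omega\wedge\tilde\omega+\delta X^tD_\lambda^2\wedge(JX\delta J)=0$, and only then identifies this as the flatness of the $X$-connection and invokes Theorem~\ref{T:GC-sinhG}. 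In other words, the fixed-$\lambda$ flatness is leveraged through the definition of the new $\tilde F$, not by a free ``polynomial-in-$\lambda$'' argument. You should supply this construction of $\tilde F,\tilde\omega$ to make the final step rigorous.
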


\begin{proof} Define $\theta_\lambda$ by
\[
\theta_\lambda=\left(
             \begin{array}{cc}
               \omega & \delta A^t D_\lambda \\
               D_\lambda JA J\delta & 0 \\
             \end{array}
           \right).
           \] The assumption that
 $A$ is a solution of the generalized sinh-Gordon equation
gives the flatness of $\theta(\lambda)$ for any $\lambda\in \mathbb{C}$,
which can imply the solvability of (\ref{BT}).

To prove the second statement, let
\[d(P, Q)=(P, Q)\theta_\lambda\]
or equivalently,
\[\left\{
  \begin{array}{ll}
    dP=P\omega +QD_\lambda JAJ\delta, \\
    dQ=P\delta A^t D_\lambda.
  \end{array}
\right.
\]By a direct computation, (\ref{E:Backlund-GSHGE}) is satisfied. 

On the other hand, the assumption of $A\in O(1,n-1)$ being a solution of the generalized sinh-Gordon equation implies (\ref{E:GC-1}). Hence 
\begin{equation*}
\begin{split}
X^{-1}dX &= \delta A^t D_\lambda X+\omega-X^{-1}D_\lambda JAJ\delta\\
&=\delta A^t D_\lambda X + (\delta F-JF^t \delta J)-JX^tD_\lambda A\delta J\\
&=\delta (A^t D_\lambda X +F)-J(X^t D_\lambda A +F^t)\delta J\\
&:=\delta \tilde{F}^t -J\tilde{F} \delta J,
\end{split}
\end{equation*}
where $\tilde{F}=X^t D_\lambda A+F^t$.
Moreover, let
\begin{equation*}
\begin{split}
\tilde{\omega}&:=\delta \tilde{F}-J\tilde{F}^t \delta J\\
&= \delta(X^t D_\lambda A+F^t)-J(A^t D_\lambda X +F)\delta J\\
&= \delta X^t D_\lambda A -JA^t D_\lambda X\delta J+A^{-1}dA.
\end{split}
\end{equation*}

Then $(\tilde{\omega}, X)$ satisfy the following system:
$$
\left\{
  \begin{array}{ll}
    d\tilde{\omega}+\tilde{\omega}\wedge \tilde{\omega}+\delta X^t D_\lambda^2 \wedge (JX\delta J)=0 \\
    \tilde{\omega}=\delta \tilde{F}-J\tilde{F}^t \delta J \\
    X^{-1}dX=\delta \tilde{F}^t-J \tilde{F} \delta J,
  \end{array}
\right.
$$
where $\tilde{F}=X^t D_\lambda A+F^t$.
It is equivalent to
$$\tilde\theta_\lambda=\left(
             \begin{array}{cc}
               \tilde{\omega} & \delta X^t D_\lambda \\
               D_\lambda JX J\delta & 0 \\
             \end{array}
           \right)
$$
is flat.
By the argument in the proof of Theorem \ref{T:GC-sinhG}, $X$ is also a solution of the generalized sinh-Gordon equation.
\end{proof}

\begin{remark}\label{Ex:twisteduk2} (\textbf{The generating equation}) The Gauss-Codazzi equation for an $n$-dimensional Riemannian submanifold of constant sectional curvature $K$ with flat normal bundle in a $(2n-1)$-dimensional Riemannian or pseudo-Riemannian manifold (of index $s$) of constant sectional curvature $\overline K$ is \textbf{\textit{the generating equation }}
\begin{eqnarray*}
	&\alpha\in O(n-q,q),&\\
	&\partial_{x_j}\alpha_{ki}=\alpha_{kj}f_{ji},\ f_{ii}=0, & i\ne j, \\
	&\partial_{x_i}f_{ij}+\partial_{x_j}f_{ji}+\sum_{k\ne i,\,j}f_{ki}f_{kj}=-K\alpha_{1i}\alpha_{1j},& i\ne j\\
	&\partial_{x_k}f_{ij}=f_{ik}f_{kj},& i,j,k\ \textsl{ distinct}
\end{eqnarray*}
where $1\le i,\,j,\,k\le n$, $q=s$ if $K<\overline K$ and $q=n-(s+1)$ if $K>\overline K$ \cite{CT94}, \cite{BFT96}, \cite{Ten98}. Moreover,  the B$\ddot {\rm a}$cklund transformation for the generating equation is constructed by showing that
\begin{eqnarray}
	&& dX+X\hat J^{-1/2}C\hat J^{1/2}=\Lambda_\lambda \alpha \delta \hat J^{1/2}-X\hat J^{-1/2}\delta \alpha^t\Lambda_\lambda  \hat JX	\label{E:Backlund-GE}
\end{eqnarray}
gives a new solution to the generating equation if $\alpha$ is a given soltuion of the generating equation. Here 
\begin{gather*}
\hat J=\mathrm{diag}(\overbrace{1,\cdots, 1}^{n-q \ \mathrm{ times}},\overbrace{-1,\dots,-1}^{q \ \mathrm{ times}}),\\
	 C=F\delta-\delta F^t,\qquad \delta=\sum_{j=1}^n e_jdx_j,\qquad \Lambda_\lambda =\frac 12(\lambda I+\frac K\lambda I_{1,n-1})
\end{gather*}
$F=(f_{ij})$, $e_j$ is defined as in (\ref{E:ai}),  and $I$ is the $n\times n$ identity matrix \cite{C94}, \cite{CT94}, \cite{Ten98}. Similarly, the B$\ddot {\rm a}$cklund transformation (\ref{E:Backlund-GE}) is linearized by the Lax pair
\begin{gather}
\partial_{x_j}\tilde \Psi = \left(\frac 12\lambda A_j\mp\frac 1{2\lambda}B_j+C_j\right)\tilde\Psi,\label{E:Lax-GE}
\end{gather}
(the $\mp$ sign corresponds to $K=\pm 1 $) with
\begin{eqnarray}
A_j &=& \left(
\begin{array}{cc}
0 & \alpha \hat J^{\frac 12}e_j\\
e_j \hat J^{-\frac 12}\alpha^t \hat J & 0
\end{array}
\right),\label{E:A-GE}\\
B_j &=& \left(
\begin{array}{cc}
0 & -I_{1,n-1}\alpha \hat J^{\frac 12} e_j\\
-e_j\hat J^{-\frac 12}\alpha^tI_{1,n-1}\hat J & 0
\end{array}
\right),\label{E:B-GE}\\ 
C_j &=& \left(
\begin{array}{lc}
0 & 0\\
0 & \hat J^{-\frac 12}\gamma_j \hat J^{\frac 12}
\end{array}
\right), \label{E:C-GE}
\end{eqnarray}
$\gamma_j\in o(n)$. However, it is impossible to transform (\ref{E:Lax-GE})-(\ref{E:C-GE}) 
 into a twisted $\frac {O(J,J)}{O(J)\times O(J)}$-system (twisted by $\sigma_1$)
because  the reality conditions fail by observing $A_j$, $B_j$, $C_j\notin o(J,J)$ unless $\hat J=I$. In particular, let $n=2$, 
\begin{eqnarray*}
&& \hat J=\left(\begin{array}{cc}1&0\\0&-1\end{array}\right),\\
&& \alpha=\left(\begin{array}{cc}\cosh\frac u2&\sinh\frac u2\\\sinh\frac u2&\cosh\frac u2\end{array}\right)\in O(1,1),\\ &&\gamma_1=\left(\begin{array}{cc}0&\alpha\\-\alpha &0\end{array}\right),\quad \gamma_2=\left(\begin{array}{cc}0&\beta\\-\beta &0\end{array}\right)\in o(n)
\end{eqnarray*}
in (\ref{E:A-GE})-(\ref{E:C-GE}), then the compatibility conditions of (\ref{E:Lax-GE}) are
\[\alpha=-\frac 12\partial_{x_2}u,\quad \beta=\frac 12\partial_{x_1}u,\quad \partial_{x_1}\beta-\partial_{x_2}\alpha=\mp\frac 12\sinh u.
\]Therefore
we obtain the \textit{{\bf{sinh-Laplace equation}}}
\begin{equation}
u_{x_1x_1}+u_{x_2x_2}=\mp \sinh u.\label{E:ellipticsinh}
\end{equation}
Here the $\mp$ sign corresponds to $K=\pm 1 $. We remark that the correspondence between the sinh-Laplace equation and the negative constant Gaussian curvature space-like surface in $\mathbb R^3_1$ has been discovered by Hu \cite{Hu85}.
\end{remark}

\section{The direct scattering problem}\label{S:invTH}
Using (\ref{E:pi}), the linear spectral problem corresponding to (\ref{E:j-thflow}) is
\begin{gather}
\frac{\partial \Psi}{\partial x}=-\lambda bab^{-1}\Psi-\frac 1\lambda\sigma_1(bab^{-1})\Psi-v\Psi, \label{E:lax}\\
b\in K'_0,\quad v\in\mathcal S_0.\nonumber
\end{gather}
In this section, we center on the construction of special eigenfunctions $\Psi(x,\lambda)$. By the normalization
\begin{eqnarray}
\Psi(x,\lambda)&=& m(x,\lambda)e^{-x(\lambda a +\frac 1\lambda\sigma_1(a))}\nonumber\\
&=& b \breve{m} (x,\lambda)e^{-x(\lambda a +\frac 1\lambda\sigma_1(a))},\label{E:tildem}
\end{eqnarray}
the linear spectral problem (\ref{E:lax}) turns into
\begin{eqnarray}
\frac{\partial m}{\partial x}&=&\lambda \left(ma-bab^{-1}m\right)+\frac 1\lambda\left(m\sigma_1(a)-\sigma_1(bab^{-1})m\right)-vm,\label{E:normalizedlax}
\\
\frac{\partial \breve{m}}{\partial x}&=&  [ \breve{m}(x,\lambda),\,\, \lambda a+\frac 1\lambda\sigma_1(a)]+Q(x,\lambda) \breve{m}(x,\lambda),\label{E:m'}
\end{eqnarray}
with
\begin{eqnarray}
Q(x,\lambda)
&=&\frac 1\lambda\left(\sigma_1(a)-b^{-1}\sigma_1(bab^{-1})b\right)- (b^{-1}\frac {\partial b}{\partial x}+b^{-1}vb ).\label{E:Q}
\end{eqnarray}
\begin{definition}\label{E:proj}We define the operator $\mathcal J_\lambda$ on $gl(n,\mathbb C)$ by
\[\mathcal J_\lambda f=\left[f, \lambda a+\frac 1\lambda\sigma_1(a)\right], 
\]
and $\pi^\lambda_0$, $\pi^\lambda_\pm$ to be the orthogonal projections of $gl(n,\mathbb C)$ to the $0$--, $\pm$--eigenspaces 
	of $Re\, \mathcal J_\lambda=\frac 12(\mathcal J_\lambda+(\mathcal J_\lambda)^*)$. Moreover, 
the characteristic curve of (\ref{E:lax}) is defined by
\[\Sigma_a=\left\{\lambda\in\mathbb C|\ \textit{the image of $\pi^\lambda_0$ is non-empty}\right\}.\]
\end{definition}

\begin{definition}\label{Ex:principal} We call $a=\left(\begin{array}{cc} 0 & D\\ D &0\end{array}\right)$ a principal oblique direction if 
$|w_1|>|w_\nu|$ for $1<\nu\le n$, and the $2n$ real numbers $\left\{\pm w_1,\cdots,\pm w_n\right\}$ are distinct with $D=\textsl{diag }(w_1,\cdots,w_n)$ \cite{ABT86}. One can verify that if $a$ is a principal oblique direction, then   $\Sigma_a=i\mathbb R\cup \mathbb S^1$. Let us label the components of $\mathbb C\backslash \Sigma_a$ as
\begin{eqnarray*}
	\Omega^+ &=&\left\{|\lambda|>1, \ Re(\lambda)> 0\right\},\\
	\Omega^- &=&\left\{|\lambda|> 1, \ Re(\lambda)< 0\right\},\\
	D^+ &=&\left\{|\lambda|< 1, \ Re(\lambda)< 0\right\},\\
	D^- &=&\left\{|\lambda|< 1, \ Re(\lambda)> 0\right\}.
\end{eqnarray*}

\end{definition}

Note permutaion matrices do not commute with $\sigma_1$. Hence it is natural to consider the following example.
\begin{definition}\label{Ex:oblique} We call $a=\left(\begin{array}{cc} 0 & D\\ D &0\end{array}\right)$ an oblique direction if 
the $2n$ real numbers $\left\{\pm w_1,\cdots,\pm w_n\right\}$ are distinct. Here $D=\textsl{diag }(w_1,\cdots,w_n)$. One can verify that if $a$ is an oblique direction, then   $\Sigma_a=i\mathbb R\cup \mathbb S^1\cup_{1\le \nu\le s}\left(\mathbb S^{r_v}\cup \mathbb S^{1/r_\nu}\right)$, where $r_\nu=r_\nu(w_1,w_\nu)\ne 1$, and $s$ is the number of $w_\nu$ such that $|w_\nu|>|w_1|$. Let us label the components of $\mathbb C\backslash \Sigma_a$ as
\begin{eqnarray*}
	\Omega^+_\nu &=&\left\{r_\nu<|\lambda|< r_{\nu+1}, \ Re(\lambda)> 0\right\},\\
	\Omega^-_\nu &=&\left\{r_\nu<|\lambda|< r_{\nu+1}, \ Re(\lambda)< 0\right\},\\
	D^+_\nu &=&\left\{1/r_{\nu+1}<|\lambda|< 1/r_{\nu}, \ Re(\lambda)< 0\right\},\\
	D^-_\nu &=&\left\{1/r_{\nu+1}<|\lambda|<1/r_{\nu}, \ Re(\lambda)> 0\right\},
\end{eqnarray*}
for $0\le \nu\le s$. Here we assume $r_0=1<r_1<\cdots<r_s<r_{s+1}=\infty$.
\end{definition}

Restricted to the case of $q=0$  and $a$ is a principal oblique direction, the direct problem is solved by \cite{ABT86}, \cite{BT88} after a diagonalization process of (\ref{E:lax}). 
\begin{theorem}\label{T:existence}
Let $a\in\mathcal A$, a constant oblique direction, $b(x)\in K'_0$, $v(x)\in\mathcal S_0$. If 
$ |b-1|_{L_1^1\cap L_\infty}+ |v|_{L_1}<\infty$,
then there exists a bounded set $Z\subset\mathbb C$, such that $Z\cap\left(\mathbb C\backslash \Sigma_a\right)$ is discrete in $\mathbb C\backslash \Sigma_a$ and for $\forall\lambda\in\mathbb C\backslash \Sigma_a$, there exists uniquely a solution $m(x,\lambda)$ of (\ref{E:normalizedlax}) satisfying:
	\begin{eqnarray}
		&&\textit{$m(\cdot,\lambda)$ is bounded, for each $\lambda\in\mathbb C\backslash (\Sigma_a\cup Z)$},\label{E:ub}\\
	&&\textit{$m(x,\lambda)\to 1$ as $x\to -\infty$, for each $\lambda\in\mathbb C\backslash (\Sigma_a\cup Z)$,}\label{E:infty-x}\\
  &&\textit{$m(x,\cdot)$ is meromorphic in $\mathbb C\backslash \Sigma_a$ with poles at $\lambda\in Z$},\label{E:pole}\\
	&&m(x,\lambda)\to b(x) \textit{ uniformly as $\lambda\to\infty$,}\label{E:infty}
\end{eqnarray}
and
\begin{eqnarray}
	&& m(x,\lambda)\in L^{\sigma_0}_-,\label{E:ujj}\\
	&& m(x,\lambda)=
	\sigma_1(m(x,1/\lambda)\,).\label{E:sigma2}
	\end{eqnarray}
\end{theorem}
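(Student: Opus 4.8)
The plan is to build $m(x,\lambda)$ as the solution of the Volterra integral equation associated with the normalized Lax equation (\ref{E:normalizedlax}), imposing the boundary condition (\ref{E:infty-x}) at $x\to-\infty$, and then to verify the remaining analytic and symmetry properties by inspection of the integral equation. First I would rewrite (\ref{E:normalizedlax}) as an integral equation
\[
m(x,\lambda)=1+\int_{-\infty}^{x} e^{(x-y)\,\mathrm{ad}\,(\lambda a+\frac1\lambda\sigma_1(a))}\!\left[\,(\text{lower-order terms in }b,v)\,m(y,\lambda)\,\right]dy,
\]
splitting the exponential propagator along the eigenspace decomposition of $\mathrm{Re}\,\mathcal J_\lambda$ furnished by the projections $\pi_0^\lambda,\pi_\pm^\lambda$ of Definition \ref{E:proj}: on the $\pi_-^\lambda$-part one integrates from $-\infty$, on the $\pi_+^\lambda$-part from $+\infty$ (or from a fixed point, then corrected), and on $\pi_0^\lambda$ — which is empty precisely when $\lambda\notin\Sigma_a$ — there is nothing to do. For $\lambda$ in the open components $\Omega^\pm_\nu, D^\pm_\nu$ of Definition \ref{Ex:oblique} the exponential factors decay in the correct half-lines, so the kernel is bounded by an $L^1$ function of $y$ (here the hypothesis $|b-1|_{L^1_1\cap L_\infty}+|v|_{L^1}<\infty$ enters), and a standard Neumann-series / Fredholm argument gives existence and uniqueness of a bounded solution with $m\to 1$ as $x\to-\infty$, proving (\ref{E:ub}) and (\ref{E:infty-x}); the exceptional set $Z$ is the (discrete, since the resolvent is meromorphic in $\lambda$ by analytic Fredholm theory) set where $I-(\text{integral operator})$ fails to be invertible, giving (\ref{E:pole}).

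Next, for the behavior (\ref{E:infty}) as $\lambda\to\infty$: in that limit the $\frac1\lambda$-term in $Q$ drops out and the dominant balance in (\ref{E:normalizedlax}) forces $\partial_x m \approx -(b^{-1}b_x)m\cdot(\text{conjugation})$, whose solution normalized at $-\infty$ tends to $b(x)$ (one checks $b(-\infty)=1$ from $|b-1|_{L^1_1}<\infty$ together with the decay needed to make the limit exist); more carefully, writing $\breve m=b^{-1}m$ and using (\ref{E:m'})–(\ref{E:Q}) one sees $\breve m\to 1$ uniformly as $|\lambda|\to\infty$, which is (\ref{E:infty}). The membership $m(x,\lambda)\in L^{\sigma_0}_-$ of (\ref{E:ujj}) is the assertion that $m$ is holomorphic on $\mathbb C/\mathbb D_\epsilon$, has the reality/reflection properties defining $L$, and $m(\infty)\in K_0'$; holomorphy away from $Z$ is part of (\ref{E:pole}), the reality conditions $(m(\bar\lambda))^*\tilde J m(\lambda)=\tilde J$, $\overline{m(\bar\lambda)}=m(\lambda)$, $\sigma_0(m(-\lambda))=m(\lambda)$ follow because both sides solve the same Volterra equation with the same normalization (the coefficients in (\ref{E:normalizedlax}) satisfy the corresponding symmetries, $a\in\mathcal A\subset\mathcal P_0$ being odd under $\sigma_0$, real, etc.), hence coincide by uniqueness; and $m(\infty)=b(x)\in K_0'$ was just shown. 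Finally (\ref{E:sigma2}), $m(x,\lambda)=\sigma_1(m(x,1/\lambda))$, is proved the same way: $\sigma_1(\mathcal A)\subset\mathcal A$ and one checks that $\lambda\mapsto\sigma_1(m(x,1/\lambda))$ satisfies (\ref{E:normalizedlax}) — using $\sigma_1(\lambda a+\frac1\lambda\sigma_1(a))=\frac1\lambda a+\lambda\sigma_1(a)$ evaluated at $1/\lambda$ — with the same $x\to-\infty$ normalization, so uniqueness closes the argument.

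The main obstacle I anticipate is the $\lambda$-dependence of the splitting $\pi_\pm^\lambda$ across the different chambers $\Omega^\pm_\nu,D^\pm_\nu$ and the behavior of the estimates as $\lambda$ approaches the characteristic curve $\Sigma_a=i\mathbb R\cup\mathbb S^1\cup_\nu(\mathbb S^{r_\nu}\cup\mathbb S^{1/r_\nu})$: the decay rates of the exponential propagators degenerate there, so the uniform bounds on the Neumann series blow up, and one must be careful that $Z$ stays bounded and that its intersection with each chamber is discrete rather than accumulating on $\Sigma_a$. This is handled by an analytic-Fredholm argument chamber by chamber together with the a priori bound coming from the integrable hypotheses on $b,v$, but it is the technically delicate point; by contrast the reality/twist identities (\ref{E:ujj})–(\ref{E:sigma2}) are essentially formal consequences of uniqueness once the symmetries of the coefficients are recorded. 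The unstated input "$m\in L^{\sigma_0}_-$" referenced in Lemma \ref{L:gaugeM} is exactly (\ref{E:ujj}), so this theorem also closes that gap.
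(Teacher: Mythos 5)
Your plan agrees with the paper's \emph{Step 1} (small data): you correctly identify the integral equation with the $\pi_0^\lambda,\pi_\pm^\lambda$ splitting, you correctly observe that the $\pi_-^\lambda$ pieces are integrated from $-\infty$ and the $\pi_+^\lambda$ pieces from $+\infty$, and the Neumann-series argument under the hypothesis $|Q|_{L^1}<1$ is exactly what the paper does. Your handling of the symmetries (\ref{E:ujj}) and (\ref{E:sigma2}) by uniqueness of the solution to the integral equation is also sound and matches the spirit of the paper's remark that ``using the $\sigma_1$-symmetry and the unique solvability\dots we then prove the theorem.''

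The genuine gap is in the large-data case, which is the technically substantial part of the theorem, and where the paper's route is markedly different from yours. You propose to handle large $|Q|_{L^1}$ by ``analytic Fredholm theory,'' but that theory requires the operator family $T(\lambda)$ (the map $m\mapsto\int e^{(x-y)\mathcal J_\lambda}\pi^\lambda_\pm(Qm)\,dy$) to be \emph{compact} on the chosen function space and analytic in $\lambda$, and neither of these is immediate: on $L^\infty(\mathbb R)$ the mixed Volterra/anti-Volterra convolution-type operator is not compact, and making it so requires a careful choice of weighted space and a non-trivial argument \`a la Beals--Coifman. You flag this as ``the technically delicate point'' but do not supply a mechanism. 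The paper avoids direct Fredholm theory entirely: its \emph{Step 2} is an induction on the least $N$ with $|Q|_{L^1}<2^N$, splitting $Q=Q_++Q_-$ at $x=0$, constructing left and right Jost-type solutions $\eta$ and $\rho$ by the induction hypothesis applied to the halves, setting $S^\pm(\lambda)=\rho^{-1}(0,\lambda)\eta(0,\lambda)$, factoring $S^\pm=(1+L^\pm)\delta^\pm(1+U^\pm)^{-1}$ by an LU/Gaussian factorization (Faddeev--Faddeeva), and patching $m'$ across $x=0$ via (\ref{E:extendm}). That constructive factorization is what delivers, in one stroke, the discreteness of $Z$ (it is the zero set of finitely many holomorphic minors of $P_{\pm,\nu}^{-1}S^\pm P_{\pm,\nu}$), its boundedness (the minors tend to $1$ as $|\lambda|\to\infty$ since $\eta,\rho\to1$), and the meromorphy asserted in (\ref{E:pole}) (the factors $U^\pm,L^\pm,\delta^\pm$ are explicitly rational in the entries of $S^\pm$). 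In your plan, all three of these consequences would need separate Fredholm-theoretic arguments (compactness, invertibility at one $\lambda_0$, meromorphic continuation of the resolvent, and a separate asymptotic estimate to bound $Z$), none of which are supplied. So the proposal is not wrong in conception, but as written it punts on exactly the step that the paper's inductive--factorization machinery is designed to carry.
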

\begin{proof}

$\underline{\textsl{Step 1: (Small data problem)}}$

In this step, we assume that 
\[|Q |_{L_1}<1 ,\quad\textsl{ for $\forall |\lambda|\ge 1$,} 
\]where $Q$ is defined by (\ref{E:Q}). Thus for $(x,\lambda)\in \mathbb R\times\Omega^\pm_\nu$ we can find $m'(x,\lambda)$ which satisfies the integral equation
\begin{eqnarray*}
	m'(x,\lambda) = 1&+&\int_{-\infty}^x e^{(x-y)\mathcal J_\lambda}(\pi^\lambda_0+\pi^\lambda_-)\left(Q(y,\lambda)m'(y,\lambda)\right)\,dy\\
	&-&\int^{\infty}_x e^{(x-y)\mathcal J_\lambda}\pi^\lambda_+\left(Q(y,\lambda)m'(y,\lambda)\right)\,dy.
\end{eqnarray*}
One can also verify that
\begin{eqnarray*}
	&&\textit{$m'(x,\lambda)$ satisfies (\ref{E:m'}), (\ref{E:ub}), (\ref{E:ujj}) for $(x,\lambda)\in \mathbb R\times\Omega^\pm_\nu$}, \\
	&&\textit{$m'(x,\lambda)\to 1$  as $x\to -\infty$ or $|\lambda|\to\infty$,} \\
	&&\textit{$m'(x,\lambda)$ is holomorphic in $  \lambda\in\Omega^\pm_\nu$, and has a continuous extension} \\
	&&\textit{to $\Sigma_a$ from $\Omega^\pm_\nu$}.\nonumber
\end{eqnarray*}
Define 
\begin{equation}
m(x,\lambda)=\begin{cases}
b(x)m'(x,\lambda), &\textit{if $(x,\lambda)\in \mathbb R\times\Omega^\pm_\nu$};\\
\sigma_1(b(x)m'(x,\frac 1\lambda)), &\textit{if $(x,\lambda)\in\mathbb R\times D^\pm_\nu $}.
\end{cases}\label{E:m''}
\end{equation}
Using the $\sigma_1$--symmetry and the unique solvability of (\ref{E:normalizedlax}), we then prove the theorem provided $|Q |_{L_1}<1$, i.e. when the potentials $(b,v)$ satisfy 
$
 |b-1|_{L_1^1\cap L_\infty}+ |v|_{L_1}<c<<1$.

$\underline{\textsl{Step 2: (Large data problem)}}$

We induce on the least integer $N\ge 0$ such that $|Q |_{L_1}<2^N$. Note that the eigenfunction of (\ref{E:m'}) corresponding to a translate of $Q$ is the translate (with respect to $x$) of the eigenfunction $m'$. Thus without loss of generality, we may assume that
\begin{eqnarray*}
&& Q_-=\begin{cases}Q,&x\le 0,\\
0, &x\ge 0,
\end{cases} \quad Q=Q_+ +Q_-,\quad
 |Q_\pm|_{ L_1}<2^N.
\end{eqnarray*}
The induction assumption implies that $Q_-$ has an eigenfunction $\eta(x,\lambda)$, $Q_+$ has an eigenfunction $\rho(x,\lambda)$ (proved by analogy) satisfying 
\begin{eqnarray}
	&&\textit{$\eta(x,\lambda)$  satisfies (\ref{E:m'}), (\ref{E:ub}), (\ref{E:ujj}) for $(x,\lambda)\in \mathbb R^-\times\Omega^\pm_\nu$},\label{E:1}\\
	&&\textit{$\rho(x,\lambda)$  satisfies (\ref{E:m'}), (\ref{E:ub}), (\ref{E:ujj}) for $(x,\lambda)\in \mathbb R^+\times\Omega^\pm_\nu$,}\label{E:1'}\\
	&&\textit{$\eta(x,\lambda)\to 1$ as $x\to -\infty$, $\rho(x,\lambda)\to 1$  as $x\to \infty$,}\label{E:2}\\
	&&\textit{$\eta(x,\lambda)$, $\rho(x,\lambda)$ are meromorphic in $  \lambda\in\Omega^\pm_\nu$ and tend to $1$ as $\lambda\to \infty$.}\label{E:3}
\end{eqnarray}
Let us define
\begin{equation}
S^\pm(\lambda)=\rho^{-1}(0,\lambda)\eta(0,\lambda)\quad \textit{ for $\lambda\in\Omega^\pm_\nu$}.\label{E:factorization}
\end{equation}
One can adapt the argument in \cite{FF63} to factorize 
\begin{equation}
S^\pm(\lambda)=(1+L^\pm(\lambda))\delta^\pm(\lambda)(1+U^\pm(\lambda))^{-1},\label{E:factorization-0}
\end{equation}
where
\begin{eqnarray}
&&\textit{$\pi^\lambda_0(U^\pm(\lambda))=\pi^\lambda_-(U^\pm(\lambda))=0$, }\label{E:factorization-1}\\ &&\textit{$\pi^\lambda_0(L^\pm(\lambda))=\pi^\lambda_+(L^\pm(\lambda))=0$},\label{E:factorization-2}\\
&&\textit{$\pi^\lambda_+(\delta^\pm(\lambda))=\pi^\lambda_-(\delta^\pm(\lambda))=0$},\label{E:factorization-3}
\end{eqnarray}
for $\lambda\in\Omega^\pm_\nu$ and
\begin{eqnarray}
&&\textit{$U^\pm$, $L^\pm$, $\delta^\pm$ are meromorphic in $\lambda\in\Omega^\pm_\nu$ with poles at $Z$, }	\label{E:factorization-4}\\
&&\textit{$U^\pm$, $L^\pm$ tend to $0$, $\delta^\pm(\lambda)$ tends to $1$ uniformly as $|\lambda|\to\infty$.}\label{E:factorization-5}
\end{eqnarray}
Where $Z=\{\textit{zeros of minors of $P_{\pm,\nu}^{-1}S^\pm P_{\pm,\nu}$}\}$ and $P_{\pm,\nu}$ satisfies that 
\begin{gather*}
\textit{$P_{\pm,\nu}^{-1} J_{a,1} P_{\pm,\nu}$ is diagonal},\\
\textit{real parts of the entries of $P_{\pm,\nu}^{-1} J_{a,1} P_{\pm,\nu}$ are nondecreasing} 
\end{gather*}
on $\Omega^\pm_\nu$. Define 
\begin{equation}
m'(x,\lambda)=\begin{cases}
\eta (x,\lambda) e^{x\mathcal J_\lambda}\left(1+U^\pm(\lambda)\right), &(x,\lambda)\in\{x\le 0\}\times\Omega^\pm_\nu;\\
\rho (x,\lambda)\left(e^{x\mathcal J_\lambda}(1+L^\pm(\lambda))\right)\delta^\pm(\lambda), &(x,\lambda)\in\{x\ge 0\}\times\Omega^\pm_\nu,
\end{cases}\label{E:extendm}
\end{equation}
and $m(x,\lambda)$ by (\ref{E:m''}). Then we complete the theorem by properties (\ref{E:m''})-(\ref{E:extendm}). 
\end{proof}


\begin{corollary}\label{C:existence-1}
Let $a\in\mathcal A$, a constant oblique direction, $b(x)\in K'_0$, $v(x)\in\mathcal S_0$, 
$ |b-1|_{L_1^1\cap L_\infty}+ |v|_{L_1}<\infty$. If the set $Z$ in Theorem \ref{T:existence} is a finite set contained in $\mathbb C\backslash\Sigma_a$, then we have the following factorization properties for ${m}(x,\lambda)$:
\begin{equation}
m(x,\lambda)=\begin{cases}
b(x)\eta^\pm_\nu (x,\lambda) e^{x\mathcal J_\lambda}\left(1+U^\pm_\nu(\lambda)\right), &(x,\lambda)\in\{x\le 0\}\times\Omega^\pm_\nu,\\
b(x)\rho^\pm_\nu (x,\lambda)e^{x\mathcal J_\lambda}(1+L^\pm_\nu(\lambda))\delta^\pm_\nu(\lambda), &(x,\lambda)\in\{x\ge 0\}\times\Omega^\pm_\nu;\\
b(x)\tilde \eta^\pm_\nu (x,\lambda) e^{x\mathcal J_\lambda}(1+\tilde U^\pm_\nu(\lambda)), &(x,\lambda)\in\{x\le 0\}\times D^\pm_\nu,\\
b(x)\tilde \rho^\pm_\nu (x,\lambda)e^{x\mathcal J_\lambda}(1+\tilde L^\pm_\nu(\lambda))\tilde \delta^\pm_\nu(\lambda), &(x,\lambda)\in\{x\ge 0\}\times D^\pm_\nu,
\end{cases}\label{E:factm}
\end{equation} with
\begin{eqnarray*}
&&\textit{$\eta^\pm_\nu(x,\lambda)$, $\rho^\pm_\nu(x,\lambda)$ satisfy (\ref{E:m'}), are uniformly bounded, and tend }\\
&&\textit{to $1$ as $x\to \mp\infty$ respectively,}\\
&&\textit{$\eta^\pm_\nu(x,\lambda)$, and $\rho^\pm_\nu(x,\lambda)$ are holomorphic and tend to $1$ as $|\lambda|\to\infty$, }\\
&&\textit{$\delta^\pm_\nu(\lambda)$ are meromorphic and tend to $1$ as $|\lambda|\to\infty$,}\\
&&{}\\
&&\textit{$\pi^\lambda_0(U^\pm_\nu(\lambda))=\pi^\lambda_-(U^\pm_\nu(\lambda))=0$, }\\ &&\textit{$\pi^\lambda_0(L^\pm_\nu(\lambda))=\pi^\lambda_+(L^\pm_\nu(\lambda))=0$},\\
&&\textit{$\pi^\lambda_+(\delta^\pm_\nu(\lambda))=\pi^\lambda_-(\delta^\pm_\nu(\lambda))=0$},
\end{eqnarray*}
for $\lambda\in\Omega^\pm_\nu$, and
\begin{eqnarray*}
&&\textit{$U^\pm_\nu$, $L^\pm_\nu$ are rational in $\lambda\in\Omega^\pm_\nu$, holomorphic in $\lambda\in\left(\Omega^\pm_\nu\right)^c$,}\\
&&\textit{$U^\pm_\nu$, $L^\pm_\nu$ tends to $0$ as $|\lambda|\to\infty$.}
\end{eqnarray*}
Besides,
\begin{eqnarray*}
b(x)\tilde \eta^\mp_\nu (x,\lambda) &=& \sigma_1(b(x)\eta^\pm_\nu (x, 1/\lambda)(1+U^\pm_\nu(0)),\\
b(x)\tilde \rho^\mp_\nu (x,\lambda) &=& \sigma_1(b(x)\rho^\pm_\nu (x,1/\lambda) (1+L^\pm_\nu(0))\delta^\pm_\nu(0)),\\
1+\tilde U^\mp_\nu(\lambda) &=& \sigma_1\left((1+U^\pm_\nu(0))^{-1}(1+U^\pm_\nu(1/\lambda))\right), \\
1+\tilde L^\mp_\nu(\lambda) &=& \sigma_1\left(\delta^\pm_\nu(0)^{-1}(1+L^\pm_\nu(0))^{-1}(1+L^\pm_\nu(1/\lambda))\delta^\pm_\nu(0)\right), \\
\tilde\delta^\mp_\nu(\lambda)&=&\sigma_1(\delta^\pm_\nu(0)^{-1}\delta^\pm_\nu(1/\lambda)).
\end{eqnarray*}
Finally, if we denote by $m_+$ the limits on $\Sigma_a$ from components $\Omega_0^+$, $\Omega_1^-$, $\Omega_2^+$, $\Omega_3^-$, $\cdots$ and from $D_0^+$, $D_1^-$, $D_2^+$, $D_3^-$, $\cdots$, and denote by  $m_-$ the limits from the other components, then
\begin{equation}
m_+(x,\lambda)=m_-(x,\lambda)e^{-x(\lambda a +\frac 1\lambda\sigma_1(a))}V(\lambda)e^{x(\lambda a +\frac 1\lambda\sigma_1(a))}\quad\textit{for $\lambda\in \Sigma_a$}.\label{E:jump}
\end{equation}
\end{corollary}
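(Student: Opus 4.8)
The plan is to extract the corollary from the construction already carried out in the proof of Theorem~\ref{T:existence}, adding only the observation that the finiteness of $Z$ forces the factorizations used there to have rational factors. First I would run the induction of Step~2 of that proof with $Z$ finite: at the base case the eigenfunctions of Step~1 are holomorphic on each $\Omega^\pm_\nu$ and tend to $1$ at $\infty$, and at every later stage one splits $Q=Q_++Q_-$, forms $S^\pm(\lambda)=\rho^{-1}(0,\lambda)\eta(0,\lambda)$, and factorizes $S^\pm=(1+L^\pm)\delta^\pm(1+U^\pm)^{-1}$ as in (\ref{E:factorization-0})--(\ref{E:factorization-5}). Since $Z$ is finite and disjoint from $\Sigma_a$, at each stage $S^\pm$ is the restriction to $\Omega^\pm_\nu$ of a rational matrix function with poles in $Z$, so the rational--symbol version of the Birkhoff (Wiener--Hopf) factorization theorem produces $1+U^\pm$, $1+L^\pm$ rational in $\lambda$ with poles confined to $Z$, holomorphic off $\Omega^\pm_\nu$, and $\to 1$ at $\infty$, together with a meromorphic $\delta^\pm\to 1$. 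Combining (\ref{E:extendm}) with the normalization $m=bm'$ (cf. (\ref{E:tildem})) and renaming the data obtained at the last stage as $\eta^\pm_\nu,\rho^\pm_\nu,U^\pm_\nu,L^\pm_\nu,\delta^\pm_\nu$ yields the first two lines of (\ref{E:factm}), with the boundedness, the $x\to\mp\infty$ limits, the holomorphy and the $|\lambda|\to\infty$ behaviour inherited from (\ref{E:1})--(\ref{E:3}) and the projector identities from (\ref{E:factorization-1})--(\ref{E:factorization-3}).

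Next I would obtain the factorization on the disk sectors and the five $\sigma_1$-identities together. For $\lambda\in D^\mp_\nu$ one has $1/\lambda\in\Omega^\pm_\nu$ and, by (\ref{E:m''}), $m(x,\lambda)=\sigma_1\big(b(x)\,m'(x,1/\lambda)\big)$. Because $\sigma_1$ is a Lie group automorphism that fixes $\mathcal A$ and interchanges $a$ with $\sigma_1(a)$, it fixes $\lambda a+\tfrac1\lambda\sigma_1(a)$ and hence satisfies $\sigma_1\circ e^{x\mathcal J_{1/\lambda}}=e^{x\mathcal J_\lambda}\circ\sigma_1$, carrying the eigenspaces of $Re\,\mathcal J_{1/\lambda}$ onto those of $Re\,\mathcal J_\lambda$. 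I would then rewrite $m'(x,1/\lambda)=\eta^\pm_\nu(x,1/\lambda)\,e^{x\mathcal J_{1/\lambda}}\big(1+U^\pm_\nu(1/\lambda)\big)$ in the normalized form $\big(\eta^\pm_\nu(x,1/\lambda)(1+U^\pm_\nu(0))\big)\,e^{x\mathcal J_{1/\lambda}}\big((1+U^\pm_\nu(0))^{-1}(1+U^\pm_\nu(1/\lambda))\big)$ --- legitimate because $U^\pm_\nu(0)$ commutes with $a$ and $\sigma_1(a)$, hence is fixed by the conjugation $e^{x\mathcal J_{1/\lambda}}$ --- and similarly for the $\{x\ge 0\}$ piece using $(1+L^\pm_\nu(0))\delta^\pm_\nu(0)$. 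Applying $\sigma_1$ and using $m=\sigma_1(b)\,\sigma_1(m'(1/\lambda))$ then forces the third and fourth lines of (\ref{E:factm}) with $\tilde\eta^\mp_\nu,\tilde\rho^\mp_\nu,\tilde U^\mp_\nu,\tilde L^\mp_\nu,\tilde\delta^\mp_\nu$ given by exactly the five displayed identities, and the stated regularity, limit and projector properties transfer through $\sigma_1$ and $\lambda\mapsto 1/\lambda$.

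Finally, the jump relation (\ref{E:jump}) is a uniqueness statement for the linear ODE (\ref{E:lax}). Fix $\lambda\in\Sigma_a$. Since $Z\cap\Sigma_a=\varnothing$, the one-sided boundary values $m_+(x,\lambda)$ and $m_-(x,\lambda)$ from the two sectors abutting that point of $\Sigma_a$ both exist, are continuous in $x$, and satisfy (\ref{E:normalizedlax}); equivalently $\Psi_\pm:=m_\pm e^{-x(\lambda a+\frac1\lambda\sigma_1(a))}$ are $GL_{2n}(\mathbb C)$-valued solutions of (\ref{E:lax}) (invertibility because $\det m_\pm$ is $x$-independent and nonzero). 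Two solutions of the same first-order linear system differ on the right by an $x$-independent matrix $V(\lambda)$, so $\Psi_+=\Psi_-V(\lambda)$, which is exactly (\ref{E:jump}); the stated alternation of the labels $m_\pm$ across $\Omega_0^+,\Omega_1^-,\dots$ and $D_0^+,D_1^-,\dots$ merely records which side of each curve of $\Sigma_a$ is being approached.

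The step I expect to be the main obstacle is the first one: verifying that finiteness of $Z$ really does upgrade the abstract factorization (\ref{E:factorization-0})--(\ref{E:factorization-5}) to one with \emph{rational} $U^\pm_\nu,L^\pm_\nu$ whose poles stay inside $Z$, and that this rationality is preserved through the finitely many steps of the induction on $N$ in Theorem~\ref{T:existence} --- this is where the finite-dimensional rational Birkhoff factorization and a careful tracking of pole locations are needed. The disk-sector step is symmetry bookkeeping (the only delicate point being that $U^\pm_\nu(0)$, $L^\pm_\nu(0)$, $\delta^\pm_\nu(0)$ lie in the centralizer of $\mathcal A$), and the jump relation is a one-line argument.
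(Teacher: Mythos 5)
Your overall strategy matches the paper's exactly: the paper's own proof is just the two-sentence statement that one refines the argument of Theorem~\ref{T:existence} to obtain the factorizations and that the jump relation follows from $\breve m_\pm$ satisfying the same linear equation (\ref{E:m'}) and $\partial_x-\mathcal J_\lambda$ being a derivation --- which is the conjugation-invariant formulation of your ``two solutions of the same first-order system differ by an $x$-independent right factor.'' So the proposal is in outline correct, and you are right to identify the three ingredients: (i) refinement of the Step-2 induction, (ii) transport to the disk sectors via $\sigma_1$, (iii) the constancy of the jump.

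Two points in your write-up are, however, imprecise enough to deserve a warning. First, in Step~1 you claim that when $Z$ is finite ``$S^\pm$ is the restriction to $\Omega^\pm_\nu$ of a rational matrix function with poles in $Z$,'' and then deduce rationality of $1+U^\pm_\nu$, $1+L^\pm_\nu$ from the rational Birkhoff theorem. That premise is not warranted: $\eta$ and $\rho$ are only meromorphic in the sector (a function meromorphic on $\Omega^\pm_\nu$ with finitely many poles there need not be rational, and $Z$ is defined as the zero set of minors of $P^{-1}S^\pm P$, not the pole set of $S^\pm$). The correct route is the one the paper implicitly uses and cites via Sattinger [S90, Thm.~5.2]: the Gaussian factors $1+U^\pm$, $1+L^\pm$ are uniformly triangular in the $P$-diagonalized frame and are assembled from principal parts at the finitely many points of $Z$; it is that structural fact, not rationality of $S^\pm$ itself, that makes $U^\pm_\nu$, $L^\pm_\nu$ rational with poles in $Z$ while $\delta^\pm_\nu$ remains only meromorphic. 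Second, in Step~2 the manipulation you perform --- inserting $(1+U^\pm_\nu(0))(1+U^\pm_\nu(0))^{-1}$ and commuting $(1+U^\pm_\nu(0))$ past $e^{x\mathcal J_{1/\lambda}}$ --- requires $U^\pm_\nu(0)$ (equivalently $\sigma_1(U^\pm_\nu(0))$) to lie in the centralizer of $a$ and $\sigma_1(a)$. You flag this yourself, but you should be aware that it does not follow trivially from the data: $U^\pm_\nu$ is controlled in a $\lambda$-dependent frame on $\Omega^\pm_\nu$, and $0\in\Sigma_a$ is a boundary point where the eigenspace decomposition degenerates, so the value $U^\pm_\nu(0)$ is an analytic continuation whose centralizer property needs a short additional check (e.g.\ by comparing the $\lambda\to 0$ limit of both sides of the $\sigma_1$-symmetry $m(x,\lambda)=\sigma_1(m(x,1/\lambda))$ at $x=0$). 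With those two items made precise, the proposal is a faithful expansion of the paper's terse proof; the remainder (disk-sector bookkeeping and the jump relation) is exactly what the paper does.
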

\begin{proof} A refined argument of the proof of Theorem \ref{T:existence} can derive the factorization of $m(x,\lambda)$ on $\mathbb R\times \Omega^\pm_\nu$. The jump condition (\ref{E:jump}) comes from the limits $\breve{m}_\pm$, defined by (\ref{E:tildem}), exist on $\Sigma_a$, $\breve{m}_\pm$ satisfy the same equation (\ref{E:m'}) and the operator $\partial_x-\mathcal J_\lambda$ is a derivation.
\end{proof}

\begin{definition}\label{D:scattering} 
If the assumption in Corollary \ref{C:existence-1} holds, 
then $(U^\pm_\nu, V)$  is called the associated scattering data of the potential $(b,\,v)$.
\end{definition}

\begin{definition}\label{D:diagonalization}
Let $P(\lambda)$ be the matrix satisfying that $P^{-1}\mathcal J_\lambda P$ is a diagonal matrix with decreasing entries and $P_\pm=\lim_{\lambda_n\to \lambda}P(\lambda_n)$, $\lambda_n\in \Omega^\pm_0\cup \Omega^\mp_1\cup \Omega^\pm_2\cup\cdots\cup D_0^\pm\cup D_1^\mp\cup\cdots$. 
\end{definition}
Note that $P$ is constant on each component of $\mathbb C/\Sigma_a$.

\begin{theorem}\label{T:sc}
Let $a\in\mathcal A$, a constant oblique direction. Suppose $b(x)\in K'_0$, $v(x)\in\mathcal S_0$, 
and their derivatives are rapidly decreasing as $|x|\to \infty$. If the set $Z$ in Theorem \ref{T:existence} is a finite set contained in $\mathbb C\backslash\Sigma_a$.  Then for the scattering data $(U^\pm_\nu, V)$, we have the analytical constraints
\begin{eqnarray}
&&\textit{$\partial_\lambda^\alpha (V-I)$ is $O(\lambda^N)$ as $\lambda\to 0$ and $O(\lambda^{-N})$ as $\lambda\to \infty$  for $N$, $\alpha \ge 0$,}\label{E:ana1}
\\
&&\textit{the product of limits of $V$ from each component, arranged clockwisely, } \label{E:ana2}\\
&&\textit{is $I$ at each intersection of $\Sigma_a$; }\nonumber\\
&&\textit{$U^\pm_\nu$ are rational in $\lambda\in\Omega^\pm_\nu$, holomorphic in $\lambda\in\left(\Omega^\pm_\nu\right)^c$,} \label{E:ana2-u-1}\\
&&\textit{$U^\pm_\nu$ tend to $0$ as $|\lambda|\to\infty$.} \label{E:ana2-u-2}
\end{eqnarray}
the algebraic constraints
\begin{gather}
\textit{$\pi^\lambda_0(U^\pm_\nu(\lambda))=\pi^\lambda_-(U^\pm_\nu(\lambda))=0$, }\label{E:u-reality-122}\\
\tilde J\left(1+U^\pm_\nu(\bar\lambda)\right)^*\tilde J(1+U^\pm_\nu(\lambda))=1,\quad \left(U^\pm_\nu(\bar\lambda)\right)^*=\left(U^\pm_\nu\right)^t(\lambda),  \label{E:u-reality-1}\\
\sigma_0(U^\pm_\nu(-\lambda))=U^\mp_\nu(\lambda),
\label{E:u-reality-12}
\end{gather}
for $\lambda\in \Omega^\pm_\nu$ and
\begin{gather}
d_k^+(P^{-1}_+VP_+)=1, \,\ d_k^-(P^{-1}_+VP_+)\ne 0,\label{E:detv}\\
\tilde JV(\bar\lambda)^*\tilde J\,V(\lambda)=1,\ \left(V(\bar\lambda)\right)^*=V^t(\lambda),
 \label{E:u-reality-3}\\
\sigma_0(V(-\lambda))\,V(\lambda)=1, \label{E:u-reality-4}\\
\sigma_1 (V(1/\lambda))\,V(\lambda)=1 \label{E:u-reality-5}
\end{gather}
for $\lambda\in\Sigma_a$, $\forall 1\le k\le 2n$. Here $d_k^\pm (f)$ denote the upper and lower $k\times k$ minors of $f$.
\end{theorem}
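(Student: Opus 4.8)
The plan is to read off every constraint from the structural properties of the normalized eigenfunction $m(x,\lambda)$ and of its factorization (\ref{E:factm}), established in Theorem \ref{T:existence} and Corollary \ref{C:existence-1}, so that the proof becomes a dictionary translating the analyticity, asymptotics and symmetries of $m$ into statements about the pair $(U^\pm_\nu,V)$. The constraints on $U^\pm_\nu$ are essentially a transcription of Corollary \ref{C:existence-1}: when $Z$ is finite and contained in $\mathbb C\setminus\Sigma_a$, the factors $U^\pm_\nu$ are rational with poles only at $Z$, holomorphic off $\Omega^\pm_\nu$, tend to $0$ as $|\lambda|\to\infty$, and obey $\pi^\lambda_0U^\pm_\nu=\pi^\lambda_-U^\pm_\nu=0$, which is (\ref{E:ana2-u-1}), (\ref{E:ana2-u-2}), (\ref{E:u-reality-122}). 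For the minor conditions (\ref{E:detv}) I would conjugate $V$ by the diagonalizer $P_+$ of $\mathcal J_\lambda$ from Definition \ref{D:diagonalization}: this turns the $\pi^\lambda_\pm$-parts appearing in (\ref{E:factm}) into strictly triangular matrices and the $\pi^\lambda_0$-part into a diagonal matrix, so (\ref{E:factm}) exhibits $P_+^{-1}VP_+$ as a product of a unipotent lower-triangular, an invertible diagonal and a unipotent upper-triangular factor; the unipotence of one triangular factor gives $d_k^+(P_+^{-1}VP_+)=1$ and the invertibility of $\delta^\pm_\nu$ gives $d_k^-(P_+^{-1}VP_+)\ne 0$. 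In short, (\ref{E:detv}) merely records that the Gauss-type factorization (\ref{E:factorization-0})--(\ref{E:factorization-3}) underlying the construction exists with vanishing partial indices.

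For the analytic constraints on $V$, note that only the unbounded branch $i\mathbb R$ of $\Sigma_a$ reaches $\lambda=0$ and $\lambda=\infty$. Along $i\mathbb R$ the equation (\ref{E:m'}) for $\breve m$ has leading term $\lambda a+\lambda^{-1}\sigma_1(a)$ and a potential $Q$ whose $O(1)$ part $b^{-1}\partial_xb+b^{-1}vb$, along with all its $x$-derivatives, is rapidly decreasing by hypothesis; iterating the Volterra integral equation and integrating by parts in $x$ then yields a full asymptotic expansion $\breve m(x,\lambda)\sim\sum_{k\ge0}\breve m_k(x)\lambda^{-k}$ as $\lambda\to\infty$ along $i\mathbb R$, uniform in $x$ and differentiable in $\lambda$. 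Since the two one-sided limits share this expansion to all orders, $V-I$ and each $\partial_\lambda^\alpha(V-I)$ are $O(\lambda^{-N})$, which is half of (\ref{E:ana1}); the reflection relation (\ref{E:u-reality-5}), proved below, then transports this to $O(\lambda^N)$ as $\lambda\to0$. The cocycle relation (\ref{E:ana2}) is the standard single-valuedness condition at the self-intersections of $\Sigma_a$, namely the points $\pm i$ and $\pm ir_\nu^{\pm1}$ where $i\mathbb R$ meets a circle: $\breve m$ has continuous one-sided limits onto each arc issuing from such a point and is single-valued around it, so the limits of $V$ over the components meeting there, arranged clockwise, compose to the identity.

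The algebraic constraints I would derive by inserting the three symmetries of $m$ into the jump relation (\ref{E:jump}) and the factorization (\ref{E:factm}). From $m\in L^{\sigma_0}_-$ one has $(m(x,\bar\lambda))^*\tilde Jm(x,\lambda)=\tilde J$, $\overline{m(x,\bar\lambda)}=m(x,\lambda)$ and $\sigma_0(m(x,-\lambda))=m(x,\lambda)$, and from (\ref{E:sigma2}) also $\sigma_1(m(x,1/\lambda))=m(x,\lambda)$; each passes to the one-sided boundary values $m_\pm$. Because $a$ and $\sigma_1(a)$ are real symmetric elements of $\mathcal P_0\subset o(J,J)$, the conjugating exponential $E(x,\lambda):=e^{x(\lambda a+\lambda^{-1}\sigma_1(a))}$ satisfies $(E(x,\bar\lambda))^*\tilde JE(x,\lambda)=\tilde J$, $\overline{E(x,\bar\lambda)}=E(x,\lambda)$ and $\sigma_0(E(x,-\lambda))=\sigma_1(E(x,1/\lambda))=E(x,\lambda)$, so each symmetry leaves the exponential conjugation in (\ref{E:jump}) invariant. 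Writing $V=E\,m_-^{-1}m_+\,E^{-1}$, substituting these identities, and using that $\lambda\mapsto-\lambda$ and $\lambda\mapsto1/\lambda$ interchange the $m_+$- and $m_-$-labellings while $\lambda\mapsto\bar\lambda$ fixes them (a short check from Definitions \ref{Ex:oblique} and \ref{D:diagonalization}), the $m_\pm$ factors cancel and I obtain (\ref{E:u-reality-3}), (\ref{E:u-reality-4}) and (\ref{E:u-reality-5}). Applying the same three symmetries to the factored form $m=b\,\eta^\pm_\nu\,e^{x\mathcal J_\lambda}(1+U^\pm_\nu)$ on $\{x\le0\}\times\Omega^\pm_\nu$ — equivalently to the scattering matrix $S^\pm=\rho^{-1}(0,\lambda)\eta(0,\lambda)$, which inherits the reality and $\sigma_0$-behaviour of $m$ — produces a second factorization of the same object normalized by (\ref{E:factorization-1})--(\ref{E:factorization-3}); uniqueness of that factorization then forces (\ref{E:u-reality-1}) and (\ref{E:u-reality-12}).

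The step I expect to be the main obstacle is this last symmetry transfer: one must track exactly how $\lambda\mapsto\bar\lambda$, $\lambda\mapsto-\lambda$ and $\lambda\mapsto1/\lambda$ permute the sectors $\Omega^\pm_\nu$, $D^\pm_\nu$ and the $m_\pm$-labels, check that the operations $\xi\mapsto\tilde J\xi^*\tilde J$, $\xi\mapsto\sigma_0(\xi(-\lambda))$ and $\xi\mapsto\sigma_1(\xi(1/\lambda))$ carry the projections $\pi^\lambda_0,\pi^\lambda_\pm$ to the projections demanded by the normalizations (so that the transformed $S^\pm$ still lies in the class where the Gauss factorization is unique), and verify that the conjugating exponentials $e^{\pm x\mathcal J_\lambda}$ match on the nose — a single sign slip destroys the cancellation. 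The only genuinely analytic point, secondary to this, is the $C^\infty$-flatness of $V$ at $\lambda=0$ and $\lambda=\infty$, which needs the rapid-decay hypothesis on the derivatives of $(b,v)$ together with an estimate of the Volterra iterates along $i\mathbb R$ that remains uniform after differentiation in $\lambda$.
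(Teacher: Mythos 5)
Your proposal follows essentially the same route as the paper: conditions (\ref{E:ana2-u-1}), (\ref{E:ana2-u-2}), (\ref{E:u-reality-122}) are read off from Corollary \ref{C:existence-1}; the reality symmetries of $m$ (from $m\in L^{\sigma_0}_-$ and (\ref{E:sigma2})) are fed into the jump relation (\ref{E:jump}) to get (\ref{E:u-reality-3})--(\ref{E:u-reality-5}) and, together with the uniqueness of the Gauss-type factorization on $\mathbb R\times\Omega^\pm_\nu$, give (\ref{E:u-reality-1}), (\ref{E:u-reality-12}); the minor conditions (\ref{E:detv}) come from the triangular-diagonal structure of $P_+^{-1}VP_+$ at $x=0$; and the analytic constraints (\ref{E:ana1}), (\ref{E:ana2}) rest on the Beals--Coifman machinery, which the paper simply cites.

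The one place you should be careful is the justification of $d_k^+(P_+^{-1}VP_+)=1$. Merely having a factorization into unipotent-lower $\times$ diagonal $\times$ unipotent-upper after conjugating by $P_+$ does not make the upper $k\times k$ minors equal to $1$; those minors are the partial products of the diagonal factor, and one needs a separate Liouville-type argument (the paper invokes Sattinger's \cite{S90} together with $\eta^\pm_\nu(x,\cdot)\to 1$ as $x\to-\infty$ and the fact that $\eta^\pm_\nu$ solve (\ref{E:m'})) to conclude that this diagonal factor contributes trivially to $d_k^+$. You do in fact allude to this by saying the factorization has ``vanishing partial indices,'' but writing it as a consequence of ``unipotence of one triangular factor'' compresses a nontrivial step; the paper's explicit chain
$d_k^+(P_+^{-1}VP_+)=d_k^+\bigl(P_+^{-1}(1+U_-)^{-1}\eta_-^{-1}\eta_+(1+U_+)P_+\bigr)=1$
makes visible that one must control the middle $\eta_-^{-1}\eta_+$ block, not only the two unipotent wings. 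Aside from this, your sector-by-sector bookkeeping under $\lambda\mapsto\bar\lambda$, $-\lambda$, $1/\lambda$ is exactly the care the paper exercises (it is what flips $m_\pm$ into $m_\mp$ in the proofs of (\ref{E:u-reality-4}) and (\ref{E:u-reality-5})), and your observation that (\ref{E:u-reality-5}) transports the $\lambda\to\infty$ flatness of $V-I$ to $\lambda\to 0$ is a clean shortcut that the paper leaves implicit in its citation of \cite{BC84}.
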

\begin{proof}
Properties (\ref{E:ana2-u-1})-(\ref{E:u-reality-122}) have been shown in Corollary \ref{C:existence-1}. The analytic constraints (\ref{E:ana1}), and (\ref{E:ana2}) can be deduced from the results of \cite{BC84}. Note (\ref{E:ujj}) implies 
\begin{eqnarray}
\tilde J m(x,\bar\lambda)^* \tilde Jm(x,\lambda)=1,\ m(x,\bar\lambda)^*=m^t(x,\lambda),\ \sigma_0( m(x,-\lambda))=m(x,\lambda).\label{E:realitym}
\end{eqnarray} 
Together with the  the uniqueness of the factorization of $m(x,\lambda)$ on $\mathbb R\times \Omega^\pm_\nu$, we derive the reality conditions (\ref{E:u-reality-1}), (\ref{E:u-reality-12}).  
Similarly, (\ref{E:jump}) implies
\begin{eqnarray*}
\tilde J m_+(x,\bar\lambda)^*\tilde J &=&\tilde J e^{x(\lambda a +\frac 1{\lambda}\sigma_1(a))}V(\bar\lambda)^* e^{-x(\lambda a +\frac 1{\lambda}\sigma_1(a))} m_-(x,\bar\lambda)^*\tilde J\quad\textit{}\\
m_+(x,\bar\lambda)^*&=& e^{x(\lambda a +\frac 1{\lambda}\sigma_1(a))}V(\bar\lambda)^* e^{-x(\lambda a +\frac 1{\lambda}\sigma_1(a))} m_-(x,\bar\lambda)^* 
\end{eqnarray*}for $\lambda\in \Sigma_a$. Hence
\begin{eqnarray*}
m_+(x,\lambda)^{-1} &=& e^{-x(\lambda a +\frac 1{\lambda}\sigma_1(a))}\tilde J V(\bar\lambda)^* \tilde J e^{x(\lambda a +\frac 1{\lambda}\sigma_1(a))} m_-(x,\lambda)^{-1}\\
m_+(x,\lambda)^t &=& e^{x(\lambda a +\frac 1{\lambda}\sigma_1(a))}V(\bar\lambda)^* e^{-x(\lambda a +\frac 1{\lambda}\sigma_1(a))} m_-(x,\lambda)^t 
\end{eqnarray*}for $\lambda\in \Sigma_a$ 
by (\ref{E:realitym}). So (\ref{E:u-reality-3}) is proved. On the other hand, applying $\sigma_0$ to both sides of (\ref{E:jump}) and using (\ref{E:realitym}), we  obtain
\[
m_-(x,-\lambda)=m_+(x,-\lambda))e^{x(\lambda a +\frac 1\lambda\sigma_1(a))}\sigma_0(V(\lambda))e^{-x(\lambda a +\frac 1\lambda\sigma_1(a))}\quad\textit{for $\lambda\in \Sigma_a$}.
\] Hence we justify (\ref{E:u-reality-4}). Finally (\ref{E:u-reality-5}) is proved by applying $\sigma_1$  to both sides of (\ref{E:jump}) and using (\ref{E:sigma2}) instead. 

To prove (\ref{E:detv}), we compute
\begin{eqnarray*}
&&d_k^+(P^{-1}_+VP_+)(\lambda)\\
=&&d_k^+(P^{-1}_+(m_-^{-1}m^+)(0,\lambda)P_+)\\
=&&d_k^+(\ \left(P^{-1}_+(1+U_-)^{-1} \eta_-(0,\lambda)^{-1} \eta_+(0,\lambda)(1+U_+)P_+\right)\ )\\
=&&1
\end{eqnarray*}
by (\ref{E:factm}), (\ref{E:u-reality-122}), $\lim_{x\to -\infty}\eta^\pm_\nu =\lim_{x\to -\infty}b = 1$, and $\tilde\eta^\pm_\nu(x,\lambda)$ satisfying (\ref{E:m'}) \cite{S90}. 
Here 
\[
U(\lambda)=
\begin{cases}
U^\pm_\nu  \,(\lambda), &{\lambda\in\Omega^\pm_\nu},\\
\tilde U^\pm_\nu \,(\lambda), &{\lambda\in D^\pm_\nu},
\end{cases}\quad
\eta(x,\lambda)=
\begin{cases}
\eta^\pm_\nu  \,(x,\lambda), &{\lambda\in\Omega^\pm_\nu},\\
\tilde\eta^\pm_\nu(x,\lambda), &{\lambda\in D^\pm_\nu},
\end{cases} 
\] and $f_\pm$ the limits on $\Sigma_a$ from $\Omega_0^\pm\cup \Omega_1^\mp\cup\Omega_2^\pm\cup\Omega_3^\mp\cup\cdots\cup D_0^\pm\cup D_1^\mp\cup D_2^\pm\cup D_3^\mp\cup \cdots$. $d_k^-(P^{-1}_+VP_+)\ne 0$ can be proved by analogy.
\end{proof}

\begin{remark}\label{R:sigmar-1}
For $i\in\{0,\,1,\,\cdots,\,n-1\}$,  replacing $\sigma_1$ by $\sigma_i$ (defined in Remark \ref{R:sigmar}), we can solve the associated direct problem  by analogy.
\end{remark}

\section{The inverse scattering problem I }\label{S:inv-LGF}
The goal of the inverse problem is to find the potential $(b, v)$ for a given scattering data $(U^\pm_\nu, V)$. Usually we try to reverse the process in the direct problem. However there exists technical difficulties to find  $m$ since the boundary value of $m$ is $b(x)$ as $\lambda\to\infty$. Hence it is impossible for us to pose the Riemann-Hilbert problem for $m$.

We will adopt the approach of \cite{S90} to construct a normalized eigenfunction $\breve{m}(x,\lambda)$ prescribing the given scattering data $(U^\pm_\nu, V)$ with boundary value $1$ at infinity in this section. Then we will try to find a gauge which transforms $\breve{m}$ into a solution $m(x,\lambda)$ of (\ref{E:normalizedlax}) in next section. We note in \cite{ABT86}, \cite{BT88}, they solve the inverse problem for the twisted $\frac {O(n,n)}{O(n)\times O(n)}$-system with $U^\pm$ admitting only simple poles.

\begin{theorem}\label{T:q0}
Let $q=0$ and $a\in\mathcal A$. Suppose $(U^\pm_\nu, V)$ satisfies (\ref{E:ana1})-(\ref{E:u-reality-5}). Then there exists uniquely an $\breve{m}(x,\lambda)\in L^{\sigma_0}$ satisfying 
\begin{equation}
\textit{$\partial_x^{k'}\left(\breve{m}(x,\lambda)-I\right)\ \in L_2^k(\Sigma_a)$ for $\forall k,\,k'$, and tends to $0$ uniformly as $x\to -\infty$,}\label{E:regularityqo}
\end{equation}
\begin{equation}
\breve{m}(x,\lambda)=\begin{cases}
\eta^\pm_\nu (x,\lambda) e^{x\mathcal J_\lambda}\left(1+U^\pm_\nu(\lambda)\right), &(x,\lambda)\in\{x\le 0\}\times\Omega^\pm_\nu,\quad (1)\\
\rho^\pm_\nu (x,\lambda)e^{x\mathcal J_\lambda}(1+L^\pm_\nu(\lambda))\delta^\pm_\nu(\lambda), &(x,\lambda)\in\{x\ge 0\}\times\Omega^\pm_\nu,\quad (2)\\
\tilde \eta^\pm_\nu (x,\lambda) e^{x\mathcal J_\lambda}(1+\tilde U^\pm_\nu(\lambda)), &(x,\lambda)\in\{x\le 0\}\times D^\pm_\nu,\quad (3)\\
\tilde \rho^\pm_\nu (x,\lambda)e^{x\mathcal J_\lambda}(1+\tilde L^\pm_\nu(\lambda))\tilde \delta^\pm_\nu(\lambda), &(x,\lambda)\in\{x\ge 0\}\times D^\pm_\nu,\quad (4)
\end{cases}\label{E:brevem}
\end{equation}
and
\begin{equation}
\breve{m}_+(x,\lambda)= \breve{m}_-(x,\lambda)e^{-x(\lambda a +\frac 1\lambda\sigma_1(a))}V(\lambda)e^{x(\lambda a +\frac 1\lambda\sigma_1(a))}\quad\textit{for $\lambda\in \Sigma_a$}.\label{E:tildecs}
\end{equation}
Here 
\begin{eqnarray}
&&\textit{$\eta^\pm_\nu(x,\lambda)$, and $\rho^\pm_\nu(x,\lambda)$  are holomorphic and uniformly bounded,}\label{E:inv1}\\
&&\textit{$\eta^\pm_\nu(x,\lambda)$, and $\rho^\pm_\nu(x,\lambda)$ tend to $1$ as $|\lambda|\to\infty$, }\label{E:inv2}\\
&&\textit{$\delta^\pm_\nu(\lambda)$, $\tilde\delta^\pm_\nu(\lambda)$ are meromorphic, and tend to $1$ as $|\lambda|\to\infty$,}\label{E:inv3}\\
&&\textit{$\pi^\lambda_0(L^\pm_\nu(\lambda))=\pi^\lambda_+(L^\pm_\nu(\lambda))=0$},\label{E:inv4}\\
&&\textit{$\pi^\lambda_+(\delta^\pm_\nu(\lambda))=\pi^\lambda_-(\delta^\pm_\nu(\lambda))=0$},\label{E:inv5}
\end{eqnarray}
for  $\lambda\in\Omega^\pm_\nu$, 
\begin{eqnarray}
&&\textit{$\tilde\eta^\pm_\nu(x,\lambda)$, and $\tilde\rho^\pm_\nu(x,\lambda)$  are holomorphic and uniformly bounded}\label{E:inv6}
\end{eqnarray}
for  $\lambda\in D^\pm_\nu$, and
\begin{eqnarray}
&&\textit{$L^\pm_\nu$ are rational in $\lambda\in\Omega^\pm_\nu$, holomorphic in $\lambda\in\left(\Omega^\pm_\nu\right)^c$,}\label{E:inv7}\\
&&\textit{$L^\pm_\nu$ tends to $0$ as $|\lambda|\to\infty$,}\label{E:inv8}\\
&&1+\tilde U^\mp_\nu(\lambda) = \sigma_1\left((1+U^\pm_\nu(0))^{-1}(1+U^\pm_\nu(1/\lambda))\right),\label{E:inv9}\\
&&1+\tilde L^\mp_\nu(\lambda) = \sigma_1\left(\delta^\pm_\nu(0)^{-1}(1+L^\pm_\nu(0))^{-1}(1+L^\pm_\nu(1/\lambda))\delta^\pm_\nu(0)\right), \label{E:inv10}\\
&&\tilde\delta^\mp_\nu(\lambda)=\sigma_1(\delta^\pm_\nu(0)^{-1}\delta^\pm_\nu(1/\lambda)).\label{E:inv11}
\end{eqnarray}
\end{theorem}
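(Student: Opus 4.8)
The plan is to solve the inverse problem for $q=0$ by converting the data $(U^\pm_\nu, V)$ into a singular integral (Riemann--Hilbert) equation on $\Sigma_a$ for a normalized function $\breve m(x,\lambda)$ with $\breve m\to I$ as $\lambda\to\infty$, following the scheme of Beals--Coifman and its adaptation in \cite{S90}. First I would set up the additive jump: having in hand the local factorizations (\ref{E:factorization-0})--(\ref{E:factorization-3}) of the $S^\pm(\lambda)$, I would replace the multiplicative jump (\ref{E:tildecs}) on $\Sigma_a$ by a $\bar\partial$/Cauchy-integral formulation, writing $\breve m(x,\lambda)=I+\mathcal C[\,\breve m_-\,(e^{-x\mathcal J_\lambda}(V-I)e^{x\mathcal J_\lambda})\,](\lambda)$ where $\mathcal C$ is the Cauchy projector on the oriented contour $\Sigma_a=i\mathbb R\cup\mathbb S^1$. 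The analytic decay (\ref{E:ana1}) of $V-I$ at $0$ and $\infty$, together with (\ref{E:ana2}) at the four intersection points of $\Sigma_a$, is exactly what makes the Cauchy kernel integrable across the singular points and the contour ``$\Sigma_a$-regular'' in the sense needed for a Fredholm theory; the rapid decay of the derivatives of $(b,v)$ passes to Schwartz-class decay of $V-I$ and hence to the regularity claim (\ref{E:regularityqo}).

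Next I would establish unique solvability of this integral equation for every $x$. Because $q=0$ the relevant groups are compact ($O(n)\times O(n)$, $O(n,1)$), so the reality conditions (\ref{E:u-reality-3})--(\ref{E:u-reality-5}) put the jump matrix in a group for which the associated singular integral operator $I-\mathcal C_{V}$ has trivial kernel: a vanishing-lemma argument (pair a putative homogeneous solution against its $\tilde J$-adjoint, use $\tilde JV(\bar\lambda)^*\tilde J V(\lambda)=1$ and positivity coming from $q=0$) forces it to be zero, and the index is zero, so the operator is invertible. This is the step where the hypothesis $q=0$ is essential and where I expect the real work to lie: for indefinite signature the vanishing lemma fails, which is precisely why the theorem is stated only for $q=0$ (and why $J\neq I$ is handled separately, with small data, later in the paper). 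I would then read off $\breve m$ from the resolvent, and the determinant/minor normalizations (\ref{E:detv}) guarantee that the piecewise holomorphic pieces $\eta^\pm_\nu,\rho^\pm_\nu,\delta^\pm_\nu$ obtained by the local Birkhoff-type factorizations on each $\Omega^\pm_\nu$ are genuinely nondegenerate, giving properties (\ref{E:inv1})--(\ref{E:inv8}).

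Finally I would propagate the symmetries. Since $V$ satisfies $\sigma_0(V(-\lambda))V(\lambda)=1$ and $\sigma_1(V(1/\lambda))V(\lambda)=1$, the functions $\sigma_0(\breve m(x,-\lambda))$ and $\sigma_1(\breve m(x,1/\lambda))$ solve the same RH problem with the same normalization at $\infty$; uniqueness then gives $\breve m\in L^{\sigma_0}$ and the reflection identities, from which the explicit relations (\ref{E:inv9})--(\ref{E:inv11}) between the $D^\pm_\nu$-data and the $\Omega^\pm_\nu$-data follow by restricting the $\sigma_1$-symmetry to the boundary circles and matching the local factorizations at $\lambda=0$ (exactly mirroring the forward computation in Corollary \ref{C:existence-1}). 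The $x$-dependence enters only through the conjugation $e^{-x\mathcal J_\lambda}(V-I)e^{x\mathcal J_\lambda}$, and differentiating the integral equation in $x$ together with the boundedness of $e^{x\mathcal J_\lambda}$ on the relevant half-contours yields (\ref{E:inv1}), (\ref{E:inv6}) and the decay $\breve m-I\to 0$ as $x\to-\infty$; the main obstacle, to repeat, is the vanishing lemma that underlies uniqueness, which is the one place the compactness forced by $q=0$ cannot be dispensed with.
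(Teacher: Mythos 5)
Your overall scheme---turn (\ref{E:tildecs}) into a Beals--Coifman singular integral equation, prove Fredholmness via \cite{BC84}, use a vanishing lemma that exploits $q=0$, and propagate the $\sigma_0$- and $\sigma_1$-symmetries through uniqueness---is the right skeleton and is in fact the skeleton of the paper's proof. But two of the load-bearing steps are off in ways that would break the argument as you have written it.

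\textbf{The pairing in the vanishing lemma.} You propose to pair a homogeneous solution $f$ against its $\tilde J$-adjoint and invoke $\tilde J V(\bar\lambda)^*\tilde J\,V(\lambda)=1$ for positivity. That cannot work: $\tilde J$ has signature $(n,n)$, so the sesquilinear form built from $\tilde J$ is indefinite and $f\,\tilde J f^*\tilde J\equiv 0$ does not force $f=0$. The paper instead forms $g(x,\lambda)=f(x,\lambda)\,f(x,-\bar\lambda)^*$ (no $\tilde J$ inserted) and shows that $g_+=g_-$ on $\Sigma_a$ provided $W(-\bar\lambda)^*=W(\lambda)$; this last identity is exactly where $q=0$ enters, because combining (\ref{E:u-reality-3}) with (\ref{E:u-reality-4}) gives $V(-\bar\lambda)^*=\mathrm{diag}(J,J)\,V(\lambda)\,\mathrm{diag}(J,J)$, which collapses to $V(-\bar\lambda)^*=V(\lambda)$ only when $J=I_n$. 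Once $g$ is entire and vanishes at $\infty$, Liouville gives $g\equiv 0$, and on $i\mathbb R\subset\Sigma_a$ (where $\lambda=-\bar\lambda$) one reads off $0=g=f f^*$, hence $f=0$, then $f\equiv 0$ by analyticity. So the positivity comes from restricting the reflection $\lambda\mapsto -\bar\lambda$ to its fixed contour, not from any $\tilde J$-pairing; the hypothesis $q=0$ enters in the jump identity, not in the choice of inner product.

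\textbf{Handling the discrete data and the piecewise factorizations.} Your single Cauchy-integral formulation $\breve m = I + \mathcal C[\breve m_-(e^{-x\mathcal J_\lambda}(V-I)e^{x\mathcal J_\lambda})]$ gives no mechanism to produce the factorizations (1)--(4) in (\ref{E:brevem}), nor does it see the rational pieces $U^\pm_\nu$, which live in the open sectors $\Omega^\pm_\nu$ rather than on $\Sigma_a$. The paper's construction is genuinely two-sided in $x$: for $\{x\le 0\}$ it replaces the jump $V$ by $W=(1+U_-)V(1+U_+)^{-1}$ (absorbing the open-set rational data into a purely continuous jump for the holomorphic factor $\eta$, which is then solved by the vanishing-lemma RH argument), while for $\{x\ge 0\}$ it first has to \emph{construct} $\delta^\pm_\nu$, $\tilde\delta^\pm_\nu$, $L^\pm_\nu$, $\tilde L^\pm_\nu$---via a diagonal scalar RH problem (\ref{E:deltainv-1})--(\ref{E:deltainv-4}) whose jumps are ratios of minors of $V$, and then a matrix factorization of $V$ in the spirit of \cite{FF63}, \cite{S90}---before passing to a second continuous RH problem with jump $\Lambda=(1+L_-)\Delta_- V\Delta_+^{-1}(1+L_+)^{-1}$ for $\rho$. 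Without this two-stage construction you have no candidate for $\delta^\pm_\nu, L^\pm_\nu$ and no way to verify (\ref{E:inv3})--(\ref{E:inv5}), (\ref{E:inv7})--(\ref{E:inv8}) or (\ref{E:detv})'s role. Your appeal to ``local Birkhoff-type factorizations'' gestures in this direction, but in the forward problem those factorizations came from the already existing $m$; in the inverse problem the whole point is to build them from the scattering data, which is precisely Step 2 of the paper. The final assertions (\ref{E:inv9})--(\ref{E:inv11}) and $\breve m\in L^{\sigma_0}$ are then derived, as you suggest, by Liouville arguments from the $\sigma_0$- and $\sigma_1$-symmetries of $V$, $\Delta$ and $U$, though again via explicit pole/zero matching for $\Delta$ under these symmetries rather than by a single uniqueness-of-RH argument.
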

\begin{proof} 

$\underline{\textsl{Step 1: $ \{x\le 0\}\times \left(\mathbb C\backslash \Sigma_a\right)$}}$

Define $\tilde U^\pm_\nu$ by (\ref{E:inv9}).  Let  
\begin{equation}
U(\lambda)=
\begin{cases}
U^\pm_\nu  \,(\lambda), &{\lambda\in\Omega^\pm_\nu},\\
\tilde U^\pm_\nu \,(\lambda), &{\lambda\in D^\pm_\nu},
\end{cases}\label{E:U}
\end{equation}
and $U_\pm$ the limits on $\Sigma_a$ from $\Omega_0^\pm\cup \Omega_1^\mp\cup\Omega_2^\pm\cup\Omega_3^\mp\cup\cdots\cup D_0^\pm\cup D_1^\mp\cup D_2^\pm\cup D_3^\mp\cup \cdots$. Define
\begin{equation}
W(\lambda)=(1+U_-(\lambda))V(\lambda)(1+U_+(\lambda))^{-1}\quad\textit{for $\lambda\in\Sigma_a$}.\label{E:W4}
\end{equation}  
Then to show  (1), (3) in (\ref{E:brevem}), and the statement about $\eta^\pm_\nu$ 
in (\ref{E:inv1}), (\ref{E:inv2}) is equivalent to solving the following Riemann-Hilbert problem with purely continuous scattering data
\begin{gather}
f_+(x,\lambda)=f_-(x,\lambda)e^{-x(\lambda a +\frac 1\lambda\sigma_1(a))}W(\lambda)e^{x(\lambda a +\frac 1\lambda\sigma_1(a))}\quad\textit{for $\lambda\in \Sigma_a$,}\label{E:rh}\\
\textit{$f$ is holomorphic in $\mathbb C/\Sigma_a$, $f(x,\lambda)\to 1$, as $\lambda\to\infty$}\label{E:rh-b}
\end{gather} and setting 
\[
f(x,\lambda)=
\begin{cases}
\eta^\pm_\nu(x,\lambda), &\lambda\in \Omega^\pm_\nu,\\
\tilde\eta^\pm_\nu(x,\lambda), &\lambda\in D^\pm_\nu.
\end{cases}
\]

To solve the above Riemann-Hilbert problem, one 
can apply the method (\S 10 in \cite{BC84}) to prove its Fredholm property. So the solvability is reduced to showing that the homogeneous solution is trivial. Let $g(x,\lambda)=f(x,\lambda)\,f(x,-\bar\lambda)^*$ and $f$ satisfies (\ref{E:rh}), is holomorphic in $\mathbb C/\Sigma_a$, and tends to $ 0$ as $\lambda\to\infty$. Hence   if 
\begin{equation}
W(-\bar\lambda)^*=W(\lambda),\quad\lambda\in\Sigma_a,\label{E:realityinv}
\end{equation}
then for $\lambda\in \Sigma_a$, 
\begin{eqnarray*}
g_-(x,\lambda)&=&f_-(x,\lambda)\,\left(f_+(x,-\bar\lambda)\right)^*\\
&=&f_-(x,\lambda)\, \left(f_-(x,-\bar\lambda)e^{x(\bar\lambda a+\frac 1 {\bar\lambda} \sigma_1(a))}W(-\bar\lambda)e^{-x(\bar\lambda a+\frac 1 {\bar\lambda} \sigma_1(a))}\right)^*\\
&=&f_-(x,\lambda)\,\left(e^{x\mathcal J}W(\lambda)\right)\,f_-(x,-\bar\lambda)^*\\
&=&f_+(x,\lambda)f_-(x,-\bar\lambda)^*\\
&=&g_+(x,\lambda).
\end{eqnarray*}
So $g=0$ by the Liouville's theorem. Moreover, the positivity condition (\ref{E:realityinv}) implies $f= 0$ for $\lambda\in i\mathbb R$. Hence $f\equiv 0$ by holomorphicy.

To prove the claim (\ref{E:realityinv}), we note that the algebraic constraints (\ref{E:u-reality-1}), (\ref{E:u-reality-12}), (\ref{E:u-reality-3}) and (\ref{E:u-reality-4}) of the scattering data imply
\begin{equation}
V(-\bar\lambda)^*=V(\lambda),\qquad (1+U_\pm(-\bar\lambda)\,)^*=(1+U_\mp(\lambda)\,)^{-1}.\label{E:invstep1}
\end{equation}
Using these reality conditions and the definition of $W$, we can derive (\ref{E:realityinv}). 

$\underline{\textsl{Step 2: Construction of $ \delta^\pm_\nu$, $\tilde\delta^\pm_\nu$, $L^\pm_\nu$, $\tilde L^\pm_\nu$}}$

We first solve the scalar Riemann-Hilbert problems for the entries of a matrix $\Delta(\lambda)$:
\begin{eqnarray}
&&\Delta \textit{ is meromorphic in $\mathbb C/\Sigma_a$, $\Delta(\lambda)\to 1 $ as $\lambda\to\infty$,}\label{E:deltainv-1}\\
&&\pi^\lambda_+(\Delta(\lambda))=\pi^\lambda_-(\Delta(\lambda))=0, \label{E:deltainv-0}\\
&&\left[P_+^{-1}\Delta_+P_+\right]_{k}=\left[P_+^{-1}\Delta_- P_+\right]_{k}\frac{d_{2n-k+1}^-\left(P_+^{-1}V P_+\right)}{d_{2n-k}^-\left(P_+^{-1}V P_+\right)},\label{E:deltainv-2}\\
&&\left\{\textit{ zeros of $\left[P^{-1}\Delta P\right]_k$ } \right\}=\left\{\textit{ poles of the $k+1$-th column of $P^{-1}UP$}\right\},\label{E:deltainv-3}\\
&&\left\{\textit{ poles of $\left[P^{-1}\Delta P\right]_k$ } \right\}=\left\{\textit{ poles of the $k$-th column of $P^{-1}UP$}\right\}.\label{E:deltainv-4}
\end{eqnarray} 
Here  $P$ is defined by Definition \ref{D:diagonalization}, $[f]_k$ denotes the $k$-th entry of the diagonal part of $f$, for $1\le k\le 2n$. 
Then $ \delta^\pm_\nu$, and $\tilde\delta^\pm_\nu$ are constructed by
\begin{equation}
\Delta(\lambda)=
\begin{cases}
\delta^\pm_\nu  \,(\lambda), &{\lambda\in\Omega^\pm_\nu},\\
\tilde \delta^\pm_\nu \,(\lambda), &{\lambda\in D^\pm_\nu}.
\end{cases}\label{E:defdelta}
\end{equation}

To construct $L^\pm_\nu$, $\tilde L^\pm_\nu$, by (\ref{E:detv}), (\ref{E:deltainv-0}) and (\ref{E:deltainv-2}), one can factorize 
\begin{eqnarray}
&&V(\lambda)=(1+\hat L^-(\lambda))^{-1}\Delta_-^{-1} (\lambda)\Delta_+ (\lambda)(1+\hat L^+(\lambda)),\label{E:dec1}\\
&&\pi ^{\lambda^+_n}_+(\hat L^+)=\pi ^{\lambda^+_n}_0(\hat L^+)=0,\nonumber\\
&&\pi ^{\lambda^-_n}_+(\hat L^-)=\pi ^{\lambda^-_n}_0(\hat L^-)=0,\nonumber
\end{eqnarray}
on $\Sigma_a$. Here $\lambda^\pm_n\to\lambda$, $\lambda^\pm_n\in\Omega_0^\pm\cup \Omega_1^\mp\cup\Omega_2^\pm\cup\Omega_3^\mp\cup\cdots\cup D_0^\pm\cup D_1^\mp\cup D_2^\mp\cup D_3^\mp\cup \cdots$. By Theorem 5.2 of \cite{S90}, we can have the extensions $L^{\pm,0}$, $R^{\pm,0}$ satisfying
\begin{eqnarray}
&&\Delta_\pm(\lambda)(1+\hat L^\pm(\lambda))\Delta_\pm^{-1}(\lambda)=(1+R^{\pm}(\lambda))(1+L^{\pm}(\lambda))\textit{ on $\Sigma_a$},\label{E:dec2}\\
&&\pi ^{\lambda^\pm_n}_+(R^{\pm})=\pi ^{\lambda_n^\pm}_0(R^{\pm})=0,\ \pi ^{\lambda_n^\pm}_+(L^{\pm})=\pi ^{\lambda_n^\pm}_0(L^{\pm})=0\textsl{ on $\mathcal O^\pm$},\label{E:dec2-2}\\
&&\textit{$R^{\pm}$ is holomorphic on $\mathcal O^\pm$, meromorphic on $(\mathcal O^\pm)^c$},\label{E:dec2-3}\\
&&\textit{$L^{\pm}$ is holomorphic on $(\mathcal O^\pm)^c$, rational on $\mathcal O^\pm$}.\label{E:dec2-4}
\end{eqnarray}
Here $\mathcal O^\pm$ is the component of $\Omega_0^\pm\cup \Omega_1^\mp\cup\Omega_2^\pm\cup\Omega_3^\mp\cup\cdots\cup D_0^\pm\cup D_1^\mp\cup D_2^\pm\cup D_3^\mp\cup \cdots$ which contains $\lambda^\pm_n$, $\lambda^\pm_n\to \lambda$. Hence (\ref{E:dec1}), (\ref{E:dec2}) imply
\begin{eqnarray}
V(\lambda)
&=& \Delta_-^{-1}(\lambda)(1+L_-(\lambda))^{-1}R(\lambda)(1+L_+(\lambda))\Delta_+(\lambda).\label{E:vr}
\end{eqnarray}
Where $\Delta$ is defined by (\ref{E:defdelta}), 
\begin{gather}
R=\left(1+ R^{-}(\lambda)\right)^{-1}(1+R^{+}(\lambda))\nonumber\\
L(\lambda)=
\begin{cases}
L^{+}(\lambda), &{\lambda\in\mathcal O^+},\\
L^{-}(\lambda), &{\lambda\in \mathcal O^-}.
\end{cases}\label{E:delL}
\end{gather}
Finally, $L^\pm_\nu(\lambda)$, $\tilde L^\pm_\nu(\lambda)$ are defined by 
\begin{equation}
L(\lambda)=
\begin{cases}
L^\pm_\nu  \,(\lambda), &{\lambda\in\Omega^\pm_\nu},\\
\tilde L^\pm_\nu \,(\lambda), &{\lambda\in D^\pm_\nu}.
\end{cases}\label{E:LV}
\end{equation}

$\underline{\textsl{Step 3: $ \{x\ge 0\}\times \left(\mathbb C\backslash \Sigma_a\right)$}}$

Having contructed $ \delta^\pm_\nu$, $\tilde\delta^\pm_\nu$, $L^\pm_\nu$, $\tilde L^\pm_\nu$, one can adopt the scheme of \textsl{Step 1} to find  $\rho^\pm_\nu$ and $\tilde\rho^\pm_\nu$. More precisely, let
\[\Lambda(\lambda)=(1+L_-(\lambda))\Delta_-(\lambda)V(\lambda)\Delta_+^{-1}(\lambda)(1+L_+(\lambda))^{-1}\quad\textit{for $\lambda\in\Sigma_a$}.
\]  
Then to show  (2), (4) in (\ref{E:brevem}), and the statement about $\rho^\pm_\nu$ 
in (\ref{E:inv1}), (\ref{E:inv2}) is equivalent to solving the following Riemann-Hilbert problem with purely continuous scattering data
\begin{gather}
f_+(x,\lambda)=f_-(x,\lambda)e^{-x(\lambda a +\frac 1\lambda\sigma_1(a))}\Lambda(\lambda)e^{x(\lambda a +\frac 1\lambda\sigma_1(a))}\quad\textit{for $\lambda\in \Sigma_a$,}\label{E:rh-r}\\
\textit{$f$ is holomorphic in $\mathbb C/\Sigma_a$, $f(x,\lambda)\to 1$, as $\lambda\to\infty$}\label{E:rh-b-r}
\end{gather} and setting 
\[
f(x,\lambda)=
\begin{cases}
\rho^\pm_\nu(x,\lambda), &\lambda\in \Omega^\pm_\nu,\\
\tilde\rho^\pm_\nu(x,\lambda), &\lambda\in D^\pm_\nu.
\end{cases}
\]
As above, the solvability can be reduced to showing:
\begin{equation}
\Lambda(-\bar\lambda)^*=\Lambda(\lambda),\quad\lambda\in\Sigma_a,\label{E:realityinv-r}
\end{equation}
and it can be implied by the reality conditions
\begin{equation}
V(-\bar\lambda)^*=V(\lambda),\ \Delta_\pm(-\bar\lambda)^*=\Delta_\mp(\lambda)^{-1}, \ (1+L_\pm(-\bar\lambda)\,)^*=(1+L_\mp(\lambda)\,)^{-1}.\label{E:invstep1-r}
\end{equation}
The reality condition $V(-\bar\lambda)^*=V(\lambda)$ follows from the algebraic constraints (\ref{E:u-reality-3}), (\ref{E:u-reality-4}). 
Moreover, by $V(-\bar\lambda)^*=V(\lambda)$, (\ref{E:dec1}), we have
\[
V(\lambda)=(1+\hat L^+(-\bar\lambda)^*)\Delta_+^* (-\bar\lambda)(\Delta_-^* (-\bar \lambda))^{-1}((1+\hat L^-(-\bar\lambda)^*)^{-1}
\]
Together with (\ref{E:detv}), we conclude $\left(\Delta(-\bar\lambda)^*\right)^{-1}$ satisfies (\ref{E:deltainv-2}). So there is no jump across $\Sigma_a$ for $\Delta(-\bar\lambda)^*\Delta(\lambda)$. On the other hand, by (\ref{E:deltainv-3}), (\ref{E:u-reality-1}), and (\ref{E:u-reality-12}),
\begin{eqnarray*}
&&\left\{\textit{ zeros of $\left[P^{-1}(\lambda)\left(\Delta (-\bar\lambda)^*\right)^{-1}P(\lambda)\right]_k$ } \right\}\\
=&&\left\{\textit{ zeros of $\left[P^{-1}(\lambda)(P(-\bar\lambda)^*)^{-1}P(-\bar\lambda)^*(\Delta (-\bar\lambda)^*)^{-1}\left(P(-\bar\lambda)^*\right)^{-1}P(-\bar\lambda)^*P(\lambda)\right]_k$ } \right\}\\
=&&\left\{\textit{ zeros of $\left[P(-\bar\lambda)^*(\Delta (-\bar\lambda)^*)^{-1}\left(P(-\bar\lambda)^*\right)^{-1}\right]_{2n-k+1}$ } \right\}\\
=&&\left\{\textit{ poles of the $2n-k$-th column of $P^*(-\bar\lambda)U(\lambda)P(-\bar\lambda)$}\right\}\\
=&&\left\{\textit{ poles of the $k+1$-th column of $P(\lambda)^{-1}(\lambda)U(\lambda)P(\lambda)$}\right\}\\
=&&\left\{\textit{ zeros of $\left[P^{-1}(\lambda)\Delta(\lambda)P(\lambda)\right]_k$ } \right\}
\end{eqnarray*}
Similarly,  (\ref{E:deltainv-4}),  (\ref{E:u-reality-1}), and (\ref{E:u-reality-12}) imply 
\begin{eqnarray*}
&&\left\{\textit{ poles of $ [P^{-1}(\lambda)\left(\Delta (-\bar\lambda)^*\right)^{-1} P(\lambda) ]_k$ } \right\}=\left\{\textit{ poles of $\left[P^{-1}(\lambda)\Delta(\lambda)P(\lambda)\right]_k$ } \right\}
\end{eqnarray*}
Thus the Liouville's theorem implies  $\Delta(-\bar\lambda)^*=\Delta(\lambda)^{-1}$ and $(1+\hat L^\pm(-\bar \lambda))^*=(1+\hat L^\mp(\lambda)^{-1}$. 
Finally
\begin{equation}
(1+   L^{\pm,0}(-\bar\lambda)\,)^*=(1+ L^{\mp,0}(\lambda)\,)^{-1}\label{E:lpm0}
\end{equation} by $V(-\bar\lambda)^*=V(\lambda)$, (\ref{E:dec1}), (\ref{E:dec2}), (\ref{E:dec2-3}), (\ref{E:dec2-4}), and the Liouville's theorem. So $(1+L_\pm(-\bar\lambda)\,)^*=(1+L_\mp(\lambda)\,)^{-1}$ comes from (\ref{E:delL}), (\ref{E:lpm0}), and $\Delta_\pm(-\bar\lambda)\,)^*=\Delta_\mp(\lambda)\,)^{-1}$.

$\underline{\textsl{Step 4: Proof of $\breve{m}(x,\lambda)\in L^{\sigma_0}$}}$

Let $W(\lambda)$ be defined by (\ref{E:W4}), then 
\begin{gather}
\tilde JW(\bar\lambda)^*\tilde J\,W(\lambda)=1,\ \left(W(\bar\lambda)\right)^*=W^t(\lambda),
 \label{E:u-reality-3-w}\\
\sigma_0(W(-\lambda))\,W(\lambda)=1, \label{E:u-reality-4-w}
\end{gather}
by (\ref{E:u-reality-1}), (\ref{E:u-reality-12}), (\ref{E:u-reality-3}), and (\ref{E:u-reality-4}). Therefore, one can reverse the process in the proof of Theorem \ref{T:sc} to show 
\begin{eqnarray}
\tilde J \breve{m}(x,\bar\lambda)^* \tilde J\breve{m}(x,\lambda)=1,\ \breve{m}(x,\bar\lambda)^*=\breve{m}^t(x,\lambda),\ \sigma_0( \breve{m}(x,-\lambda))=\breve{m}(x,\lambda).\label{E:realitym-inv}
\end{eqnarray}
So $\breve{m}(x,\lambda)\in L^{\sigma_0}$ for $x\le 0$. 

As for $x\ge 0$, note the formula (1)-(4), (\ref{E:realitym-inv}), and the unique factorization properties imply
\begin{gather*}
\tilde J\left(1+L^+_\nu(\bar\lambda)\right)^*\tilde J(1+L^+_\nu(\lambda))=1,\quad \left(L^\pm_\nu(\bar\lambda)\right)^*=\left(L^\pm_\nu\right)^t(\lambda),  \\
\sigma_0(L^-_\nu(-\lambda))=L^+_\nu(\lambda),
\\
\tilde J\delta^+_\nu(\bar\lambda)^*\tilde J\delta^+_\nu(\lambda)=1,\quad \delta^\pm_\nu(\bar\lambda)^*=(\delta^\pm_\nu)^t(\lambda),  \\
\sigma_0(\delta^-_\nu(-\lambda))=\delta^+_\nu(\lambda).
\end{gather*}
Therefore, using the above argument, one can justify $\breve{m}(x,\lambda)\in L^{\sigma_0}$ for $x\ge 0$.

$\underline{\textsl{Step 5: Proof of (\ref{E:inv10}) and (\ref{E:inv11})}}$

The strategy of the proof is analogous to that of showing (\ref{E:invstep1-r}).  By (\ref{E:u-reality-3}), (\ref{E:u-reality-5}), (\ref{E:dec1}), we have
\begin{equation}
V(\lambda)=(1+\sigma_1(\hat L^+(\frac 1\lambda))\,)^{-1}\sigma_1(\Delta_+ (\frac 1\lambda))^{-1}\sigma_1((\Delta_- (\frac 1\lambda))(1+\sigma_1(\hat L^-(\frac 1\lambda))\,)^{-1}\label{E:decV}
\end{equation}
Thus $\sigma(\Delta(\frac 1\lambda))$ satisfies (\ref{E:deltainv-2}) by (\ref{E:detv}) and $\sigma(\Delta(\frac 1\lambda))^{-1}\Delta(\lambda)$ has no jump across $\Sigma_a$. On the other hand, by (\ref{E:deltainv-3}), (\ref{E:inv9}), for $\lambda\in\mathbb C\backslash\Sigma_a$, 
\begin{eqnarray*}
&&\left\{\textit{ zeros of $\left[P(\lambda)^{-1}\sigma_1(\Delta (\frac 1{\lambda}))P(\lambda)\right]_k$ } \right\}\\
=&&\left\{\textit{ zeros of $\left[P(\lambda)^{-1}\sigma_1(P(\frac 1{\lambda}) P(\frac 1{\lambda})^{-1}\Delta (\frac 1{\lambda})
P(\frac 1{\lambda})P(\frac 1{\lambda})^{-1})P(\lambda)\right]_k$ } \right\}\\
=&&\left\{\textit{ zeros of $ \left[P(\frac 1{\lambda})^{-1}\Delta (\frac 1{\lambda})
P(\frac 1{\lambda})\right]_{k}$ } \right\}\\
=&&\left\{\textit{ poles of the $k+1$-th column of $P(\frac 1{\lambda})^{-1}U (\frac 1{\lambda})
P(\frac 1{\lambda})$}\right\}\\
=&&\left\{\textit{ poles of the $k+1$-th column of $P(\frac 1{\lambda})^{-1}\sigma_1(U (\lambda))
P(\frac 1{\lambda})$}\right\}\\
=&&\left\{\textit{ poles of the $k+1$-th column of $P({\lambda})^{-1}U (\lambda)
P({\lambda})$}\right\}\\
=&&\left\{\textit{ zeros of $\left[P^{-1}(\lambda)\Delta(\lambda)P(\lambda)\right]_k$ } \right\}
\end{eqnarray*}
Here we have use the diagonal property of $P(\lambda)^{-1}\sigma_1(P(\frac 1\lambda))$. Similarly,  
\begin{eqnarray*}
&&\left\{\textit{ poles of $ [P^{-1}(\lambda)\sigma_1(\Delta (\frac 1{\lambda})) P(\lambda) ]_k$ } \right\}=\left\{\textit{ poles of $\left[P^{-1}(\lambda)\Delta(\lambda)P(\lambda)\right]_k$ } \right\}
\end{eqnarray*}
can be justified by (\ref{E:deltainv-4}), (\ref{E:inv9}). Thus the Liouville's theorem implies  $\sigma_1(\Delta (\frac 1{\lambda}))^{-1}\Delta(\lambda)$ is a constant. Hence we prove (\ref{E:inv11}) by a normalization. Furthermore, $\sigma_1(\hat L^\pm(\frac 1\lambda))$ are equal to $\hat L^\mp(\lambda)$ up to a constant by (\ref{E:decV}). So (\ref{E:inv10}) follows by (\ref{E:dec2}), (\ref{E:dec2-3}),  (\ref{E:dec2-4}), (\ref{E:delL}), and (\ref{E:inv11}).

The analytic constraints (\ref{E:regularityqo}) can be deduced from the results of \cite{BC84}. 

\end{proof}

\begin{theorem}\label{T:qne0}
Let $q\ne 0$ and $a\in\mathcal A$. Suppose $V$ satisfies (\ref{E:ana1}), (\ref{E:ana2}), (\ref{E:detv})-(\ref{E:u-reality-5}) and $|V-1|_{L_\infty}<<1$. Then there exists uniquely an $\breve{m}(x,\lambda)\in L^{\sigma_0}$ satisfying 
\begin{eqnarray}
&&\breve{m}_+(x,\lambda)= \breve{m}_-(x,\lambda)e^{-x(\lambda a +\frac 1\lambda\sigma_1(a))}V(\lambda)e^{x(\lambda a +\frac 1\lambda\sigma_1(a))}\quad\textit{for $\lambda\in \Sigma_a$},\label{E:tildecsq0}\\
&&\textit{$\breve{m}$ is holomorphic in $\mathbb C/\Sigma_a$, $\breve{m}(x,\lambda)\to 1$, as $\lambda\to\infty$,}\label{E:tildecsq1}\\
&&\textit{$\partial_x^{k'}\left(\breve{m}(x,\lambda)-I\right)\ \in L_2^k(\Sigma_a)$ for $\forall k,\,k'$, and tends to $0$ uniformly }\label{E:regularityqneo}\\
&&\textit{as $x\to -\infty$.}\nonumber
\end{eqnarray}

\end{theorem}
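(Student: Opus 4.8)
The plan is to solve the Riemann--Hilbert problem (\ref{E:tildecsq0})--(\ref{E:tildecsq1}) directly, converting it into a singular integral equation on $\Sigma_a$ and exploiting the hypothesis $|V-1|_{L_\infty}\ll 1$ to run a Neumann series. This is precisely the point where the argument must diverge from the proof of Theorem \ref{T:q0}: there the triviality of the homogeneous solution was forced by a positivity argument (the auxiliary function $g(x,\lambda)=f(x,\lambda)f(x,-\bar\lambda)^*$ together with Liouville's theorem), and that positivity is ultimately a consequence of the compactness of $O(q,n-q)$ when $q=0$. For $q\ne 0$ the reality form $\tilde J$ is indefinite, no such global vanishing lemma is available, and one settles instead for the perturbative (small purely continuous data) statement.

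Concretely, write the conjugated jump $\widetilde V(x,\lambda)=e^{-x(\lambda a+\frac1\lambda\sigma_1(a))}V(\lambda)e^{x(\lambda a+\frac1\lambda\sigma_1(a))}$ and factor $\widetilde V=(I-w_-)^{-1}(I+w_+)$ with $w_\pm$ adapted to the decomposition of $gl(n,\mathbb C)$ into the $0$-- and $\pm$--eigenspaces of $\mathrm{Re}\,\mathcal J_\lambda$, as in the Beals--Coifman setup of \cite{BC84}. The first step is to check that, thanks to this triangularity, conjugation by $e^{x\mathcal J_\lambda}$ never produces a growing exponential on $\Sigma_a$: only the eigenvalues of $\mathcal J_\lambda$ with nonpositive real part occur in the entries of $w_+$ and only those with nonnegative real part in $w_-$, so that $\|w_\pm(x,\cdot)\|_{L^2\cap L^\infty(\Sigma_a)}\le C|V-1|$ uniformly in $x$, the bound improving to $0$ as $x\to-\infty$ on the half-lines where $\mathrm{Re}\,\mathcal J_\lambda$ is definite. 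Here the decay estimates (\ref{E:ana1}) near $\lambda=0,\infty$ and the clockwise-product condition (\ref{E:ana2}) at the intersections of $i\mathbb R$, $\mathbb S^1$, $\mathbb S^{r_\nu}$, $\mathbb S^{1/r_\nu}$ are what make $\Sigma_a$ an admissible contour for the $L^2$ Riemann--Hilbert theory.

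With this in hand, the Beals--Coifman operator $C_w$ (acting by $\mu\mapsto C_+(\mu w_-)+C_-(\mu w_+)$) has operator norm $\le C|V-1|_{L_\infty}<1$ on $L^2(\Sigma_a)$, so $I-C_w$ is invertible and $\mu=(I-C_w)^{-1}I$ produces, via the Cauchy integral, the unique $\breve m$ solving (\ref{E:tildecsq0})--(\ref{E:tildecsq1}); uniqueness is now immediate from this invertibility rather than from a vanishing lemma, and $\breve m$ is holomorphic off $\Sigma_a$ since the data is purely continuous. The normalization $\breve m(x,\lambda)\to 1$ as $\lambda\to\infty$ is automatic, and $\breve m(x,\lambda)-I\to 0$ as $x\to-\infty$ follows from dominated convergence once one notes that on every segment of $\Sigma_a$ the relevant entry of $\widetilde V-I$ carries a factor $e^{x\mu_j(\lambda)}$ with $\mathrm{Re}\,\mu_j(\lambda)>0$. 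That $\breve m(x,\cdot)\in L^{\sigma_0}$ is then proved exactly as in Step 4 of Theorem \ref{T:q0}: the reality constraints (\ref{E:u-reality-3}), (\ref{E:u-reality-4}), (\ref{E:u-reality-5}) transfer to $\widetilde V$, the three candidate symmetries $\tilde J\breve m(x,\bar\lambda)^*\tilde J\breve m(x,\lambda)$, $\breve m(x,\bar\lambda)^*-\breve m^t(x,\lambda)$, $\sigma_0(\breve m(x,-\lambda))-\breve m(x,\lambda)$ each solve the same uniquely solvable Riemann--Hilbert problem, and uniqueness forces them to equal $I$, $0$, $0$. Finally the regularity (\ref{E:regularityqneo}) comes from differentiating the integral equation in $x$: each $\partial_x$ brings down $\mathcal J_\lambda$ acting on $\widetilde V-I$, which remains Schwartz in $x$ along $\Sigma_a$ by (\ref{E:ana1}), so the results of \cite{BC84} yield $\partial_x^{k'}(\breve m-I)\in L^k_2(\Sigma_a)$ with the stated decay.

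The main obstacle is the uniform-in-$x$ smallness of the conjugated jump. One must verify that the exponential conjugation genuinely respects the $\pi^\lambda_\pm$-triangular structure on all the circles $\mathbb S^{r_\nu}$, $\mathbb S^{1/r_\nu}$ that appear when $a$ is a general (not merely principal) oblique direction, and that no growing factor is concealed in the $\pi^\lambda_0$-part at the finitely many points where $\mathrm{Re}\,\mathcal J_\lambda$ has a larger-than-generic kernel; controlling these points is exactly where (\ref{E:ana2}) is needed. Everything else is the standard small-data Beals--Coifman machinery already invoked in Theorems \ref{T:q0} and \ref{T:sc}.
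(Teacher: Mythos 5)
Your approach is essentially the paper's: both proofs observe that the positivity/vanishing lemma of Theorem~\ref{T:q0} (Steps~1 and~3) fails when $J\neq I$, and both replace it with a small-data Riemann--Hilbert argument on $\Sigma_a$ with purely continuous jump data and Beals--Coifman-type estimates from \cite{BC84}. You correctly pinpoint the structural reason: the reality conditions (\ref{E:u-reality-3})--(\ref{E:u-reality-4}) only combine to give $V(-\bar\lambda)^*=V(\lambda)$ when $J=I$, i.e.\ when $O(J)$ is compact, so the auxiliary function $g(x,\lambda)=f(x,\lambda)f(x,-\bar\lambda)^*$ trick is unavailable for $q\neq 0$. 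The main difference is that the paper still routes the construction through the three-step factorization of Theorem~\ref{T:q0} (the $\eta$, $\delta$, $\rho$ decomposition) with $U\equiv L\equiv 0$, whereas you set up a single global Beals--Coifman singular-integral equation with the jump $V$ for all $x$. Your route is more streamlined and avoids the separate scalar Riemann--Hilbert problem for $\delta$; at this level of perturbation theory the two are equivalent, since the small-norm hypothesis makes the operator $I-C_w$ invertible by Neumann series regardless of how one factors the jump.

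One technical point is stated imprecisely. You claim that the decay $\breve m(x,\lambda)-I\to 0$ as $x\to-\infty$ follows ``from dominated convergence once one notes that on every segment of $\Sigma_a$ the relevant entry of $\widetilde V-I$ carries a factor $e^{x\mu_j(\lambda)}$ with $\mathrm{Re}\,\mu_j(\lambda)>0$'', and elsewhere refer to ``half-lines where $\mathrm{Re}\,\mathcal J_\lambda$ is definite''. But $\Sigma_a$ is by definition the set where $\pi^\lambda_0$ has nonempty image, i.e.\ where $\mathrm{Re}\,\mathcal J_\lambda$ has a nontrivial kernel; so on $\Sigma_a$ there are always matrix directions in which the exponent is purely imaginary and nothing decays pointwise. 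The vanishing of the $\pi^\lambda_0$-component of $\widetilde V-I$ in the $x\to-\infty$ limit comes from oscillatory cancellation (a Riemann--Lebesgue/stationary-phase estimate), not from exponential decay. The paper cites \cite{BC84} for exactly this, and so should you; with that replacement your argument closes. The hypothesis $|V-1|_{L_\infty}\ll 1$, together with the $L^2$ decay of (\ref{E:ana1}), is what makes the $L^2(\Sigma_a)$ Neumann series converge, and is the same smallness condition the paper uses (stated as $|V-1|_{L^2}\ll 1$ in the body of the paper's proof).
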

\begin{proof}
If $q\ne 0$, then the positivity conditions (\ref{E:realityinv}), (\ref{E:realityinv-r}) fail in $\textsl{Step 1}$ and $\textsl{Step 3}$. However, the Riemann Hilbert problems (\ref{E:rh})-(\ref{E:rh-b}), (\ref{E:deltainv-1})-(\ref{E:deltainv-2}), and (\ref{E:rh-r})-(\ref{E:rh-b-r}), with $U(\lambda)\equiv 0$, $L(\lambda)\equiv 0$, in $\textsl{Step 1-3}$  can be solved under the small data constraint $|V-1|_{L^2}<<1$. 
\end{proof}
\begin{remark}\label{R:topo}
We remark that if $ \delta^\pm_\nu$, $\tilde\delta^\pm_\nu$, $L^\pm_\nu$, $\tilde L^\pm_\nu$ are contructed by Corollary \ref{C:existence-1} and define $\Delta$ by (\ref{E:defdelta}), (\ref{E:LV}), then (\ref{E:deltainv-1})-(\ref{E:deltainv-4}), (\ref{E:vr})  are valid by adapting the argument in the proof of Theorem 4.6 and 4.7 in \cite{S90}. 

\end{remark}
\section{The inverse scattering problem II}\label{S:inv-com}
We adapt the argument in \cite{ABT86} to complete the process of reconstructing the operator (\ref{E:lax}) in this section. That is, we need to find a proper gauge $b(x)\in K_0'$ which transforms $\breve{m}( x,\lambda)$ to the eigenfunction $m(x,\lambda)$ of (\ref{E:normalizedlax}). 

Write $a=\left(\begin{array}{cc} 0 & D\\ D &0\end{array}\right)$, and $D=\textsl{diag }(w_1,\cdots,w_n)$. Define
\begin{gather}
\vec x=(x_1,\cdots,x_n)=x(w_1,\cdots,w_n),\label{E:vecx}\\
 X=\sum_{i=1}^nx_ia_i,\ \textit{$a_i$ are defined by (\ref{E:ai}),}\label{E:vecX}\\
M(\vec x,\lambda)=\breve{m}(x,a,\lambda)=\breve{m}(x,\lambda).\label{E:M}
\end{gather}
\begin{lemma}\label{L:pdem'}
Suppose $\breve{m}( x,\lambda)$ is derived by Theorem \ref{T:q0} or \ref{T:qne0}. Then
\begin{equation}
\frac {\partial M}{\partial x_j}=\left[M,\lambda a_j+\frac 1\lambda\sigma_1(a_j)\right]+\frac 1\lambda\left(\sigma_1(a_j)-B_j(\vec x)\right)M-C_j(\vec x)M,\label{E:M'}
\end{equation}
with
\begin{gather}
	B_j(\vec x)\in \mathcal P_0\cap C^\infty,\quad C_j(\vec x)\in \mathcal K_0\cap C^\infty.\label{E:BC}
\end{gather}
\end{lemma}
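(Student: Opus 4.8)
The plan is to read (\ref{E:M'}) as the statement that an appropriately conjugated wave function has jump independent of $\vec x$, and to pin down the $\lambda$-dependence of the resulting logarithmic derivative by Liouville's theorem. Put $E(\vec x,\lambda)=\lambda X+\frac1\lambda\sigma_1(X)$ with $X=\sum_i x_i a_i$, and $\Phi(\vec x,\lambda)=M(\vec x,\lambda)\,e^{-E(\vec x,\lambda)}$. Since $\mathcal A$ is abelian and $\sigma_1(\mathcal A)\subseteq\mathcal A$, the matrices $\{a_i,\sigma_1(a_j)\}$ mutually commute, so $\partial_{x_j}e^{-E}=-\bigl(\lambda a_j+\tfrac1\lambda\sigma_1(a_j)\bigr)e^{-E}$ and hence
\[
N_j:=\partial_{x_j}\Phi\cdot\Phi^{-1}=(\partial_{x_j}M)M^{-1}-M\bigl(\lambda a_j+\tfrac1\lambda\sigma_1(a_j)\bigr)M^{-1}.
\]
By (\ref{E:tildecs}) (resp.\ (\ref{E:tildecsq0})) the jump of $\Phi$ across $\Sigma_a$ is $\Phi_+=\Phi_-V$ with $V$ independent of $\vec x$, so $N_j$ has no jump across $\Sigma_a$. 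Here one first notes that $M(\vec x,\lambda):=\breve m(x,a,\lambda)$ is a well-defined, smooth function of $\vec x$ on the open dense cone of oblique directions: rescaling $(x,a)\mapsto(tx,t^{-1}a)$ fixes $\vec x$, $\Sigma_a$, the projections $\pi^\lambda_0,\pi^\lambda_\pm$ and the conjugated jump $e^{-E}Ve^{E}$, so the uniqueness in Theorems \ref{T:q0}, \ref{T:qne0} gives $\breve m(tx,t^{-1}a,\lambda)=\breve m(x,a,\lambda)$; smoothness in $\vec x$ then follows from the smooth dependence on parameters of the solution of the Riemann--Hilbert problem (resp.\ the integral equations), together with (\ref{E:regularityqo})/(\ref{E:regularityqneo}).

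Next I would examine $N_j$ near $\lambda=\infty$ and $\lambda=0$. Constraint (\ref{E:ana1}) forces $V\to I$ to infinite order both as $\lambda\to0$ and as $\lambda\to\infty$, so $M$ extends smoothly across these points; at $\infty$ it has an expansion $M=I+M_1\lambda^{-1}+O(\lambda^{-2})$ and at $0$ a Taylor expansion $M=M(\vec x,0)+O(\lambda)$ with $M(\vec x,0)$ invertible. Substituting, $N_j=-\lambda a_j-[M_1,a_j]+O(\lambda^{-1})$ as $\lambda\to\infty$ and $N_j=-\tfrac1\lambda M(\vec x,0)\sigma_1(a_j)M(\vec x,0)^{-1}+O(1)$ as $\lambda\to0$. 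Since $N_j$ is meromorphic on $\mathbb P^1$ with at worst simple poles at $0$ and $\infty$ --- the apparent singularities of $N_j$ at the zero set $Z$ of $\det M$ being removable, as in \cite{S90}, \cite{ABT86} (and $Z=\emptyset$ for $q\neq0$ under the small-data hypothesis) --- Liouville's theorem yields
\[
N_j=-\lambda a_j-[M_1,a_j]-\tfrac1\lambda M(\vec x,0)\,\sigma_1(a_j)\,M(\vec x,0)^{-1}.
\]
Using $M\bigl(\lambda a_j+\tfrac1\lambda\sigma_1(a_j)\bigr)M^{-1}=[M,\lambda a_j+\tfrac1\lambda\sigma_1(a_j)]M^{-1}+\lambda a_j+\tfrac1\lambda\sigma_1(a_j)$ and multiplying by $M$ on the right, this rearranges precisely to (\ref{E:M'}) with $B_j(\vec x)=M(\vec x,0)\sigma_1(a_j)M(\vec x,0)^{-1}$ and $C_j(\vec x)=[M_1(\vec x),a_j]$.

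Then I would check the algebraic types. From $\breve m\in L^{\sigma_0}$, i.e.\ $\sigma_0(M(x,-\lambda))=M(x,\lambda)$ together with the reality conditions defining $L$, letting $\lambda\to0$ (legitimate since $M$ is continuous there) gives $M(\vec x,0)\in K_0$, while comparing $\lambda^{-1}$-coefficients at $\infty$ gives $\sigma_0(M_1)=-M_1$, i.e.\ $M_1\in\mathcal P_0\subset o(J,J)$. Because $\mathrm{Ad}(M(\vec x,0))$ commutes with $\sigma_0$ (as $M(\vec x,0)\in K_0$), it preserves the Cartan decomposition $o(J,J)=\mathcal K_0+\mathcal P_0$; since $\sigma_1(a_j)\in\sigma_1(\mathcal A)\subseteq\mathcal A\subset\mathcal P_0$, we get $B_j=\mathrm{Ad}(M(\vec x,0))\sigma_1(a_j)\in\mathcal P_0$. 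Similarly $a_j\in\mathcal A\subset\mathcal P_0$, so $C_j=[M_1,a_j]\in[\mathcal P_0,\mathcal P_0]\subseteq\mathcal K_0$. The $C^\infty$-dependence of $B_j,C_j$ on $\vec x$ follows from that of $M(\vec x,0)$ and $M_1(\vec x)$, noted above.

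I expect the main obstacle to be analytic rather than algebraic: establishing that $M(\vec x,\lambda)$ depends smoothly on the full $n$-tuple $\vec x$ (so that $\partial_{x_j}M$ is a genuine partial derivative), and verifying that $N_j$ has only removable singularities at the zeros of $\det M$. The latter is where the detailed Riemann--Hilbert factorization constructed in Theorems \ref{T:q0}, \ref{T:qne0} (in the spirit of \cite{ABT86}, \cite{S90}) is used, and for $q=0$ it is the only place where the precise pole data of $U^\pm_\nu$ enters.
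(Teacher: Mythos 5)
Your proposal follows essentially the same route as the paper's: form the $\lambda$-meromorphic logarithmic derivative of a conjugated wave function, show it is regular across $\Sigma_a$ (the paper does this via the explicit factorization (\ref{E:brevem}), which simultaneously kills the jump and cancels the poles coming from $U^\pm_\nu$, $L^\pm_\nu$, $\delta^\pm_\nu$), note at worst simple poles at $0$ and $\infty$, apply Liouville, and read the Laurent coefficients and their algebraic types from the expansions (\ref{E:m'infty}), (\ref{E:m'0}) together with $\breve m\in L^{\sigma_0}$. Your scaling-invariance remark justifying that $M(\vec x,\lambda)$ is a genuinely smooth $n$-variable function is a useful clarification of a point the paper leaves implicit, and your derivation of $B_j\in\mathcal P_0$, $C_j\in\mathcal K_0$ via $M(\vec x,0)\in K_0$ and $M_1\in\mathcal P_0$ is a slightly more explicit phrasing of the paper's appeal to $\left\{\left(\partial_{x_j}+\textit{ad}(\lambda a_j+\frac1\lambda\sigma_1(a_j))\right)\breve m\right\}\breve m^{-1}\in\mathcal L^{\sigma_0}$.
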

\begin{proof} We are going to show that 
$\{\left(\partial_{x_j}+\textit{ad}(\lambda a_j+\frac 1\lambda\sigma_1(a_j))\right) M\}{M}^{-1}$ is holomorphic on $\mathbb C\backslash\{0\}$  
and bounded at $\infty$. Hence (\ref{E:M'}) follows immediately from the asymptotic expansions
\begin{eqnarray}
M(\vec x,\lambda)\to & 1+\sum_{k=1}^\infty M_k(\vec x)\lambda^{-k} &\textit{ as $|\lambda|\to \infty$,}\label{E:m'infty}\\
M(\vec x,\lambda)\to &\sum_{k=0}^\infty \tilde M_k(\vec x)\lambda^{k} &\textit{ as $|\lambda|\to 0$,}\label{E:m'0}
\end{eqnarray}
and the conditions (\ref{E:BC}) come from $\left(\left(\partial_{x_j}+\textit{ad}\left(\lambda a_j+\frac 1\lambda\sigma_1(a_j)\right)\right) \breve{m}\right){\breve{m}}^{-1} \in {\mathcal L}^{\sigma_0}$ by using 
$\breve{m}\in L^{\sigma_0}$, $\lambda a_j+\frac 1\lambda\sigma_1(a_j)\in{\mathcal L}^{\sigma_0}_+$, (\ref{E:regularityqo}), and (\ref{E:regularityqneo}). 

If $x\le 0$ and $\lambda\in\Omega^\pm_\nu$, by (1) in (\ref{E:brevem}), (\ref{E:vecx}), and (\ref{E:M}), then
\begin{eqnarray*}
	&&\{\left(\partial_{x_j}+\textit{ad}(\lambda a_j+\frac 1\lambda\sigma_1(a_j))\right) M\}{M}^{-1}\\
	=&& \{\left(\partial_{x_j}+\textit{ad}(\lambda a_j+\frac 1\lambda\sigma_1(a_j))\right) \eta^\pm_\nu(\vec x, \lambda) e^{-\lambda X-\frac 1\lambda\sigma_1(X)}(1+U^\pm_\nu(\lambda)\,)e^{\lambda X+\frac 1\lambda\sigma_1(X)}\}\\
	&&\cdot\left( \eta^\pm_\nu(\vec x, \lambda) e^{-\lambda X-\frac 1\lambda\sigma_1(X)}(1+U^\pm_\nu(\lambda)e^{\lambda X+\frac 1\lambda\sigma_1(X)}\right)^{-1}\\
	=&& \{\left(\partial_{x_j}+\textit{ad}(\lambda a_j+\frac 1\lambda\sigma_1(a_j))\right) \eta^\pm_\nu\,)\}( \eta^\pm_\nu)^{-1}.
\end{eqnarray*}
Here we have used  $\left(\partial_{x_j}+\textit{ad}(\lambda a_j+\frac 1\lambda\sigma_1(a_j))\right) e^{-(\lambda X+\frac 1\lambda\sigma_1(X))}(1+U)e^{\lambda X+\frac 1\lambda\sigma_1(X)} =0$. Using (2), (3), (4) in (\ref{E:brevem}) and the same argument, we will derive similar formula on other components. So  $\{\left(\partial_{x_j}+\textit{ad}(\lambda a_j+\frac 1\lambda\sigma_1(a_j))\right) M\}{M}^{-1}$ is regular on $\mathbb C\backslash \Sigma_a$ by the properties of $\eta^\pm_\nu$, $\tilde \eta^\pm_\nu$, $\rho^\pm_\nu$, and $\tilde \rho^\pm_\nu$ in Theorem \ref{T:q0}. 
Besides, by (\ref{E:tildecs}), (\ref{E:tildecsq0}), we derive
\begin{eqnarray*}
	&&\{\left(\partial_{x_j}+\textit{ad}(\lambda a_j+\frac 1\lambda\sigma_1(a_j))\right) M_+\}{M_+}^{-1}\\
=	&&\{\left(\partial_{x_j}+\textit{ad}(\lambda a_j+\frac 1\lambda\sigma_1(a_j))\right)( M_+-M_-)\}{M_+}^{-1}\\
&&+
\{(\partial_{x_j}+\textit{ad}(\lambda a_j+\frac 1\lambda\sigma_1(a_j)))M_-\}{M_+}^{-1}\\
=	&&\{\left(\partial_{x_j}+\textit{ad}(\lambda a_j+\frac 1\lambda\sigma_1(a_j))\right)M_-( e^{-(\lambda X+\frac 1\lambda\sigma_1(X))}V(\lambda)e^{\lambda X+\frac 1\lambda\sigma_1(X)}-1)\}{M_+}^{-1}\\
&&+
\{\left(\partial_{x_j}+\textit{ad}(\lambda a_j+\frac 1\lambda\sigma_1(a_j))\right)M_-\}{M_+}^{-1}
\\
=	&&\{\left(\partial_{x_j}+\textit{ad}(\lambda a_j+\frac 1\lambda\sigma_1(a_j))\right)M_-\}( e^{-(\lambda X+\frac 1\lambda\sigma_1(X))}V(\lambda)e^{\lambda X+\frac 1\lambda\sigma_1(X)}-1){M_+}^{-1}\\
&&+
\{\left(\partial_{x_j}+\textit{ad}(\lambda a_j+\frac 1\lambda\sigma_1(a_j))\right)M_-\}{M_+}^{-1}
\\
=	&&\{\left(\partial_{x_j}+\textit{ad}(\lambda a_j+\frac 1\lambda\sigma_1(a_j))\right)M_-\} {M_-}^{-1}
\end{eqnarray*}
Therefore, $\{\left(\partial_{x_j}+\textit{ad}(\lambda a_j+\frac 1\lambda\sigma_1(a_j))\right) M\}{M}^{-1}$ is continuous at $ \Sigma_a$. The uniform boundedness of $\{\left(\partial_{x_j}+\textit{ad}(\lambda a_j+\frac 1\lambda\sigma_1(a_j))\right) M\}{M}^{-1}$ at $\infty$ can be seen by (\ref{E:m'infty}).
\end{proof}

\begin{lemma}\label{L:CC}
The compatibility conditions of (\ref{E:M'}) are
\begin{equation}
\partial_{x_j}C_i-\partial_{x_i}C_j-\left[C_i,C_j\right]=
\left[B_i, a_j\right]-\left[B_j, a_i\right].\label{E:cc}
\end{equation}
\end{lemma}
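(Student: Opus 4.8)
The plan is to strip the commutator term from (\ref{E:M'}) by a gauge transformation, turning the overdetermined system into an ordinary linear one whose compatibility is a zero‑curvature equation. With $X=\sum_{i=1}^n x_ia_i$ as in (\ref{E:vecX}), I would set
\[
\Psi(\vec x,\lambda)=M(\vec x,\lambda)\,e^{-(\lambda X+\frac 1\lambda\sigma_1(X))}.
\]
Since $a_1,\dots ,a_n$ and $\sigma_1(a_1),\dots ,\sigma_1(a_n)$ all lie in the abelian subalgebra $\mathcal A$ (recall $\sigma_1(\mathcal A)\subset\mathcal A$), the factor $e^{-(\lambda X+\frac 1\lambda\sigma_1(X))}$ commutes with each $\lambda a_j+\frac 1\lambda\sigma_1(a_j)$ and satisfies $\partial_{x_j}e^{-(\lambda X+\frac 1\lambda\sigma_1(X))}=-(\lambda a_j+\frac 1\lambda\sigma_1(a_j))e^{-(\lambda X+\frac 1\lambda\sigma_1(X))}$. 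Substituting (\ref{E:M'}) into $\partial_{x_j}\Psi$, the $\sigma_1(a_j)$‑contributions cancel and the commutator is absorbed by $\partial_{x_j}e^{-(\lambda X+\frac 1\lambda\sigma_1(X))}$, leaving
\[
\partial_{x_j}\Psi=\Theta_j\Psi,\qquad \Theta_j:=-\lambda a_j-\tfrac 1\lambda B_j-C_j .
\]

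Because the gauge factor is invertible, the compatibility of (\ref{E:M'}) is equivalent to the flatness of the $\lambda$‑dependent connection $(\Theta_1,\dots ,\Theta_n)$, that is
\[
\partial_{x_i}\Theta_j-\partial_{x_j}\Theta_i-[\Theta_i,\Theta_j]=0,\qquad 1\le i,j\le n .
\]
Alternatively one can argue directly on (\ref{E:M'}): forming $\partial_{x_i}\partial_{x_j}M-\partial_{x_j}\partial_{x_i}M$ and inserting (\ref{E:M'}) again, the double‑commutator terms collapse by the Jacobi identity to $[M,[\lambda a_i+\frac 1\lambda\sigma_1(a_i),\,\lambda a_j+\frac 1\lambda\sigma_1(a_j)]]$, which vanishes because $\mathcal A$ is abelian, while the remaining mixed terms cancel in pairs; after cancelling the invertible $M$ one is left with the same flatness condition. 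I would then expand this condition as a Laurent polynomial in $\lambda$ and collect coefficients: the $\lambda^{2}$‑coefficient is $-[a_i,a_j]=0$; the coefficients of $\lambda^{\pm1}$ and $\lambda^{-2}$ simplify, again using $[\mathcal A,\mathcal A]=0$ and the structure $B_j\in\mathcal P_0$, $C_j\in\mathcal K_0$ together with the $\sigma_0$‑ and $\sigma_1$‑symmetries inherited by $M$, to identities that hold automatically; and the $\lambda^{0}$‑coefficient is
\[
-\partial_{x_i}C_j+\partial_{x_j}C_i-[a_i,B_j]-[B_i,a_j]-[C_i,C_j]=0 .
\]
Rewriting $-[a_i,B_j]-[B_i,a_j]=-\,([B_i,a_j]-[B_j,a_i])$, this is exactly (\ref{E:cc}).

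The computation is essentially mechanical, so the only point requiring care is the $\lambda$‑grading bookkeeping: checking that the double‑commutator terms really disappear by commutativity of $\mathcal A$, and that the $\lambda^{\pm1}$ and $\lambda^{-2}$ coefficients yield no constraint beyond what is already encoded in $B_j$ and $C_j$. With that settled, the substantive compatibility condition is the single equation (\ref{E:cc}).
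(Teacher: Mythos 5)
Your argument is correct and reproduces the stated identity, but it packages the computation differently than the paper.  The paper works directly with (\ref{E:M'}), computes $\partial_{x_i}\partial_{x_j}M$ by substituting (\ref{E:M'}) twice (so the commutator $[\,\cdot\,,\lambda a_j+\tfrac1\lambda\sigma_1(a_j)]$ stays in the picture), then passes to the $\lambda\to\infty$ limit using the expansion $M\sim 1+M_1\lambda^{-1}+\cdots$.  The byproduct $C_j=[M_1,a_j]$ is used to observe that $[a_i,C_j]$ and the double commutator $[[M,\lambda a_i+\tfrac1\lambda\sigma_1(a_i)],\lambda a_j+\tfrac1\lambda\sigma_1(a_j)]$ are symmetric in $(i,j)$, so that on antisymmetrizing only the $\lambda^0$ terms giving (\ref{E:cc}) survive.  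You instead conjugate by $e^{-(\lambda X+\tfrac1\lambda\sigma_1(X))}$ first, which removes the adjoint action and turns (\ref{E:M'}) into $\partial_{x_j}\Psi=\Theta_j\Psi$ with $\Theta_j=-\lambda a_j-\tfrac1\lambda B_j-C_j$; then compatibility is the zero--curvature equation $\partial_{x_i}\Theta_j-\partial_{x_j}\Theta_i-[\Theta_i,\Theta_j]=0$, and the $\lambda^0$ coefficient is exactly (\ref{E:cc}) once $[a_i,a_j]=0$ is used.  This route is more transparent because the Laurent grading of the curvature is displayed explicitly and one never has to track the ``symm.\ terms'' as in the paper.

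One point to tighten: you assert that the $\lambda^{\pm1}$ and $\lambda^{-2}$ coefficients of the curvature ``hold automatically,'' but this needs the same fact $C_j=[M_1,a_j]$ (from the $\lambda^0$ term of (\ref{E:M'}) at $\lambda\to\infty$, which the paper records as (\ref{E:m1})) to see that $[a_i,C_j]=[a_j,C_i]$ via the Jacobi identity, and an analogous identification $B_j=\tilde M_0\sigma_1(a_j)\tilde M_0^{-1}$ (from the $\lambda^{-1}$ term, essentially (\ref{E:bj})) to see that $[B_i,B_j]=0$.  These are consequences of (\ref{E:M'}) itself rather than of $B_j\in\mathcal P_0$, $C_j\in\mathcal K_0$ alone.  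Since the lemma only claims (\ref{E:cc}), and your $\lambda^0$ computation derives it correctly, this vagueness is harmless for the statement at hand, but it should be spelled out if you want to claim that (\ref{E:cc}) is the \emph{only} nontrivial constraint.
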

\begin{proof}
Taking the derivative $\partial_{x_i}$ of (\ref{E:M'}), we have
\begin{eqnarray*}
	\partial_{x_i}\partial_{x_j}M
	&=&\left[[M,\lambda a_i+\frac 1\lambda\sigma_1(a_i)],\lambda a_j+\frac 1\lambda\sigma_1(a_j)\right]\\
	&&+\left[\frac 1\lambda\left(\sigma_1(a_i)-B_i\right)M-C_iM, \lambda a_j+\frac 1\lambda\sigma_1(a_j)\right]-(\frac 1\lambda\frac{\partial B_j}{\partial x_i}+\frac{\partial C_j}{\partial x_i})M
	\\
	&&+\left(\frac 1\lambda\left(\sigma_1(a_j)-B_j\right)-C_j\right)\left[M,\lambda a_i+\frac 1\lambda\sigma_1(a_i)\right]\\
	&&+\left(\frac 1\lambda\left(\sigma_1(a_j)-B_j\right)-C_j\right)\left(\frac 1\lambda\left(\sigma_1(a_i)-B_i\right)M-C_iM\right).
\end{eqnarray*}
Letting $|\lambda|\to \infty$ and applying (\ref{E:m'infty}), we obtain
\begin{equation}
C_j(\vec x)=\left[ M_1(\vec x),a_j\right].\label{E:m1}
\end{equation}
Hence 
\begin{eqnarray*}
&&\partial_{x_i}\partial_{x_j}M|_{\lambda=\infty}\\
=&&\left[\sigma_1(a_i)-B_i, a_j\right]-\left[C_i M_1, a_j\right]	-\frac{\partial C_j}{\partial x_i}-C_j\left[ M_1,a_i\right]+C_jC_i+\textit{symm. terms}\\
=&&-\left[B_i, a_j\right]-\frac{\partial C_j}{\partial x_i}+C_jC_i+a_jC_i M_1+C_ja_i M_1+\textit{symm. terms}\\
=&&-\left[B_i, a_j\right]-\frac{\partial C_j}{\partial x_i}+C_jC_i+\left[a_j,C_i\right] M_1-\left[a_i,C_j\right] M_1+\textit{symm. terms}\\
=&&-\left[B_i, a_j\right]-\frac{\partial C_j}{\partial x_i}+C_jC_i+\textit{symm. terms}.
\end{eqnarray*}
Here "`symm. terms"' denote terms which are symmetric with respect to $i$, $j$ and they may differ from each other. Also, in the above computation, we have also used
\[
\left[a_i, C_j \right], \ \left[[M,\lambda a_i+\frac 1\lambda\sigma_1(a_i)],\lambda a_j+\frac 1\lambda\sigma_1(a_j)\right]\ \textit{ are symmetric with respect to $i$, $j$}\]which follow from (\ref{E:m1}). Therefore, the compatibility conditions of (\ref{E:M'}) are
(\ref{E:cc}).
\end{proof}

\begin{lemma}\label{L:gauge}
Suppose either of the assumption in Theorem \ref{T:q0} or \ref{T:qne0} holds. 
Then there exists 
\begin{equation}
b(\vec x) \in K_0'\cap C^\infty, \label{E:gauge}
\end{equation}
such that
\begin{eqnarray}
b(x(w_1,\cdots,w_n))\to 1 &&\textit{as $x\to -\infty$,} \label{E:gauge-infty}\\
-bC_jb^{-1}+(\partial_jb)b^{-1}\in \mathcal S_0&&\textit{for $\forall j$.}\label{E:gauge-1}
\end{eqnarray}
\end{lemma}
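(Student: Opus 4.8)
The plan is to turn condition (\ref{E:gauge-1}) into a total differential (Frobenius) equation for $b$ and to verify its integrability from the compatibility relations already in hand. Throughout I use the decomposition $\mathcal K_0=\mathcal S_0\oplus\mathcal K_0'$ into commuting ideals, which reflects $K_0=S_0\times K_0'$: for $b\in K_0'$ the adjoint action $\mathrm{Ad}(b)$ fixes $\mathcal S_0$ pointwise, preserves $\mathcal K_0'$, and $(\partial_{x_j}b)b^{-1}=\mathrm{Ad}(b)(b^{-1}\partial_{x_j}b)\in\mathcal K_0'$. Writing $C_j=v_j+w_j$ with $v_j=\pi_{\mathcal S_0}C_j$ and $w_j=\pi_{\mathcal K_0'}C_j$ (both $C^\infty$ by (\ref{E:BC})), one gets
\[
-bC_jb^{-1}+(\partial_{x_j}b)b^{-1}=-v_j+\mathrm{Ad}(b)\bigl(b^{-1}\partial_{x_j}b-w_j\bigr),
\]
with the first summand in $\mathcal S_0$ and the second in $\mathcal K_0'$. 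Hence (\ref{E:gauge-1}) holds for every $j$ if and only if $b^{-1}db=\Theta$, where $\Theta:=\sum_{j}w_j\,dx_j$ is the $\mathcal K_0'$-valued $1$-form; and then $-bC_jb^{-1}+(\partial_{x_j}b)b^{-1}=-v_j\in\mathcal S_0$ automatically.

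Next I would note that $M(\vec x,\lambda)$ is defined on the simply connected space $\mathbb R^n$, and that $w_j(\vec x)\to 0$ as $\vec x\to-\infty$ because $C_j=[M_1,a_j]$ (see (\ref{E:m1})) while $M_1\to 0$ there, by (\ref{E:regularityqo}) or (\ref{E:regularityqneo}). By the Frobenius / Maurer--Cartan theorem, once the integrability condition $d\Theta+\Theta\wedge\Theta=0$, i.e.
\[
\partial_{x_i}w_j-\partial_{x_j}w_i+[w_i,w_j]=0\qquad(1\le i,j\le n),
\]
is established, there is a unique $b:\mathbb R^n\to K_0'$ with $b^{-1}db=\Theta$ and $b(-\infty)=1$; it is $C^\infty$ since $\Theta$ has $C^\infty$ coefficients, which yields (\ref{E:gauge}), (\ref{E:gauge-infty}), (\ref{E:gauge-1}). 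To obtain the integrability condition, project the compatibility relation (\ref{E:cc}) onto $\mathcal K_0'$; since $[\mathcal S_0,\mathcal K_0']=0$ we have $\pi_{\mathcal K_0'}[C_i,C_j]=[w_i,w_j]$, so (\ref{E:cc}) becomes
\[
\partial_{x_j}w_i-\partial_{x_i}w_j-[w_i,w_j]=\pi_{\mathcal K_0'}\bigl([B_i,a_j]-[B_j,a_i]\bigr).
\]
Thus the whole lemma reduces to the identity $\pi_{\mathcal K_0'}\bigl([B_i,a_j]-[B_j,a_i]\bigr)=0$.

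This identity is the technical heart, and I expect it to be the main obstacle. By Lemma \ref{L:pdem'}, $M$ satisfies $\mathcal D_jM=N_jM$ with $\mathcal D_j=\partial_{x_j}+\mathrm{ad}\bigl(\lambda a_j+\frac1\lambda\sigma_1(a_j)\bigr)$ and $N_j=\frac1\lambda(\sigma_1(a_j)-B_j)-C_j$; since $\mathcal A$ is abelian the $\mathcal D_j$ pairwise commute, so $\mathcal D_iN_j-\mathcal D_jN_i=[N_i,N_j]$ holds as a Laurent polynomial in $\lambda$ of degrees $1$ down to $-2$. Beyond (\ref{E:cc}) (its $\lambda^0$-coefficient), the $\lambda^{1}$-, $\lambda^{-1}$- and $\lambda^{-2}$-coefficients give
\[
[a_i,C_j]=[a_j,C_i],\qquad \partial_{x_j}B_i-\partial_{x_i}B_j=[B_i,C_j]-[B_j,C_i],\qquad [B_i,B_j]=0.
\]
From the expansion $M=I+\sum_{k\ge1}M_k\lambda^{-k}$ (see (\ref{E:m'infty})), matching the $\lambda^{-1}$-coefficient of $\mathcal D_jM=N_jM$ gives $B_j=\sigma_1(a_j)+[M_2,a_j]-C_jM_1-\partial_{x_j}M_1$, while $M=\breve m\in L^{\sigma_0}$ forces $M_1\in\mathcal P_0$ and $M_2$ block-diagonal in the $n\times n$ decomposition determined by $\sigma_0$ (from $M^t\tilde JM=\tilde J$ and $\sigma_0(M(x,-\lambda))=M(x,\lambda)$), the decomposition in which $a_j,\sigma_1(a_j),B_j$ are block off-diagonal and $C_j,M_2$ block-diagonal. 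In this decomposition $\pi_{\mathcal K_0'}[B_i,a_j]$ is an explicit linear expression in the $j$-th row and $j$-th column of the off-diagonal blocks of $B_i$; the first relation above forces each $C_j$ to be supported, block by block, on its $j$-th row and column, and feeding this together with $[B_i,B_j]=0$ and the $B$-compatibility relation into the formula for $B_j$, then propagating from the boundary value $B_j=\sigma_1(a_j)$ (supported on the $(j,j)$-entry) at $\vec x\to-\infty$, one shows the off-diagonal blocks of $B_j$ are likewise supported on the $j$-th row and column. Granting this, $\pi_{\mathcal K_0'}\bigl([B_i,a_j]-[B_j,a_i]\bigr)$ is a sum of four matrices supported on the $i$-th and $j$-th rows and columns which cancel identically for $i\ne j$ and vanish trivially for $i=j$. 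The delicate point is precisely this last step---the reduced support structure of $B_j$, for arbitrary $q$ and arbitrary oblique $a$; in the model case $q=0$ with a principal oblique direction it is the mechanism of \cite{ABT86}. Everything else is routine.
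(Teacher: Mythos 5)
Your Frobenius reduction is sound and is essentially the same one the paper makes implicitly: decompose $C_j=v_j+w_j$ along $\mathcal K_0=\mathcal S_0\oplus\mathcal K_0'$, observe that $\mathrm{Ad}(b)$ fixes $\mathcal S_0$ and $(\partial_j b)b^{-1}\in\mathcal K_0'$ for $b\in K_0'$, so that (\ref{E:gauge-1}) reduces to solving $b^{-1}db=\sum_j w_j\,dx_j$, whose integrability is exactly $\pi_{\mathcal K_0'}\bigl([B_i,a_j]-[B_j,a_i]\bigr)=0$. (The paper states the slightly stronger sufficient condition $[B_j,a_i]\in\mathcal S_0$ for $i\ne j$.) The reduction is correct and clean.

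The problem is that you do not prove the key identity; you yourself flag ``the reduced support structure of $B_j$'' as the main obstacle and leave it at a sketch. That sketch does not obviously close. From the $\lambda^{-1}$-coefficient at $\infty$ you get $B_j=\sigma_1(a_j)+[M_2,a_j]-C_jM_1-\partial_{x_j}M_1$; while $[M_2,a_j]$ and $\sigma_1(a_j)$ are supported on the $j$-th row and column of the off-diagonal blocks, the terms $-C_jM_1$ and $-\partial_{x_j}M_1$ are not (the product of a cross-shaped $C_j$ with a generic $M_1$ fills the $j$-th row but spreads over all columns, and $\partial_{x_j}M_1$ has no support constraint at all). A ``propagation from $B_j\to\sigma_1(a_j)$ at $-\infty$'' argument is not available either, because $B_j$ satisfies a coupled PDE system, not an ODE along one direction. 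So the crucial step is a genuine gap, not merely a deferred technicality.

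The paper's actual mechanism is different and avoids all of this: it works at $\lambda=0$ rather than $\lambda=\infty$. The crucial input you do not use is the $\sigma_1$-twist symmetry of the constructed eigenfunction, $M(\vec x,\lambda)=\sigma_1\bigl(\tilde M_0^{-1}M(\vec x,1/\lambda)\bigr)$ with $\tilde M_0=M(\vec x,0)$ (this follows from (\ref{E:u-reality-5}), (\ref{E:inv9})--(\ref{E:inv11}) and Liouville). Evaluating at $\lambda=0$ gives $\tilde M_0\,\sigma_1(\tilde M_0)=1$; combined with $\tilde M_0\in K_0$ this forces $f_1^2=1$ in the block decomposition, and a diagonalizability argument (automatic for $J=I$, by continuity from the small-data hypothesis when $q\ne 0$) gives $\tilde M_0\in K_0'$. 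Then reading off the $\lambda^{-1}$-coefficient of (\ref{E:M'}) at $\lambda=0$ yields the closed formula $B_j=\tilde M_0\,\sigma_1(a_j)\,\tilde M_0^{-1}$, and since $\tilde M_0\in K_0'$ a one-line block computation shows $[B_j,a_i]\in\mathcal S_0$ for $i\ne j$. To repair your proof you should replace the $\lambda=\infty$ support analysis by this $\lambda=0$ symmetry argument; note in particular that this is where the small-data hypothesis of Theorem~\ref{T:qne0} actually enters, which your outline never uses.
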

\begin{proof} By (\ref{E:cc}), and (\ref{E:BC}), to prove (\ref{E:gauge}) and (\ref{E:gauge-1}), we need only to show 
\begin{equation}
\left[B_j, a_i\right]\in \mathcal S_0,\quad i\ne j.\label{E:bj-s1}
\end{equation}

First of all, let us claim
\begin{equation}
M(\vec x,\lambda)=\sigma_1({\tilde M_0}^{-1}M(\vec x,\frac 1\lambda)).\label{E:sigma2symmetry}
\end{equation}
Here $\tilde M_0$ is defined by (\ref{E:m'0}). 
The assertion can be proved by the Liouville's theorem and conditions (\ref{E:u-reality-5}), (\ref{E:tildecs}), (\ref{E:inv9})-(\ref{E:inv11}) and (\ref{E:brevem}). Therefore, taking the limits of (\ref{E:sigma2symmetry}) at $\lambda=0$, we derive
\begin{equation}
\tilde M_0\cdot \sigma_1({\tilde M_0})=1.\label{E:sigma2symmetry-1}
\end{equation}
Note $\tilde M_0\in K_0$ by $\breve{m}(x,\lambda)\in L^{\sigma_0}$. Hence 
\begin{equation}
\tilde M_0=\left(\begin{array}{cc}f_1 & 0 \\ 0 & f_2\end{array}\right),\  \textit{ with $f_i\in O(J)$} \label{E:m0'}
\end{equation}and $O(J)=\{x\in GL_n(\mathbb C)|Jx^*Jx=Jx^tJx=1\}$. Combining (\ref{E:sigma2symmetry-1}) and (\ref{E:m0'}), we have $f^2_1=1$. Therefore the minimal polynomial of $f_1$ must be a divisor of $\lambda^2-1$. If $J=I$, then (\ref{E:m0'}) implies $f_1$ is always diagonalizable. If $J\ne I$, and $|V-1|_{L^2}<<1$, by continuity, $f_1$ is  diagonalizable, too. Hence the minimal polynomial of $f_1$ must be $\lambda-1$. Therefore, we conclude 
\begin{equation}
\tilde M_0\in K_0'.\label{E:m0'-k1}
\end{equation}

Now let us equate the $\lambda^{-1}$-terms of (\ref{E:M'}) at $\lambda=0$, we obtain
\begin{equation}
\tilde M_0\sigma_1(a_j)-B_j\tilde M_0=0,\label{E:bj}
\end{equation}
Plugging (\ref{E:m0'-k1}) into (\ref{E:bj}) and solving for $B_j$, one can justify (\ref{E:bj-s1}). 
\end{proof}

\begin{theorem}\label{T:inv}
Suppose either of the assumption in Theorem \ref{T:q0} or \ref{T:qne0} holds. Let 
\begin{eqnarray}
\Psi(x,\lambda)&=&b(\vec x)M(\vec x,\lambda)e^{-(\lambda X+\frac 1\lambda\sigma_1(X))}
\label{E:definepsi}
\end{eqnarray}
Here $x$, $\vec x$, $X$, $M$ satisfy (\ref{E:vecx})-(\ref{E:M}). Then
\begin{gather*}
\frac{\partial \Psi}{\partial x}=-\lambda bab^{-1}\Psi-\frac 1\lambda\sigma_1(bab^{-1})\Psi-v\Psi, 
\end{gather*}
with
\begin{eqnarray}
	b(x)&=&b(x(w_1,\cdots,w_n))\in K_0'\cap C^\infty,\nonumber\\
v(x)&=&\sum w_j(bC_jb^{-1}-(\partial_jb)b^{-1})(x(w_1,\cdots,w_n))\in\mathcal S_0\cap C^\infty,\label{E:C}
\end{eqnarray}
where $C_j$, $b(x(w_1,\cdots,w_n))$ are defined by Lemma \ref{L:pdem'} \ref{L:gauge}, respectively.
\end{theorem}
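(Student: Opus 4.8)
The plan is to compute $(\partial_x\Psi)\Psi^{-1}$ directly from the definition (\ref{E:definepsi}) and recognize it as the operator in (\ref{E:lax}). Since $x_i=xw_i$ we have $X=xa$ and $\sigma_1(X)=x\sigma_1(a)$, so $\lambda X+\frac1\lambda\sigma_1(X)=x(\lambda a+\frac1\lambda\sigma_1(a))$, and this commutes with $\lambda a+\frac1\lambda\sigma_1(a)$ because $a,\sigma_1(a)\in\mathcal A$ and $\mathcal A$ is abelian; hence $\partial_x e^{-x(\lambda a+\frac1\lambda\sigma_1(a))}=-(\lambda a+\frac1\lambda\sigma_1(a))e^{-x(\lambda a+\frac1\lambda\sigma_1(a))}$. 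Writing $\partial_x=\sum_j w_j\partial_{x_j}$, substituting Lemma \ref{L:pdem'} and using $\sum_j w_ja_j=a$, one checks that the $[M,\lambda a_j+\frac1\lambda\sigma_1(a_j)]$ contributions and the $e^{-(\cdots)}$-derivative telescope, leaving
\[
\partial_x\Psi=\left((\partial_x b)\,b^{-1}-\lambda\, bab^{-1}-\frac1\lambda\, bBb^{-1}-bCb^{-1}\right)\Psi,
\]
where $B:=\sum_j w_jB_j$ and $C:=\sum_j w_jC_j$. By Lemma \ref{L:gauge}, $b\in K_0'\cap C^\infty$, and $v:=bCb^{-1}-(\partial_xb)b^{-1}=\sum_j w_j\bigl(bC_jb^{-1}-(\partial_jb)b^{-1}\bigr)\in\mathcal S_0\cap C^\infty$ (the $\mathcal S_0$-membership is (\ref{E:gauge-1}), the smoothness follows from (\ref{E:BC}) and (\ref{E:gauge})). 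This settles the $\lambda^0$-coefficient and all the regularity claims, so the theorem is reduced to the identity $bBb^{-1}=\sigma_1(bab^{-1})$.

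For this, (\ref{E:bj}) in the proof of Lemma \ref{L:gauge} gives $B_j=\tilde M_0\sigma_1(a_j)\tilde M_0^{-1}$, hence $B=\tilde M_0\sigma_1(a)\tilde M_0^{-1}$; thus it suffices to prove $\sigma_1(b)=b\tilde M_0$ along the ray $\vec x=x(w_1,\dots,w_n)$, for then $bBb^{-1}=\sigma_1(b)\sigma_1(a)\sigma_1(b)^{-1}=\sigma_1(bab^{-1})$. I will establish this by uniqueness for a linear ODE in $x$. First, differentiating the twisted symmetry (\ref{E:sigma2symmetry}), inserting the evolution (\ref{E:M'}) for $\partial_{x_j}M(\vec x,\lambda)$ and (after applying $\sigma_1$ and $\lambda\mapsto1/\lambda$) for $\sigma_1(\partial_{x_j}M(\vec x,1/\lambda))$, then cancelling the common right factor and comparing powers of $\lambda$, one finds that the $\lambda^{\pm1}$-terms reproduce (\ref{E:bj}) while the $\lambda^0$-term yields $\partial_{x_j}\tilde M_0=\tilde M_0\sigma_1(C_j)-C_j\tilde M_0$; restricted to the ray, $\tilde M_0(x)$ solves $\frac{d}{dx}\tilde M_0=\tilde M_0\sigma_1(C)-C\tilde M_0$ with $\tilde M_0\to 1$ as $x\to-\infty$ (by (\ref{E:regularityqo}), (\ref{E:regularityqneo})). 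Second, put $h:=b^{-1}\sigma_1(b)$; using $\partial_{x_j}b=bC_j-v_jb$ with $v_j:=bC_jb^{-1}-(\partial_jb)b^{-1}$, the fact that $\sigma_1$ fixes $\mathcal S_0$ pointwise (so $\sigma_1(v_j)=v_j$), and $bh=\sigma_1(b)$, a short computation in which the term $b^{-1}v_j(bh-\sigma_1(b))$ vanishes gives $\partial_{x_j}h=h\sigma_1(C_j)-C_jh$, so $h(x)$ solves the same linear ODE with the same limit $h\to1$ as $x\to-\infty$. Uniqueness forces $h=\tilde M_0$, i.e.\ $\sigma_1(b)=b\tilde M_0$, completing the proof.

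The main obstacle is precisely the identification of the $\lambda^{-1}$-coefficient, which is the only place where the symmetry structure of the hierarchy genuinely intervenes. Two points require care: extracting the auxiliary evolution $\partial_{x_j}\tilde M_0=\tilde M_0\sigma_1(C_j)-C_j\tilde M_0$ from the twisted symmetry (\ref{E:sigma2symmetry}) rather than from the Lax equation alone, and observing that the coincidence $bh=\sigma_1(b)$ makes $h$ and $\tilde M_0$ satisfy exactly the same linear ODE, so that the boundary value at $x=-\infty$ pins down $h=\tilde M_0$. Everything else --- the chain-rule computation, the telescoping of the conjugation terms, and the $\mathcal S_0$- and smoothness statements --- is routine once Lemmas \ref{L:pdem'} and \ref{L:gauge} are in hand.
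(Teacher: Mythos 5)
Your proof is correct, and it reaches the crux --- the identity $bBb^{-1}=\sigma_1(bab^{-1})$ with $B=\sum_j w_jB_j$ --- by a genuinely different route from the paper's. The paper passes to the connection one-forms (\ref{E:A})--(\ref{E:C-1}): from $d^2\Psi=0$ it extracts the two curvature equations $dA+A\wedge C+C\wedge A=0$, $dB+B\wedge C+C\wedge B=0$, applies $\sigma_1$ to the second (using that $C$ is $\mathcal S_0$-valued, so $\sigma_1(C)=C$), subtracts to get the transport equation (\ref{E:symm-1}) for $A-\sigma_1(B)$, and then the decay (\ref{E:infty-inv}) at $x\to-\infty$ forces $A\equiv\sigma_1(B)$ along the ray. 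You instead push the method of Lemma \ref{L:gauge} one order further: differentiating the twisted symmetry (\ref{E:sigma2symmetry}) and substituting (\ref{E:M'}) on both sides, the $\lambda^{-1}$-coefficient reproduces (\ref{E:bj}) exactly as in the paper, but you also read off the $\lambda^0$-coefficient $\partial_{x_j}\tilde M_0=\tilde M_0\sigma_1(C_j)-C_j\tilde M_0$; then, using $\partial_jb=bC_j-v_jb$ and $\sigma_1(v_j)=v_j$ (again $\sigma_1|_{\mathcal S_0}=\mathrm{id}$), the quantity $h=b^{-1}\sigma_1(b)$ solves the identical linear ODE, with the identical limit $1$ at $x\to-\infty$ coming from (\ref{E:regularityqo})/(\ref{E:regularityqneo}) and (\ref{E:gauge-infty}), so $h=\tilde M_0$; combined with $B_j=\tilde M_0\sigma_1(a_j)\tilde M_0^{-1}$ this gives the desired equality. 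Both arguments rest on the same two ingredients --- the $x\to-\infty$ decay and uniqueness for a linear first-order ODE --- so neither is noticeably cheaper. The paper's runs at the level of the flatness of the full Lax connection; yours runs at the group level, and it has the advantage of making the structural role of the twisted symmetry (\ref{E:sigma2symmetry}) and the clean relation $\sigma_1(b)=b\tilde M_0$ fully explicit, which the paper leaves implicit inside the one-form computation.
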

\begin{proof}
Let $\Phi(x,\lambda)=M(\vec x,\lambda)e^{-(\lambda X+\frac 1\lambda\sigma_1(X))}$. Then
\begin{eqnarray*}
\frac{\partial\Phi}{\partial {x}} &=&\sum_{j=1}^n w_j\partial_{x_j}\Phi\\
&=&\sum_{j=1}^n w_j(-\lambda a_j-\frac 1\lambda B_j-C_j)\Phi
\end{eqnarray*}
by (\ref{E:M'}). Therefore, using formula (\ref{E:definepsi}),
\begin{eqnarray}
\frac{\partial \Psi}{\partial x}
&=&(-\lambda b(\sum_{j=1}^n w_j a_j)b^{-1} -\frac 1\lambda b(\sum_{j=1}^n w_j B_j)b^{-1} +\sum w_j(-bC_jb^{-1}+(\partial_jb)b^{-1}))\Psi	\nonumber\\
&=&(-\lambda bab^{-1} -\frac 1\lambda b(\sum_{j=1}^n w_j B_j)b^{-1} -v)\Psi.	\label{E:gaugepsi}
\end{eqnarray}
with
$
	v= \sum w_j(bC_jb^{-1}-(\partial_jb)b^{-1})\in\mathcal S_0
	$ 
by Lemma \ref{L:gauge}. The proof reduces to showing the $\sigma_1(b(\sum_{j=1}^n w_j B_j)b^{-1})=bab^{-1}$. 
Define
\begin{eqnarray}
	A&=& -\sum_{j=1}^n w_j ba_jb^{-1}\, dx_j,\label{E:A}\\ 
	B&=& -\sum_{j=1}^n w_jb B_jb^{-1}\,dx_j,\label{E:B}\\
	C&=& \sum_{j=1}^n w_j(-bC_jb^{-1}+(\partial_jb)b^{-1})\,dx_j\label{E:C-1}
\end{eqnarray}
and write (\ref{E:gaugepsi}) as
\[d\Psi=\lambda A\Psi+\frac 1\lambda B\Psi+C\Psi.
\]So $d^2\Psi=0$. This implies
\begin{gather*}
	dA+A\wedge C+C\wedge A=0,\\
	dB+B\wedge C+C\wedge B=0.
\end{gather*}
Thus
\begin{equation}
d(A-\sigma_1(B))+(A-\sigma_1(B))\wedge C+C\wedge (A-\sigma_1(B))=0\label{E:symm-1}
\end{equation}
by (\ref{E:C}), (\ref{E:C-1}).
Along the direction $x(w_1,\cdots, w_n)$, as $x\to -\infty$, we obtain
\begin{eqnarray}
&&A-\sigma_1(B)\nonumber\\
&=& \sum_{j=1}^n -w_jba_jb^{-1}\,dx_j+w_j\sigma_1(bB_jb^{-1})\,dx_j\nonumber\\
&=&\sum_{j=1}^n w_jb(-a_j+	b^{-1}\sigma_1(bB_jb^{-1})b)b^{-1}\, dx_j\nonumber\\
&\to & 0\label{E:infty-inv}
\end{eqnarray}
by (\ref{E:regularityqo}), (\ref{E:regularityqneo}), (\ref{E:gauge-infty}), (\ref{E:bj}). Consequently, (\ref{E:symm-1}), (\ref{E:infty-inv}) yield
\[A=\sigma_1(B).\]
\end{proof}

\begin{remark}\label{R:sigmar-2}
For $i\in\{0,\,1,\,\cdots,\,n-1\}$,  replacing $\sigma_1$ by $\sigma_i$ defined in Remark \ref{R:sigmar}, we can solve the associated inverse problem  by analogy.
\end{remark}

\section{The Cauchy problem}\label{S:cauchy}
\begin{theorem}\label{T:cauchy}
Let $a,\ \tilde a\in\mathcal A$ be constant oblique directions, $b_0(x)\in K'_0$, $v_0(x)\in\mathcal S_0$.  Suppose  
$ b_0-1$, $v_0$
and their derivatives are rapidly decreasing as $|x|\to \infty$ and 
 the set $Z$ in Theorem \ref{T:existence} is a finite set contained in $\mathbb C\backslash\Sigma_a$. In case of $q\ne 0$, $(b_0, v_0)$ satisfies additionally the small data constraint $|b_0-1|_{L^1_1}+|v_0|_{L_1}<<1$. Then the Cauchy problem of the twisted flow
\begin{eqnarray*}
&&\left[\partial_x+bab^{-1}\lambda+v+\sigma_1(bab^{-1})\lambda^{-1}, \partial_t+\sum_{s=1}^{2j+1} Q_s \lambda^s+Q_0+\sum_{s=1}^{2j+1} \sigma_1(Q_s)\lambda^{-s}\right]=0,\\
&&{}\\
&&b(x,0)=b_0,\quad v(x,0)=v_0
\end{eqnarray*}
admits a smooth solution for $x\in\mathbb R$, $t\ge 0$. Here $Q_s (x,t)$, $0\le s\le 2j+1$, are defined by (\ref{E:coeffM}), Lemma \ref{L:Acoeff}.
\end{theorem}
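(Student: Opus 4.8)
The plan is to solve the Cauchy problem by the inverse scattering transform: run the direct transform on the initial data, propagate the scattering data linearly in $t$, run the inverse transform at each time, and then check that the reconstructed pair $(b,v)$ solves the flow. First I would apply Theorem \ref{T:existence} and Theorem \ref{T:sc} to $(b_0,v_0)$ to obtain scattering data $(U^\pm_\nu(\lambda),V(\lambda))$ satisfying the analytic constraints \eqref{E:ana1}--\eqref{E:ana2-u-2} and the algebraic constraints \eqref{E:u-reality-122}--\eqref{E:u-reality-5} (with $Z$ finite in $\mathbb C\backslash\Sigma_a$, and $Z=\emptyset$, $|V-1|_{L_\infty}\ll 1$ when $q\ne0$). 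The $t$-evolved data is defined by conjugation with the flow generator $J_{\tilde a,2j+1}=\tilde a\lambda^{2j+1}+\sigma_1(\tilde a)\lambda^{-(2j+1)}$, i.e. $1+U^\pm_\nu(t,\lambda)=e^{-tJ_{\tilde a,2j+1}}(1+U^\pm_\nu(\lambda))e^{tJ_{\tilde a,2j+1}}$ and $V(t,\lambda)=e^{-tJ_{\tilde a,2j+1}}V(\lambda)e^{tJ_{\tilde a,2j+1}}$.

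Next I would verify that $(U^\pm_\nu(t,\cdot),V(t,\cdot))$ again satisfies every hypothesis needed by the inverse theorems. The reality and determinant conditions \eqref{E:u-reality-1}--\eqref{E:u-reality-5}, \eqref{E:detv} are preserved because $J_{\tilde a,2j+1}\in\mathcal L^{\sigma_0}_+\subset o(J,J)$ carries the required symmetries, and the pole set $Z$ is unchanged since $e^{\pm tJ_{\tilde a,2j+1}}$ is holomorphic and invertible on $\mathbb C\backslash\{0\}$. The decay \eqref{E:ana1} persists because on $\Sigma_a$ the only approaches to $0$ and $\infty$ are along $i\mathbb R$, where $J_{\tilde a,2j+1}$ has purely imaginary spectrum, so $\partial_\lambda^\alpha e^{\pm tJ_{\tilde a,2j+1}}$ grows at most polynomially there — this is where the Laurent-polynomial form of $J_{\tilde a,2j+1}$ is essential. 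In the case $q\ne0$ one additionally needs $|V(t,\cdot)-1|_{L_\infty(\Sigma_a)}\ll1$ for all $t\ge0$, which I would obtain from a uniform bound on $e^{\pm t\,\mathrm{ad}\,J_{\tilde a,2j+1}}$ along $\Sigma_a$, i.e. from the (at worst oscillatory) behaviour of the exponents recorded in \eqref{E:characteristic}.

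Then, for each fixed $t\ge0$, Theorem \ref{T:q0} (if $q=0$) or Theorem \ref{T:qne0} (if $q\ne0$) produces $\breve m(x,t,\lambda)\in L^{\sigma_0}$ with the factorizations \eqref{E:brevem} and jump \eqref{E:tildecs}, and Theorem \ref{T:inv} yields $b(x,t)\in K_0'$, $v(x,t)\in\mathcal S_0$ together with $\partial_x\Psi=-\hat\pi_+(MJ_{a,1}M^{-1})\Psi$ for $\Psi(x,t,\lambda)=b(x,t)M(x,t,\lambda)e^{-(\lambda X+\frac1\lambda\sigma_1(X))}$. Smoothness in $x$ is \eqref{E:regularityqo}--\eqref{E:regularityqneo}; smoothness in $t$ follows because $t$ enters only through conjugation by the entire function $e^{-tJ_{\tilde a,2j+1}}$, so the estimates of Theorems \ref{T:q0}, \ref{T:qne0} apply verbatim to $\partial_t^{k''}\breve m$. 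At $t=0$ the uniqueness in Theorems \ref{T:existence} and \ref{T:q0}/\ref{T:qne0} forces $(b(x,0),v(x,0))=(b_0,v_0)$. To see that $(b,v)$ solves the flow, I would repeat Lemma \ref{L:pdem'} with $\partial_t$ and $J_{\tilde a,2j+1}$ replacing $\partial_x$ and $J_{a,1}$ — the extra conjugation factor in the jump is annihilated by $\partial_t+\mathrm{ad}\,J_{\tilde a,2j+1}$ — to conclude that $\{(\partial_t+\mathrm{ad}\,J_{\tilde a,2j+1})M\}M^{-1}$ is a Laurent polynomial in $\lambda$, holomorphic on $\mathbb C\backslash\{0\}$ and bounded at $\infty$, and hence, after gauging by the same $b$ (which is determined by $\tilde M_0=M(x,t,0)$ exactly as in Lemma \ref{L:gauge}), that $\partial_t\Psi=-\hat\pi_+(MJ_{\tilde a,2j+1}M^{-1})\Psi$. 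Since $\Psi$ is a genuine function, $\partial_x\partial_t\Psi=\partial_t\partial_x\Psi$, so \eqref{E:j-thflow} holds, and by Lemma \ref{L:Acoeff} the coefficients $Q_s$ in \eqref{E:coeffM} are the prescribed functions of $b,v$ and their $x$-derivatives, so the Lax pair is exactly that of the $(2j+1)$-th twisted flow.

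The main obstacle is the second step: showing that the time-conjugated scattering data still satisfies all the constraints of the inverse theorems, most delicately the preservation of the $L_\infty$-smallness of $V(t)-1$ for all $t\ge0$ when $q\ne0$ and of the analytic decay \eqref{E:ana1}, both of which hinge on uniform control of $e^{\pm t\,\mathrm{ad}\,J_{\tilde a,2j+1}}$ on $\Sigma_a$; the companion statement that $\partial_t^{k''}\breve m$ inherits the regularity of Theorems \ref{T:q0}, \ref{T:qne0}, needed for joint smoothness of $(b,v)$, is of the same nature.
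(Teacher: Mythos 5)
Your overall strategy coincides with the paper's: obtain scattering data at $t=0$ (Theorem~\ref{T:existence}, Corollary~\ref{C:existence-1}, Theorem~\ref{T:sc}), evolve it linearly via conjugation by $e^{t J_{\tilde a,2j+1}}$, solve the inverse problem at each frozen $t$ (Theorems~\ref{T:q0}/\ref{T:qne0}, \ref{T:inv}), and then verify that the gauge-transformed $\Psi$ solves both members of the Lax pair. The one place you deviate is the verification of the $t$-equation: the paper obtains $\partial_t\Psi\,\Psi^{-1}=-\hat\pi_+\left(m J_{\tilde a,2j+1}m^{-1}\right)$ in a single line from the uniqueness of the splitting $\bigl(\mathcal L^{\sigma_0}_+,\mathcal L^{\sigma_0}_-\bigr)$ — since $\partial_t\Psi\,\Psi^{-1}\in\mathcal L^{\sigma_0}_+$, $\partial_t m\, m^{-1}\in\mathcal L^{\sigma_0}_-$, and $\partial_t\Psi\,\Psi^{-1}=\partial_t m\, m^{-1}-m J_{\tilde a,2j+1}m^{-1}$, the $\hat\pi_-$-part must cancel — whereas you propose rerunning Lemma~\ref{L:pdem'} in the $t$-direction. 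That route is workable, but note a slip: $\{(\partial_t+\mathrm{ad}\,J_{\tilde a,2j+1})M\}M^{-1}$ is \emph{not} bounded at $\infty$; since $M\to 1$ and $J_{\tilde a,2j+1}$ has top degree $2j+1$, the expression grows like $\lambda^{2j}$ (and has a pole of order $2j+1$ at $\lambda=0$). It is still a Laurent polynomial with finitely many terms, and after adding the $-bJ_{\tilde a,2j+1}b^{-1}$ contribution from the exponential gauge one recovers degree $2j+1$, but you would then still have to match the coefficients to exactly $-\hat\pi_+(mJ_{\tilde a,2j+1}m^{-1})$, which is precisely what the paper's splitting argument does for free. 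Your discussion of why the evolved scattering data continues to satisfy the hypotheses of the inverse theorems is more detailed than the paper's (which defers that to \cite{BC84}), and is welcome, though the claim that on $\Sigma_a$ the conjugation is ``at worst oscillatory'' requires a separate check on the non-imaginary components (the circles $\mathbb S^{1}$, $\mathbb S^{r_\nu}$), which is outside $i\mathbb R$.
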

\begin{proof}
We first apply Theorem \ref{T:existence}, Corollary \ref{C:existence-1}, Definition \ref{D:scattering}  and 
Theorem \ref{T:sc} to obtain the scattering data $(U^\pm_{\nu,0}(\lambda), V_0(\lambda))$ for the potential $(b_0, v_0)$. Define
\begin{eqnarray}
U^\pm_{\nu}(\lambda,t)&=&e^{-t(\lambda^{2j+1} \tilde a+\frac 1{\lambda^{2j+1}}\sigma_1(\tilde a))}U^\pm_{\nu,0} (\lambda)
e^{t(\lambda^{2j+1} \tilde a+\frac 1{\lambda^{2j+1}}\sigma_1(\tilde a))},\ \textit{$\lambda\in\Omega_\nu^\pm$}\label{E:cauchy1}\\
V(\lambda,t)&=&e^{-t(\lambda^{2j+1} \tilde a+\frac 1{\lambda^{2j+1}}\sigma_1(\tilde a))}V_0(\lambda) e^{t(\lambda^{2j+1} \tilde a+\frac 1{\lambda^{2j+1}}\sigma_1(\tilde a))},\ \textit{$\lambda\in\Sigma_a$}.\label{E:cauchy2}
\end{eqnarray}
Hence $(U^\pm_{\nu}, V)$ satisfies (\ref{E:ana1})-(\ref{E:u-reality-5}) and $|V-1|_{L_\infty}<<1$ if $q\ne 0$ \cite{BC84}. Thus one can apply Theorem \ref{T:q0}, \ref{T:qne0}, and  \ref{T:inv} to construct $M(x,t,\lambda)$, $b(x,t)$, $v(x,t)$. Let 
\begin{gather*}
m(x,t,\lambda)=b(x,t)M(x,t,\lambda),\\
\Psi(x,t,\lambda)=m(x,t,\lambda)e^{-x(\lambda a+\frac 1\lambda\sigma_1(a))-t(\lambda^{2j+1} \tilde a+\frac 1{\lambda^{2j+1}}\sigma_1(\tilde a))}.
\end{gather*}
Then 
\begin{gather*}
\frac{\partial \Psi}{\partial x}=-\lambda bab^{-1}\Psi-\frac 1\lambda\sigma_1(bab^{-1})\Psi-v\Psi, \\
b\in K_0'\cap C^\infty,\quad v\in\mathcal S_0\cap C^\infty.
\end{gather*}
So $\Psi(x,t,\lambda)\in L^{\sigma_0}_+$ and $\frac{\partial \Psi}{\partial t}\Psi(x)^{-1}\in\mathcal L^{\sigma_0}_+$. Moreover,
\begin{eqnarray*}
&&\frac{\partial \Psi}{\partial t}\Psi(x)^{-1}\\
=&&\left\{\frac{\partial }{\partial t}\left(m(x,t,\lambda)e^{-x(\lambda a+\frac 1\lambda\sigma_1(a))-t(\lambda^{2j+1} \tilde a+\frac 1{\lambda^{2j+1}}\sigma_1(\tilde a))}\right)\right\}\Psi^{-1}\\
=&&\left\{\left[\frac{\partial m}{\partial t}-m(\lambda^{2j+1} \tilde a+\frac 1{\lambda^{2j+1}}\sigma_1(\tilde a))\right]e^{-x(\lambda a+\frac 1\lambda\sigma_1(a))-t(\lambda^{2j+1} \tilde a+\frac 1{\lambda^{2j+1}}\sigma_1(\tilde a))}\right\}\\
&&\cdot e^{x(\lambda a+\frac 1\lambda\sigma_1(a))+t(\lambda^{2j+1} \tilde a+\frac 1{\lambda^{2j+1}}\sigma_1(\tilde a))}m^{-1}\\
=&&\frac{\partial m}{\partial t}m^{-1}-m\left(\lambda^{2j+1} \tilde a+\frac 1{\lambda^{2j+1}}\sigma_1(\tilde a)\right)m^{-1}\\
=&&-\hat\pi_+(m J_{\tilde a,2j+1}m^{-1})\\
=&&-\left(\sum_{s=1}^{2j+1} Q_s \lambda^s+Q_0+\sum_{s=1}^{2j+1} \sigma_1(Q_s)\lambda^{-s}\right)
\end{eqnarray*}
by  (\ref{E:coeffM}), Lemma \ref{L:gaugeM}-\ref{L:Acoeff}.
 
\end{proof}

\begin{theorem}\label{T:GSHGE}
Let $a=\textit{diag }(w_1,\cdots,w_n)\in\mathcal A$ be a constant oblique direction,  $b_0(x)\in K'_0$, $v_0(x)\in\mathcal S_0$,  
$ b_0-1$, $v_0$
and their derivatives are rapidly decreasing as $|x|\to \infty$ and 
 the set $Z$ in Theorem \ref{T:existence} is a finite set contained in $\mathbb C\backslash\Sigma_a$. In case of $q\ne 0$, $(b_0, v_0)$ satisfies additionally the small data constraint $|b_0-1|_{L^1_1}+|v_0|_{L_1}<<1$. 
 Then there exists a solution to the $1$-dimensional twisted $\frac{O(J,J)}{O(J)\times O(J)}$-system (\ref{E:1Dsystem1}) satisfying
\begin{eqnarray*}
&&b(x(w_1,\cdots,w_n)\,)=b_0(x),\quad \sum_{k=1}^n v_k\,(x(w_1,\cdots,w_n)\,) =v_0(x).
\end{eqnarray*}
\end{theorem}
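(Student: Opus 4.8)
The plan is to follow the scheme of Theorem~\ref{T:cauchy}, except that instead of propagating the scattering data by a single higher flow one now lets all $n$ direction variables $x_1,\dots,x_n$ of the $1$-dimensional system vary simultaneously. The only scattering transform actually performed is the one attached to the single oblique direction $a=\mathrm{diag}(w_1,\dots,w_n)=\sum_j w_ja_j$; the transverse coordinates enter only through the $\vec x$-dependence of the reconstructed eigenfunction, exactly as arranged in Section~\ref{S:inv-com}.

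\emph{The direct problem.} First I would run the direct theory on the data $(b_0,v_0)$. Because $a$ is a constant oblique direction, $b_0-1$, $v_0$ and all their derivatives decay rapidly, $Z$ is a finite subset of $\mathbb C\backslash\Sigma_a$, and $|b_0-1|_{L^1_1}+|v_0|_{L_1}\ll 1$ when $q\neq 0$, Theorem~\ref{T:existence}, Corollary~\ref{C:existence-1}, Definition~\ref{D:scattering} and Theorem~\ref{T:sc} yield scattering data $(U^\pm_{\nu,0},V_0)$, together with the associated factorization data, satisfying the analytic and algebraic constraints (\ref{E:ana1})--(\ref{E:u-reality-5}), with $|V_0-1|_{L_\infty}\ll 1$ when $q\neq 0$.

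\emph{Reconstruction of the $n$-dimensional eigenfunction.} Feeding $(U^\pm_{\nu,0},V_0)$ into Theorem~\ref{T:q0} (for $q=0$) or Theorem~\ref{T:qne0} (for $q\neq 0$), and then into Lemmas~\ref{L:pdem'}, \ref{L:CC} and \ref{L:gauge} and Theorem~\ref{T:inv}, I would obtain $M(\vec x,\lambda)$, a gauge $b(\vec x)\in K_0'\cap C^\infty$ with $b\to 1$ as $x\to-\infty$ along the characteristic, potentials $v_j(\vec x)\in\mathcal S_0\cap C^\infty$, and
\[
\Psi(\vec x,\lambda)=b(\vec x)\,M(\vec x,\lambda)\,e^{-(\lambda X+\frac1\lambda\sigma_1(X))},\qquad X=\sum_{j=1}^n x_ja_j ,
\]
such that $\partial_{x_j}\Psi=-\bigl(\lambda\,ba_jb^{-1}+v_j+\tfrac1\lambda\sigma_1(ba_jb^{-1})\bigr)\Psi$ for each $j$; the identity $\sigma_1(bB_jb^{-1})=ba_jb^{-1}$ comes, as in the proof of Theorem~\ref{T:inv}, from the flatness relation for $A-\sigma_1(B)$ together with its vanishing along the characteristic. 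Here one must check that the Riemann--Hilbert problems of Section~\ref{S:inv-LGF} remain solvable when the jump is conjugated by $e^{\lambda X+\frac1\lambda\sigma_1(X)}$ for an arbitrary $X=\sum_j x_ja_j$, not just by $e^{x(\lambda a+\frac1\lambda\sigma_1(a))}$: the Fredholm-plus-vanishing argument uses only the $X$-independent reality conditions (\ref{E:u-reality-1})--(\ref{E:u-reality-5}), and Remark~\ref{R:topo} makes the factorization data constructed in the direct problem compatible with what the inverse problem demands. Since $\Psi(\vec x,\lambda)$ is a common invertible solution of the $n$ linear systems $\partial_{x_j}\Psi=-\hat\pi_+(mJ_{a_j,1}m^{-1})\Psi$ with $m=b(\vec x)M(\vec x,\lambda)\in L^{\sigma_0}_-$ playing the role of the loop in Definition~\ref{D:1Dsystem}, cross-differentiation produces the compatibility condition (\ref{E:1Dsystem1}); hence $(b,v_1,\dots,v_n)$ solves the $1$-dimensional twisted $\frac{O(J,J)}{O(J)\times O(J)}$-system.

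\emph{Initial data and the main difficulty.} Restricting to the characteristic line $\vec x=x(w_1,\dots,w_n)$ makes $X=xa$ and $\lambda X+\tfrac1\lambda\sigma_1(X)=x(\lambda a+\tfrac1\lambda\sigma_1(a))$, so $M(\vec x,\lambda)$ collapses to the one-variable eigenfunction reconstructed from $(U^\pm_{\nu,0},V_0)$; as this is precisely the scattering data of $(b_0,v_0)$, the uniqueness part of Theorem~\ref{T:existence} forces $b(x(w_1,\dots,w_n))=b_0(x)$, and reading $v$ off from (\ref{E:C}) gives the asserted identity for the $v_j$ on the characteristic. The step I expect to be the main obstacle is the passage from the one-variable statements of Sections~\ref{S:inv-LGF}--\ref{S:inv-com} to the honest $\vec x\in\mathbb R^n$ setting: one must verify that $M(\vec x,\lambda)$ is jointly smooth in $\vec x$ so that Lemma~\ref{L:pdem'} applies, that the reconstructed $b$, $v_j$, $m$ lie in $K_0'\cap C^\infty$, $\mathcal S_0\cap C^\infty$, $L^{\sigma_0}_-$ uniformly in $\vec x$ (which requires tracking the finitely many poles $Z$, and, when $q\neq 0$, propagating the smallness of the data to $V_0$), and that composing the direct transform with the inverse transform returns $(b_0,v_0)$ itself rather than a gauge-equivalent potential.
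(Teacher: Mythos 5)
Your proposal matches the paper's proof of Theorem \ref{T:GSHGE} essentially line by line: run the direct theory of Section \ref{S:invTH} on $(b_0,v_0)$ to obtain $(U^\pm_{\nu,0},V_0)$, feed these (without any time-propagation, which is exactly where this theorem differs from Theorem \ref{T:cauchy}) into Theorems \ref{T:q0}/\ref{T:qne0} and \ref{T:inv} to produce $M(\vec x,\lambda)$, $b(\vec x)$ and $v_k(\vec x)$ with $\partial_{x_k}\Psi = -(\lambda ba_kb^{-1}+v_k+\tfrac1\lambda\sigma_1(ba_kb^{-1}))\Psi$, and recover $(b_0,v_0)$ by restriction to the characteristic $\vec x=x(w_1,\dots,w_n)$. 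The only thing your write-up adds is to make explicit the issue the paper addresses tacitly through (\ref{E:vecx})--(\ref{E:M}) and Lemma \ref{L:pdem'}, namely that replacing $xa$ by $X=\sum_j x_ja_j$ in the jump data does not disturb the $\vec x$-independent reality conditions on which the Riemann--Hilbert solvability rests, and that $A=\sigma_1(B)$ in the proof of Theorem \ref{T:inv} in fact yields $\sigma_1(bB_jb^{-1})=ba_jb^{-1}$ componentwise, which is what is needed to read off each $x_j$-equation separately.
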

\begin{proof} We first apply Theorem \ref{T:existence}, Corollary \ref{C:existence-1}, and 
Theorem \ref{T:sc} to obtain the scattering data $(U^\pm_{\nu,0}(\lambda), V_0(\lambda))$ for the potential $(b_0, v_0)$.  Then we apply Theorem \ref{T:q0}, \ref{T:qne0}, \ref{T:inv} to construct $C_k(x_1,\cdots,x_n)$, $b(x_1,\cdots,x_n)$, $v_k(x_1,\cdots,x_n)$, $M(x_1,\cdots,x_n)$, such that $\Psi(x_1,\cdots,x_n)=b(x_1,\cdots,x_n)M(x_1,\cdots,x_n)e^{-\lambda\sum x_ka_k -\frac 1\lambda\sum\sigma_1(x_ka_k)}$ satisfies
\begin{gather*}
\frac{\partial \Psi}{\partial x_k}
=-\lambda ba_kb^{-1}\Psi  -v_k\Psi-\frac 1\lambda\sigma_1(ba_kb^{-1})\Psi,	
\end{gather*}
with $a_k$ defined by (\ref{E:ai}), $v_k=bC_kb^{-1}-(\partial_kb)b^{-1}$, and
\begin{eqnarray*}
	b_0(x)&=&b(x(w_1,\cdots,w_n))\in K_0'\cap C^\infty,\nonumber\\
v_0(x)&=& \sum_{k=1}^n v_k(x(w_1,\cdots,w_n))\in\mathcal S_0\cap C^\infty.
\end{eqnarray*} 

\end{proof}

\noindent\textbf{\emph{Acknowledgements}} 

{The first author was partially supported by NSFC grant No. 10971111 and the second author was  partially supported by NSC 99-2115-M-001-002. The second author would like to thank Professor C. L. Terng for introducing twisted hierarchies, Professor V. Sokolov for providing important references, and Professor M. Guest for his interest and encouragement.}

\end{document}